\newcommand{\R}{\mathbb{R}}
\newcommand{\E}{\mathbb{E}}
\newcommand{\p}{\mathbb{P}}
\newcommand{\indep}{\rotatebox[origin=c]{90}{$\models$}}
\newcommand{\Indc}{\mathbf{1}}
\newcommand{\argmax}[1]{\underset{#1}{\arg\!\max}}
\newcommand{\event}{\mathcal{V}}
\newcommand{\Z}{\mathcal{Z}}
\newcommand{\lb}{\left(}
\newcommand{\rb}{\right)}
\newcommand{\const}{B}
\newcommand{\D}{\mathcal{D}}
\definecolor{myblue}{rgb}{.8, .8, 1}
\definecolor{mathblue}{rgb}{0.2472, 0.24, 0.6} 
\definecolor{mathred}{rgb}{0.6, 0.24, 0.442893}
\definecolor{mathyellow}{rgb}{0.6, 0.547014, 0.24}
\newcommand{\tT}{{\widetilde{T}}}
\newcommand{\tX}{{\widetilde{X}}}
\newcommand{\tY}{{\widetilde{Y}}}
\newcommand{\calA}{{\mathcal{A}}}
\newcommand{\calC}{{\mathcal{C}}}
\newcommand{\calD}{{\mathcal{D}}}
\newcommand{\calE}{{\mathcal{E}}}
\newcommand{\calF}{{\mathcal{F}}}
\newcommand{\calI}{{\mathcal{I}}}
\newcommand{\calN}{{\mathcal{N}}}
\newcommand{\calO}{{\mathcal{O}}}
\newcommand{\calS}{{\mathcal{S}}}
\newcommand{\calU}{{\mathcal{U}}}
\newcommand{\calX}{{\mathcal{X}}}
\newcommand{\calZ}{{\mathcal{Z}}}
\newcommand{\td}{\tilde}
\newtheorem{theorem}{Theorem}
\newtheorem{proposition}{Proposition}
\newtheorem{lemma}{Lemma}
\newtheorem{assumption}{Assumption}
\newtheorem{remark}{Remark}
\DeclareMathOperator{\ind}{\mathds{1}}  
\newcommand{\neqd}{\stackrel{\textnormal{d}}{\neq}}
\newcommand{\tr}{\mathrm{tr}}
\newcommand{\ca}{\mathrm{ca}}
\newcommand{\quantile}{\mathrm{Quantile}}
\renewcommand{\citet}{\cite}
\renewcommand{\c}{c}
\newcommand{\eps}{\varepsilon}
\newcommand{\Cend}{C_{\mathrm{end}}}
\newcommand{\Closs}{C_{\mathrm{loss}}}
\newcommand{\condIC}{conditionally independent censoring assumption}
\newcommand{\compIC}{completely independent censoring assumption}
\newcommand{\revise}[1]{{\color{black} #1}}
\title[Conformalized Survival Analysis]{Conformalized Survival Analysis}
\author{Emmanuel Cand\`es}
\address{Department of Mathematics and Department of Statistics, Stanford University, CA, USA.}
\email{candes@stanford.edu}
\author{Lihua Lei}
\address{Graduate School of Business, Stanford University, Stanford, CA, USA.}
\email{lihualei@stanford.edu}
\author[Emmanuel Cand\`{e}s, Lihua Lei and Zhimei Ren]{Zhimei Ren}
\address{Department of Statistics, University of Chicago, Chicago, IL, USA.}
\email{zmren@uchicago.edu}
\begin{document}
\begin{abstract}
  Existing survival analysis techniques heavily rely on strong
  modelling assumptions and are, therefore, prone to model
  misspecification errors.  In this paper, we develop an inferential
  method based on ideas from conformal prediction, which can wrap
  around any survival prediction algorithm to produce calibrated,
  covariate-dependent lower predictive bounds on survival times. In
  the Type I right-censoring setting, when the censoring times are
  completely exogenous, the lower predictive bounds have guaranteed
  coverage in finite samples without any assumptions other than that
  of operating on independent and identically distributed data
  points. Under a more general conditionally independent censoring
  assumption, the bounds satisfy a doubly robust property which states
  the following: marginal coverage is approximately guaranteed if
  either the censoring mechanism or the conditional survival function
  is estimated well. Further, we demonstrate that the lower predictive
  bounds remain valid and informative for other types of censoring.
  The validity and efficiency of our procedure are demonstrated on
  synthetic data and real COVID-19 data from the UK
  Biobank.
\end{abstract}
\keywords{Censoring; Survival time; Prediction interval;
Weighted conformal inference; Distribution boosting; Random forests.}

\section{Introduction}
\label{sec:intro}

The COVID-19 pandemic has placed extraordinary demands on health
systems \citep[e.g.,][]{ranney2020critical}. In turn, these demands
create an unavoidable need for medical resource allocation and, in
response, several groups of researchers have communicated clinical
ethics recommendations \citep[e.g.,][]{emanuel2020fair,
  vergano2020clinical}. By and large, these recommendations require a
reliable individual risk assessment for patients who test positive;
see Table 2 of \cite{emanuel2020fair}. Clearly, one risk
measure of interest might be the survival time, the time lapse between
the confirmation of COVID-19 and an event such as death or reaching a
critical state, should this ever occur.

\subsection{Survival analysis}

Survival times are not always observed due to censoring
\citep{leung1997censoring}. A main goal of survival analysis is to
infer the survival function---the probability that a patient will
survive beyond any specified time---from censored data. The
Kaplan-Meier curve \citep{kaplan1958nonparametric} produces such an
inference when the population under study is a group of patients with
certain characteristics. On the positive side, the Kaplan-Meier curve
does not make any assumption on the distribution of survival times. On
the negative side, it can only be applied to a handful of
subpopulations because it requires sufficiently many events in each
subgroup \citep{kalbfleisch2011statistical}.  More often than not, the
scientist has available multiple categorical and continuous
covariates, and it thus becomes of interest to understand
heterogeneity by studying the conditional survival function; that is,
the dependence on the available factors.  In the conditional
setting, however, distribution-free inference for the conditional
survival function gets to be challenging.  Standard approaches make
parametric or nonparametric assumptions about the distribution of the
covariates and that of the survival times conditional on covariate
values. A well-known example is of course the celebrated Cox model
which posits a proportional hazards model in which an unspecified
nonparametric base line is modified via a parametric model describing
how the hazard varies in response to explanatory covariates
\citep{cox1972regression, breslow1975analysis}. Other popular models,
such as accelerated failure time (AFT) \citep{cox1972regression,
  wei1992accelerated} and proportional odds models
\citep{murphy1997maximum, harrell2015regression}, also combine
nonparametric and parametric model specifications. 

As medical technologies produce ever larger and more complex clinical
datasets, we have witnessed a rapid development of machine learning
methods adapted to high-dimensional and heterogeneous survival data
\citep[e.g.,][]{verweij1993cross, faraggi1995neural,
  tibshirani1997lasso, gui2005penalized, hothorn2006survival,
  zhang2007adaptive, ishwaran2008random, witten2010survival,
  goeman2010l1, simon2011regularization, katzman2016deep, lao2017deep,
  wang2019machine, li2020censored}. An appealing feature of these
methods is that they typically do not make modeling
assumptions.
To quote from \cite{efron2020prediction}: ``
  Neither surface nor noise is required as input to randomForest, gbm,
  or their kin.''
The downside is that it is often challenging to
quantify the uncertainty for these methods. To be sure, blind
application of off-the-shelf uncertainty quantification tools, such as
the bootstrap \citep{efron1979bootstrap, efron1994introduction}, can
yield unreliable results since their validity 1) rests on implicit
modeling assumptions, and 2) holds only asymptotically
\citep[e.g.,][]{lei2020conformal,
  ratkovic2021estimation}. 

\subsection{Prediction intervals}

For decision-making in sensitive and uncertain environments---think of
the COVID-19 pandemic---it is preferable to produce prediction
intervals for the \emph{uncensored} survival time with guaranteed
coverage rather than point predictions.  In this regard, the use of
$(1-\alpha)$ prediction intervals is an effective way of summarizing
what can be learned from the available data; wide intervals reveal a
lack of knowledge and keep overconfidence at arm's length. Here and
below, an interval is said to be a $(1-\alpha)$ prediction interval if
it has the property that it contains the true label, here, the
survival time, at least $100(1-\alpha)$\% of the time (a formal
definition is in Section 2).  Prediction intervals have been widely
studied in statistics \citep[e.g.,][]{wilks1941determination,
  wald1943extension, aitchison1980statistical, stine1985bootstrap,
  geisser1993predictive, vovk2005algorithmic,
  krishnamoorthy2009statistical} and much research has been concerned
with the construction of covariate-dependent intervals.

Of special interest is the subject of conformal inference, a generic
procedure that can be used in conjunction with sophisticated machine
learning prediction algorithms to produce prediction intervals with
valid marginal coverage without making any distributional assumption
whatsoever \citep[e.g.,][]{saunders1999transduction, vovk2002line,
  vovk2005algorithmic, lei2014distribution,
  tibshirani2019conformal}. While coverage is only guaranteed in a
marginal sense, it has been theoretically proved and empirically observed that some conformal
prediction methods can also achieve near conditional coverage---that
is, coverage assuming a fixed value of the covariates---when some key
parameters of the underlying conditional distribution can be estimated
reasonably well \citep[e.g.,][]{sesia2020comparison,
  lei2020conformal}.

\subsection{Our contribution}

Standard conformal inference requires fully observed outcomes and is
not directly applicable to samples with censored outcomes. In this
paper, we extend conformal inference to handle right-censored outcomes
in the setting of Type-I censoring
\citep[e.g.,][]{leung1997censoring}. This setting assumes that the
censoring time is observed for every unit while the outcome is only
observed for uncensored units. In particular, we generate a
covariate-dependent lower prediction bound (LPB) on the uncensored
survival time, which can be regarded as a one-sided
$(1-\alpha)$-prediction interval. As we just argued, the LPB is a
conservative assessment of the survival time, which is particularly
desirable for high-stakes decision-making. A low LPB value suggests
either 
a high risk for the patient, or a high degree of uncertainty for
similar patients due to data scarcity. Either way, the signal to a
decision-maker is that the patient deserves some attention.

Under the \compIC\space defined below, which states that the censoring
time is independent of both the outcome and covariates, our LPB
provably yields a $(1-\alpha)$ prediction interval.  This property
holds in finite samples \emph{without any assumption other than that
  of operating on i.i.d.~samples}. Under the more general
\condIC\space introduced later, our LPB satisfies a \emph{doubly
  robust} property which states the following: marginal coverage is
approximately guaranteed if either the censoring mechanism or the
conditional survival function is estimated well. In the latter case,
the LPB even has approximately guaranteed conditional coverage.

Readers familiar with conformal inference would notice that the above
guarantees can be achieved by simply applying conformal inference to
the censored outcomes, i.e., by constructing an LPB on the censored
outcome treated as the response. This unsophisticated approach is
conservative. Instead, we will see how to provide tighter bounds and
sharper inference by applying conformal inference on a subpopulation
with large censoring times; that is, on which censored outcomes are
closer to actual outcomes. To achieve this, we shall see how to
carefully combine the selection of a subpopulation with ideas from
weighted conformal inference \citep{tibshirani2019conformal}.

Lastly, while we focus on clinical examples, it will be clear from our
exposition that our methods can be applied to other time-to-event
outcomes in a variety of other disciplines, such as industrial life
testing \citep{bain2017statistical}, sociology
\citep{allison1984event}, and economics \citep{powell1986censored,
  hong2003inference, sant2016program}.


\section{Prediction intervals for survival times}
\label{sec:setup}
\subsection{Problem setup}\label{subsec:setup}
Let $X_i, C_i, T_i$, $i = 1, \ldots, n$, be respectively the vector of covariates, the censoring time, and the survival
time of the $i$-th
unit/patient. Throughout the paper, we assume that $(X_i, C_i, T_i)$
are i.i.d.~copies of the random vector $(X,C,T)$.  We consider the
Type I right-censoring setting, where the observables for the $i$-th
unit include $X_i, C_i$, and the censored survival time $\tT_i$,
defined as the minimum of the survival and censoring time:
\[\tT_i = \min(T_i, C_i).\]
For instance, if $T_i$ measures the time lapse between the admission
into the hospital and death, and $C_i$ measures the time lapse between
the admission into the hospital and the day data analysis is
conducted, then $\tT_i = T_i$ if the $i$-th patient died before the
day of data analysis and $\tT_i = C_i$ if she survives beyond that
day.

The censoring time $C$ partially masks information from the inferential target $T$. As discussed by \cite{leung1997censoring}, it is necessary to impose constraints on the dependence structure between $T$ and $C$ to enable meaningful inference. In particular, we make the following \textbf{\condIC} \citep[e.g.,][]{kalbfleisch2011statistical}:
\begin{assumption}[conditionally independent censoring]
\begin{align}\label{eq:conditional_independent_censoring}
  T~\indep~C\mid X.
\end{align}
\end{assumption}
This assumes away any unmeasured confounder affecting both the
survival and censoring time; please see immediately below for an
example. In some cases, we also consider the \textbf{\compIC}, which is stronger in the sense that it implies the former:
\begin{assumption}[completely independent censoring]
\begin{align}\label{eq:independent_censoring}
  (T, X)~\indep~C.
\end{align}
\end{assumption}
For instance, in a randomized clinical trial, the end-of-study
censoring time $C$ is defined as the time lapse between the
recruitment and the end of the study. For single-site trials, $C$ is
often modelled as a draw from an exogenous stochastic process
\citep[e.g.,][]{carter2004application, gajewski2008predicting} and
thus obeys \eqref{eq:independent_censoring}. For multicentral trials,
$C$ is often assumed to depend on the site location only
\citep[e.g.,][]{carter2005practical, anisimov2007modelling,
  barnard2010systematic}, and thus
\eqref{eq:conditional_independent_censoring} holds as soon as the
vector of covariates includes the site of the trial. \revise{For an
  observational study such as the COVID-19 example discussed later in
  Section \ref{sec:application}, additional covariates would be
  included to make the conditionally independent censoring assumption
  plausible.}

Although \eqref{eq:conditional_independent_censoring} is a strong
assumption, it is a widely used starting point to study survival
analysis methods \citep{kalbfleisch2011statistical}. We leave the
investigation of informative censoring
\citep[e.g.,][]{lagakos1979general, wu1988estimation,
  scharfstein2002estimation} to future research. Additionally, whereas
the setting of Type I censoring appears to be restrictive, we will
show in Section \ref{subsec:beyond} that an LPB in this setting can
still be informative for other censoring types.

\subsection{Naive lower prediction bounds}\label{subsec:naive}
Our ultimate goal is to generate a covariate-dependent LPB as a
conservative assessment of the uncensored survival time $T$. Denote by
$\hat{L}(\cdot)$ a generic LPB estimated from the observed data
$(X_i, C_i, \tT_i)_{i=1}^{n}$. We say an LPB is \emph{calibrated} if
it satisfies the following coverage criterion:
\begin{align}\label{eq:marginal_criterion}
  \p\big(T\ge \hat{L}(X)\big) \ge 1-\alpha,
\end{align}
where $\alpha$ is a pre-specified level (e.g., $0.1$), and the
probability is computed over both $\hat{L}(\cdot)$ and a future unit
$(X,C,T)$ that is independent of $(X_i,C_i,T_i)_{i=1}^{n}$. 

Since $\tT\le T$, any calibrated LPB on the censored survival time
$\tT$ is also a calibrated LPB on the uncensored survival time
$T$. Consequently, a naive approach is to discard the censoring time
$C_i$'s and construct an LPB on $\tT$ directly. Since the samples
$(X_i, \tT_i)$ are i.i.d., a distribution-free calibrated LPB on $\tT$
can be obtained via standard techniques from conformal inference
\citep[e.g.,][]{vovk2005algorithmic, lei2018distribution,
  romano2019conformalized}. Our first result is somewhat negative:
indeed, it states that all distribution-free calibrated LPBs on $T$
must be LPBs on $\tT$.

\begin{theorem}
  \label{thm:distribution_free}
  Take $X\in \R^{p}$ and $C\ge 0$, $T\ge 0$. Assume that 
  $\hat{L}(\cdot)$ is a calibrated LPB on $T$ for all joint
  distributions of $(X, C, T)$ obeying the \condIC ~with $X$ being
  continuous and $(T, C)$ being continuous or discrete.
  Then for any such
  distribution,
  \[\p(\tT\ge \hat{L}(X))\ge 1 - \alpha.\]  
\end{theorem}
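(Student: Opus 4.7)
The plan is to prove this distribution-free statement by an auxiliary-distribution argument. For any given distribution $P$ in the conditionally independent censoring class, the aim is to find another distribution $Q$ in the same class such that the hypothesized validity of $\hat L$ under $Q$ forces $\p_P(\tT \ge \hat L(X)) \ge 1-\alpha$. The key leverage is that $\hat L$ is a measurable functional of the observable training data $(X_i, C_i, \tT_i)_{i=1}^{n}$ and the test covariate, so two distributions in the CI class that share the same observable joint law produce the same distribution for $\hat L(X)$; validity for one distribution can then be transferred to a statement about the other.

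First, I would record the identification gap under conditional independence: the observable law identifies the conditional law $F_T(\cdot \mid X)$ only on the essential support $[0, c^*(X))$ of $C \mid X$, where $c^*(X) := \sup\,\mathrm{supp}(C \mid X)$, leaving the tail above $c^*(X)$ completely free. This gap is where the freedom to construct an alternative $Q$ comes from, and the theorem's allowance of $(T, C)$ continuous or discrete plays a role in letting the construction place atoms at the boundary $c^*(X)$ if needed.

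Next, I would construct an explicit $Q$ in the CI class that matches $P$'s observable law while coupling the true survival time $T^Q$ to the censored time $\tT^P$ in a way that makes the validity bound for $T^Q$ transfer to a bound for $\tT^P$. The main obstacle is the direction of the coupling: any $Q$ that matches $P$'s observables and satisfies CI has $T^Q \ge \tT^P$ almost surely along the natural coupling, which is the wrong direction for transferring coverage. The proof therefore likely requires a subtler argument---either a swap-type construction that exchanges the roles of $T$ and $C$ (exploiting the symmetry of the CI class under this exchange), or a limiting family of distributions $\{Q_\eps\}_{\eps > 0}$ constructed so that the induced $\hat L$ is pushed below $\tT$ with the required probability in the limit. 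I expect this final reduction to be the most delicate step, as it is the one that really uses the continuity of $X$ (to encode auxiliary randomization) beyond its role in avoiding ties.
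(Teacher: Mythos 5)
Your proposal correctly identifies the high-level strategy---an auxiliary-distribution argument exploiting the fact that $\hat L$ depends only on the observables $(X_i,C_i,\tT_i)_{i=1}^n$---but it contains a genuine gap that the paper resolves with a tool you have not anticipated. You constrain the auxiliary distribution $Q$ to lie in the conditionally independent censoring class, and this is precisely what creates the ``wrong direction'' coupling obstacle you flag in your last paragraph. The paper does not attempt to resolve that obstacle; it bypasses it entirely.

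The paper's move is to recast the calibration requirement as a level-$\alpha$ test of the null hypothesis $T\indep C\mid X$: define $\phi = \indc{T_{n+1} < \hat L(Z_1,\dots,Z_n; X_{n+1})}$, which rejects with probability at most $\alpha$ under every conditionally independent law. It then invokes the hardness-of-conditional-independence-testing theorem of \cite{shah2020hardness} (their Theorem 2 and Remark 4): when $X$ is continuous and $(T,C)$ are continuous or discrete, any test that is level-$\alpha$ over the entire CI null is in fact level-$\alpha$ against \emph{all} joint distributions with those continuity properties, conditionally independent or not. The auxiliary distribution they then plug in is the law of $(X, C, \tT)$, which is emphatically \emph{not} in the CI class (indeed $\tT = T\wedge C$ is strongly dependent on $C$ given $X$), yet by the hardness result the test is still level-$\alpha$ there. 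Since $\tT\wedge C = T\wedge C$, the fitted $\hat L$ is numerically identical under either input, and the conclusion $\p(\tT \ge \hat L(X)) \ge 1-\alpha$ drops out. So the continuity of $X$ is not about ``encoding auxiliary randomization''; it is the exact regularity condition under which conditional independence is impossible to test with power, which is the engine of the proof. Without importing this impossibility result (or reproving it), the swap or limiting-family constructions you sketch would have to solve the very coupling problem you correctly identify as delicate, and I do not see how they would.
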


Our proof can be extended to include the case where either $C$ or $T$
    or both are mixtures of discrete and continuous distributions but
    we do not consider such extensions here.
An LPB constructed by taking $\tT$ as the response may be calibrated
but also overly conservative because of the censoring mechanism. To
see this, note that the oracle LPB on $\tT$ is, by definition, the $\alpha$-th
conditional quantile of $\tT \mid X$, denoted by
$\td{q}_{\alpha}(X)$. Similarly, let $q_{\alpha}(X)$ be the oracle LPB
on $T$. Under the \condIC,
\begin{align}
  \p(T \ge q_{\alpha}(x)\mid X = x) = 1 - \alpha & = \p(\tT \ge \td{q}_{\alpha}(x)\mid X = x) \nonumber
  \\ & = \p(T \ge \td{q}_{\alpha}(x)\mid X = x)\p(C \ge \td{q}_{\alpha}(x)\mid X = x). \label{eq:qqtd}
  \end{align}
  If the censoring times are small, the gap between
  $\td{q}_{\alpha}(x)$ and $q_{\alpha}(x)$ can be large. For
  illustration, assume that $X, C$, and $T$ are mutually independent,
  and $T\sim \mathrm{Exp}(1), C\sim \mathrm{Exp}(b)$. It is easy to
  show that $q_{\alpha}(X) = -\log(1 - \alpha)$ and
  $\td{q}_{\alpha}(X) = -\log(1 - \alpha) / (1 + b)$. Thus, a naive
  approach taking $\tT$ as a target of inference can be arbitrarily
  conservative.

  In sum, Theorem \ref{thm:distribution_free} implies that any
  calibrated LPB on $T$ must be a calibrated LPB on $\tT$, under the
  \condIC ~only. This is why to make progress and overcome the
  limitations of the naive approach, we shall need additional 
  distributional assumptions.

\subsection{Leveraging the censoring mechanism}

We have just seen that the conservativeness of the naive approach is
driven by small censoring times. A heuristic way to mitigate this
issue is to discard units with small values of $C$. Consider a
threshold $c_0$, and extract the subpopulation on which $C \ge
c_0$. One immediate issue with this is that the selection induces a
distributional shift between the subpopulation and the whole
population, namely,
\[
(X, C, T) \neqd (X, C, T) \mid C \ge c_0.
\]
For instance, the patients with larger censoring times tend to be
healthier than the remaining ones. To examine the distributional shift
in detail, note that the joint distribution of $(X, \tT)$ on the
whole population is $P_{X}\times P_{\tT\mid X}$ while that on the
subpopulation is
\[
  P_{(X, \tT)\mid C\ge c_0} = P_{X\mid C \ge c_0}\times P_{\tT\mid X,
    C\ge c_{0}}.
\]
Next, observe that $P_{\tT\mid X, C\ge c_{0}}\neq P_{\tT\mid X}$ even
under the \compIC~because $(T, X)\indep C$ does not imply
$\tT \indep C\mid X$ in general. For example, as in Section
\ref{subsec:naive}, if $X, C$, and $T$ are mutually independent and
$T, C\stackrel{i.i.d.}{\sim} \mathrm{Exp}(1)$, then
$\p(\tT\ge a, C\ge a) = \p(\tT\ge a) > \p(\tT\ge a)\p(C\ge a)$, for
any $a > 0$.  As a result, both the covariate
distribution and the conditional distribution of $\tT$ given $X$
differ in the two populations.

Now consider a secondary censored outcome $\tT\wedge c_0$, where
$a\wedge b = \min\{a, b\}$. We have
\begin{align}
  \nonumber 
  P_{(X, \tT\wedge c_0)\mid C\ge c_0} = P_{X\mid C \ge c_0}\times P_{\tT\wedge c_0\mid X, C\ge c_{0}} & \stackrel{(a)}{=} P_{X\mid C \ge c_0}\times P_{T\wedge c_0\mid X, C\ge c_{0}} \\
   \label{eq:subpopulation_dist} & \stackrel{(b)}{=} P_{X\mid C \ge c_0}\times P_{T\wedge c_0\mid X}, 
\end{align}
where (a) uses the fact that
\[T\wedge c_0 = \tT \wedge c_0, \,\,\text{ if } C\ge c_0,\]
and (b) follows from the \condIC. On the other hand, the joint distribution of $(X, T\wedge c_0)$ on the whole population is
\begin{align}
  \label{eq:whole_population_dist}
  P_{(X, T\wedge c_0)} = P_{X}\times P_{T\wedge c_0 \mid X}.
\end{align}
Contrasting \eqref{eq:subpopulation_dist} with
\eqref{eq:whole_population_dist}, we observe that {\em there is only a
  covariate shift} between the subpopulation and the whole population.

The likelihood ratio between the two covariate distributions is
\begin{equation}
  \label{eq:covariate_shift}
  \frac{dP_{X}}{dP_{X\mid C\ge c_0}}(x) = \frac{\p(C \ge c_0)}{\p(C\ge c_0\mid X = x)}.
\end{equation}
\revise{While there is a distributional shift between the selected units and the target population, the special form of the covariate shift allows us to adjust for the bias by carefully reweighting the samples. In particular, }applying the one-sided version of weighted conformal inference
\citep{tibshirani2019conformal}, discussed in the next section, gives
a calibrated LPB on $T\wedge c_0$, and thus a calibrated LPB on
$T$. With sufficiently many units with large values of $C$, we can
choose a large threshold $c_0$ to reduce the loss of power caused by
censoring. We emphasize that there is no contradiction with Theorem
\ref{thm:distribution_free} because, as shown in Section
\ref{sec:method}, weighted conformal inference requires
$\p(C\ge c_0 \mid X)$ to be (approximately) known.

We refer to the denominator $\p(C\ge c_0\mid X = x)$ in
\eqref{eq:covariate_shift} as the \emph{censoring mechanism}, denoted
by $\c(x; c_0)$. We write it as $\c(x)$ for brevity when no confusion
can arise. This is the conditional survival function of $C$ evaluated
at $c_0$. Under a censoring of Type I, the $C_i$'s are fully observed
while the $T_i$'s are only partially observed. Thus, $\p(C \mid X)$ is
typically far easier to estimate than $\p(T \mid X)$. Practically, the
censoring mechanism is usually far better understood than the
conditional survival function of $T$; for example, as mentioned in
Section \ref{subsec:setup}, in randomized clinical trials, $C$ often
solely depends on the site location.

Under the \compIC, the covariate shift even disappears since
$P_{X} = P_{X\mid C\ge c_0}$. In this case, we can apply a one-sided
version of conformal inference to obtain a calibrated LPB on
$T\wedge c_0$, and hence a calibrated LPB on $T$
\citep[e.g.,][]{vovk2005algorithmic, lei2018distribution,
  romano2019conformalized}. With infinite samples, as
$c_0\rightarrow \infty$, the method is tight in the sense that the
censoring issue disappears. Again, this result does not contradict
Theorem \ref{thm:distribution_free}, which requires the LPB to be
calibrated under the weaker condition
\eqref{eq:conditional_independent_censoring}. With finite samples,
there is a tradeoff between the choice of the threshold $c_0$ and the
size of the induced subpopulation.


\section{Conformal inference for censored outcomes}
\label{sec:method}
\subsection{Weighted conformal inference}\label{subsec:weighted_conformal_inference}
Returning to \eqref{eq:subpopulation_dist} and
\eqref{eq:whole_population_dist}, the goal is to construct an LPB
$\hat{L}(\cdot)$ on $T\wedge c_0$ from training samples
$(X_i, \td{T}_i\wedge c_0)_{C_i\ge c_0} = (X_i, T_i\wedge c_0)_{C_i\ge
  c_0}$ such that
\[
\p\big(T\wedge c_0\ge \hat{L}(X)\big)\ge 1 - \alpha.
\]
Since $T\wedge c_0\le T$, $\hat{L}(\cdot)$ is a calibrated LPB on
$T$. We consider $c_0$ to be a fixed threshold in Section
\ref{subsec:weighted_conformal_inference} and
\ref{subsec:double_robustness}, and discuss a data-adaptive approach
to choosing this threshold in Section \ref{subsec:c0}.

To deal with covariate shifts, \citet{tibshirani2019conformal}
introduced weighted conformal inference, which extends standard
conformal inference \citep[e.g.,][]{vovk2005algorithmic,
  shafer2008tutorial, lei2014distribution,
  foygel2019limits,barber2019predictive,sadinle2019least,
  romano2020classification,cauchois2020knowing}). Imagine we have
i.i.d.~training samples $(X_i, Y_i)_{i=1}^{n}$ drawn from a
distribution $P_{X}\times P_{Y\mid X}$ and wish to construct
prediction intervals for test points drawn from the target
distribution $Q_X \times P_{Y\mid X}$ (in standard conformal
inference, $P_X = Q_X$). Assuming $w(x) = dQ_{X}(x)/dP_{X}(x)$ is
known, then weighted conformal inference produces prediction intervals
$\hat{C}(\cdot)$ with the property
\begin{align}\label{eq:covshift_cover}
\p_{(X,Y)\sim Q_{X} \times P_{Y\mid X}}
\big(Y\in\hat{C}(X)\big)\ge 1 - \alpha.
\end{align}
Above, the probability is computed over both the training set and the
test point $(X, Y)$.  In our case, the outcome is $T\wedge c_0$ and
the covariate shift $w(x) = \p(C\ge c_0) / \c(x)$, as shown in
\eqref{eq:covariate_shift}. 


\revise{In Algorithm \ref{algo:weighted_split}, we sketched a version
  of weighted conformal inference based on data splitting, which is
  adapted to our setting and has low computational
  overhead. Operationally, it has three main steps:}
\begin{enumerate}
\item split the data into a training and a calibration fold;
\item apply any prediction algorithm on the training fold to generate
  a \emph{conformity score} indicating how atypical a value of the
  outcome is given observed covariate values;
  here, we generate a conformity score such that a large value indicates a
    lack of conformity to training data.
\item calibrate the predicted outcome by the distribution of
  conformity scores on the calibration fold. In the calibration step
  from Algorithm \ref{algo:weighted_split}, $\quantile(1 - \alpha; Q)$
  is the $(1-\alpha)$ quantile of the distribution $Q$ defined as
\begin{equation}
  \label{eq:quantile}
  \quantile(1 - \alpha; Q) = \sup\{z: Q(Z\le z) < 1 - \alpha\}.
\end{equation}
\end{enumerate}

\begin{algorithm}[H]
  \DontPrintSemicolon  
  \SetAlgoLined
  \BlankLine
  \caption{\revise{conformalized survival analysis}\label{algo:weighted_split}}
  \textbf{Input:} level $\alpha$; \revise{data $\calZ=(X_i,\tT_i,C_i)_{i\in\calI}$}; testing point $x$;\\
  \hspace{0.08\textwidth}function $V(x,y;\calD)$ to compute the conformity score between $(x,y)$ and 
  data $\calD$; \\
  \hspace{0.08\textwidth}function $\hat{w}(x;\calD)$ to fit the weight function at $x$ using 
  $\calD$ as data;\\
  \hspace{0.08\textwidth}function $\calC(\calD)$ to select the threshold $c_0$ using 
  $\calD$ as data.\;
  \vspace*{.3cm}
  \textbf{Procedure:}\\
  \vspace*{.1cm}
  \hspace{0.02\textwidth}1. Split $\calZ$ into a training fold $\calZ_{\text{tr}} \triangleq (X_i,Y_i)_{i\in\calI_{\text{tr}}}$  and a calibration fold $\calZ_{\text{ca}} \triangleq (X_i,Y_i)_{i\in\calI_{\text{ca}}}$.\; 
  \hspace{0.02\textwidth}\revise{2. Select $c_0 = \calC(\calZ_{\tr})$ and let $\calI_{\ca}' = 
  \{i\in\calI_{\ca}: C_i \ge c_0\}$.}\;
  \hspace{0.02\textwidth}3. For each $i\in \calI'_{\text{ca}}$, compute the conformity score 
  \revise{$V_i = V(X_i,\tT_i \wedge c_0;\calZ_{\text{tr}})$}.\;
  \hspace{0.02\textwidth}4. For each $i\in\calI'_{\text{ca}}$, compute the weight $W_i = \hat{w}(X_i;\calZ_{\text{tr}})\in [0, \infty)$.\;
  \hspace{0.02\textwidth}5. Compute the weights $\hat{p}_i(x) = \frac{W_i}{\sum_{i\in\calI'_{\text{ca}}}W_i +\hat{w}(x;\calZ_{\text{tr}})}$
  and $\hat{p}_{\infty}(x) = \frac{\hat{w}(x;\calZ_{\text{tr}})}{\sum_{i\in\calI'_{\text{ca}}}W_i + \hat{w}(x;\calZ_{\text{tr}})}$.\;
  \hspace{0.02\textwidth}6. Compute $\eta(x) = \quantile\left(1 - \alpha; \sum_{i\in\calI'_{\text{ca}}} 
  \hat{p}_i(x)\delta_{V_i} + \hat{p}_{\infty}(x)\delta_{\infty}\right)$.\;
  \vspace*{.3cm}
  \textbf{Output}: $\hat{L}(x) = \inf\{y: V(x,y;\calZ_{\text{tr}}) \le \eta(x) \}\revise{\wedge c_0}$
\end{algorithm}

A few comments regarding Algorithm \ref{algo:weighted_split} are in
order. First, when the covariate shift $w(x)$ is unknown, it can be
estimated using the training fold. Second, note that in step 4, if
$\hat{w}(x; \calZ_{\tr}) = \infty$, then
$\hat{p}_{i}(x) = 0\,\, (i\in \calZ_{\ca})$ and
$\hat{p}_{\infty}(x) = 1$. In this case, step 5 gives
$\hat{L}(x) = -\infty$. Third, the requirement that
$W_i\in [0, \infty)$ is natural because $X_{i}\sim P_{X}$ and
$w(X)\in [0, \infty)$ almost surely under $P_{X}$ even if $Q_{X}$ is
not absolutely continuous with respect to $P_{X}$. Fourth, it is worth
mentioning in passing that $\eta(x)$ is invariant to positive
rescalings of $\hat{w}(x)$. Thus, we can set $w(x) = 1 / \hat{\c}(x)$
in our case where $\hat{c}(x)$ is an estimate of $\c(x)$.
\revise{Finally, apart from fitting $V(\cdot, \cdot; \Z_\tr)$ and
  $\hat{w}(\cdot; \Z_\tr)$ once on the training fold, the additional
  computational cost of our algorithm comes from computing
  $|\calI'_{\ca}|$ conformity scores and finding the $(1-\alpha)$-th
  quantile. We provide a detailed analysis of time complexity in
  Section~\ref{sec:time_comp} of the Appendix.}

In the algorithm, the conformity score function $V(x, y; \calD)$ can
be arbitrary and we discuss three popular choices from the literature: 
\begin{itemize}
\item Conformalized mean regression (CMR) scores are defined via
  $V(x, y; \calZ_\tr) = \hat{m}(x)- y$, where $\hat{m}(\cdot)$ is an
  estimate of the conditional mean of $Y$ given $X$. The resulting LPB
  is then $(\hat{m}(x) - \eta(x))\wedge c_0$. This is the one-sided version of the
  conformity score used in \cite{vovk2005algorithmic} and
  \cite{lei2014distribution}.
\item Conformalized quantile regression (CQR) scores are
  defined via $V(x, y; \calZ_\tr) = \hat{q}_{\alpha}(x) - y$, where
  $\hat{q}_\alpha(\cdot)$ is an estimate of the conditional
  $\alpha$-th quantile of $Y$ given $X$. The resulting LPB is then
  $(\hat{q}_\alpha(x) - \eta(x))\wedge c_0$. This score was proposed by
  \cite{romano2019conformalized}; it is more adaptive than CMR and
  usually has better conditional coverage.
\item Conformalized distribution regression (CDR) scores
  are defined via
  $V(x, y; \calZ_\tr) = \alpha - \hat{F}_{Y\mid X = x}(y)$, where
  $\hat{F}_{Y\mid X = x}(\cdot)$ is an estimate of the conditional
  distribution of $Y$ given $X$. The resulting LPB is then
  $\hat{F}_{Y\mid X = x}^{-1}(\alpha - \eta(x))\wedge c_0$, or equivalently, the
  $(\alpha - \eta(x))$-th quantile of the estimated conditional
  distribution. This score was proposed by
  \cite{chernozhukov2019distributional}. It is particularly suitable
  to our problem because most survival analysis methods estimate
  the whole conditional distribution.
\end{itemize}

Under the \compIC, $\p(C\ge c_0\mid X) = \p(C\ge c_0)$ almost surely. As a consequence, we can set $\hat{w}(x) = w(x) \equiv 1$ and obtain a calibrated LPB without any distributional assumption. 
\begin{proposition}\label{prop:exact_LPB}[Corollary 1 of \cite{tibshirani2019conformal}]
  Let $c_0$ be any threshold independent of $\calZ_\ca$. Consider
  Algorithm \ref{algo:weighted_split} with $Y_i = T_i\wedge c_0$ and
  $\hat{w}(x; \calD)\equiv 1$. Under the \compIC, $\hat{L}(X)$ is
  calibrated.
\end{proposition}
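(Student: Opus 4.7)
The plan is to reduce Algorithm~\ref{algo:weighted_split} with $\hat{w}\equiv 1$ to ordinary one-sided split conformal inference applied to a ``relabelled'' response $Y = T\wedge c_0$, and then to exploit the simple fact that $T\ge T\wedge c_0$ to lift a coverage guarantee on $T\wedge c_0$ into one on $T$. The crucial consequence of the \compIC~is that the distribution shift identified in \eqref{eq:covariate_shift} vanishes entirely: since $(T,X)\indep C$, conditioning on $C\ge c_0$ leaves the joint law of $(X,T)$ intact, so $P_{(X,T\wedge c_0)\mid C\ge c_0} = P_{(X,T\wedge c_0)}$. Consequently, the selected calibration points and the test point are drawn from a common distribution, and weighting is not needed.

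Concretely, I would proceed as follows. First, condition on the training fold $\calZ_{\tr}$ (which also determines $c_0$ via $\calC(\calZ_\tr)$ and the conformity score function $V(\cdot,\cdot;\calZ_\tr)$), so that the calibration samples $(X_i,T_i,C_i)_{i\in\calI_\ca}$ and the test point $(X,T,C)$ become i.i.d.~draws from $P_{X,T,C}$, independent of $\calZ_\tr$. Second, using the \compIC, verify that after the selection $\calI'_\ca = \{i\in\calI_\ca: C_i\ge c_0\}$, the selected tuples $(X_i, T_i\wedge c_0)_{i\in\calI'_\ca}$ are conditionally i.i.d.~from $P_{X,T\wedge c_0}$, and are jointly exchangeable with the unconditional test pair $(X, T\wedge c_0)$. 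Third, with $\hat{w}\equiv 1$, the weights in steps~4--5 of Algorithm~\ref{algo:weighted_split} collapse to $\hat{p}_i(x) = \hat{p}_\infty(x) = 1/(|\calI'_\ca|+1)$, so $\eta(X)$ is exactly the usual $(1-\alpha)$-th empirical quantile of the calibration scores $\{V_i\}_{i\in\calI'_\ca}$ together with an atom at $+\infty$. The standard split-conformal argument (applied conditionally on $|\calI'_\ca|$ and on $\calZ_\tr$) then gives
\begin{equation*}
\p\bigl(V(X, T\wedge c_0;\calZ_\tr)\le \eta(X)\bigr)\ge 1-\alpha.
\end{equation*}

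Finally, I would translate this score-level event into the desired lower-bound event. Since $T\wedge c_0$ belongs to the set $\{y: V(X,y;\calZ_\tr)\le \eta(X)\}$ on the above event, the infimum defining $\hat{L}(X)$ satisfies $\inf\{y: V(X,y;\calZ_\tr)\le \eta(X)\}\le T\wedge c_0$, and taking a further $\wedge c_0$ preserves the inequality because $T\wedge c_0\le c_0$. Therefore $\hat{L}(X)\le T\wedge c_0\le T$ with probability at least $1-\alpha$, which is the claim. The only step that requires any real care is the second one: one must check that the selection event $C_i\ge c_0$ combined with the identity $\tT_i\wedge c_0 = T_i\wedge c_0$ on that event reproduces exactly the target joint law $P_{X,T\wedge c_0}$, and that $c_0$ being a function of $\calZ_\tr$ alone keeps it independent of everything in the calibration and test data. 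Everything else is bookkeeping on top of Corollary~1 of \cite{tibshirani2019conformal} in its degenerate (unweighted) form.
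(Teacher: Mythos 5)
Your proof is correct and takes essentially the route the paper sketches informally in Sections 2.3 and 3.1 before simply citing Corollary 1 of Tibshirani et al.: under the completely independent censoring assumption the likelihood ratio is constant, so Algorithm 1 with $\hat{w}\equiv 1$ degenerates to ordinary unweighted split conformal inference on the selected calibration subsample (which, by $(T,X)\indep C$, is still an i.i.d.~draw from $P_{(X,T\wedge c_0)}$), and coverage of $T\wedge c_0$ lifts to coverage of $T$ via $T\wedge c_0\le T$. The paper supplies no separate proof of this proposition, so your argument simply fills in the bookkeeping that the citation to Tibshirani et al.\ already encapsulates.
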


\subsection{Doubly robust lower prediction bounds}
\label{subsec:double_robustness}


Under the more general \condIC, the censoring mechanism needs to be
estimated. We can apply any distributional regression techniques such
as the kernel method or the newly invented distribution boosting
\citep{friedman2020contrast} to estimate
$\c(x) = \p(C\ge c_0\mid X = x)$. For two-sided weighted split-CQR,
\cite{lei2020conformal} prove that the intervals satisfy a doubly
robust property which states the following: the average coverage is
guaranteed if either the covariate shift or the conditional quantiles
are estimated well, and the conditional coverage is approximately
controlled if the latter is true. In Section \ref{app:double_robustness} 
in the Appendix, \revise{we present more general results, both
non-asymptotic and asymptotic, that are applicable to a broad
class of conformity scores proposed by \cite{gupta2019nested},
including the CMR-, CQR- and CDR-based scores.}

In this section, we first present a version of the asymptotic result
tailored to the CQR-LPB for simplicity.
\begin{theorem}\label{thm:double_robustness_CQR}
  Let $N = |\calZ_\tr|, n = |\calZ_\ca|$, $c_0$ be any threshold
  independent of $\calZ_\ca$, and $q_{\alpha}(x; c_0)$ denote the $\alpha$-th conditional quantile of
  $T\wedge c_0$ given $X = x$. Further, 
  let $\hat{\c}(x)$ and $\hat{q}_{\alpha}(x; c_0)$ be estimates of $\c(x)$ and $q_{\alpha}(x; c_0)$ respectively using $\calZ_\tr$, and $\hat{L}(x)$ be the corresponding
  CQR-LPB. Assume that there exists $\delta > 0 $ such that
  $\E\left[1 / \hat{\c}(X)^{1 + \delta}\right] < \infty$ and
  $\E[1 / \c(X)^{1 + \delta}] < \infty$. Suppose that either A1 or A2
  (or both) holds:
\begin{enumerate}[label = A\arabic*]
  \item $\underset{N\rightarrow \infty}{\lim} \E\Big[ \big|1 / \hat{\c}(X) - 1 / \c(X) \big|\Big] = 0$.
  \item \begin{enumerate}[label = (\roman*)]
      \item There exists $b_2 > b_1 > 0$ and $r > 0$ such that, for any $x$ and $\eps \in [0, r]$, 
        \revise{\[\p(T\wedge c_0\ge q_{\alpha}(x; c_0) + \eps\mid X = x)\in {[}1 - \alpha - b_2\eps, 1 - \alpha - b_1 \eps{]}, \quad \text{if }q_{\alpha}(x; c_0) + \eps < c_{0}.\]}
      \item $\underset{N\rightarrow\infty}{\lim}\E\left[\calE(X) / \hat{c}(X)\right] = 
        \underset{N\rightarrow\infty}{\lim}\E\left[\calE(X) / c(X)\right] = 0$, where $\calE(x) = |\hat{q}_{\alpha}(x; c_0) - q_{\alpha}(x; c_0)|$.
      \end{enumerate}
  \end{enumerate}
  Then 
  \begin{align}
    \label{eq:double_robust_marginal_LPB}
    \lim_{N,n\rightarrow\infty} \p\left( T\wedge c_0\ge \hat{L}(X)\right) \ge 1 - \alpha.
  \end{align}
  Furthermore, under A2, for any $\varepsilon > 0$,
  \begin{align}\label{eq:conditional_coverage_LPB}
  \lim_{N,n\rightarrow \infty} \p\Big(\E\big[\Indc\big\{T\wedge c_0\ge \hat{L}(X)\big\} \mid X\big] > 1-\alpha-\varepsilon \Big) = 1.
  \end{align}
\end{theorem}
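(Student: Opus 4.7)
The plan is to reduce the claim to an application of weighted conformal inference on the selected subpopulation $\{i:C_i\ge c_0\}$, and then verify that A1 or A2 suffices to control the resulting conformity-score quantile $\eta(x)$. As shown in the discussion preceding Algorithm~\ref{algo:weighted_split}, under the conditionally independent censoring assumption the joint law of $(X_i,T_i\wedge c_0)$ conditional on $\{C_i\ge c_0\}$ equals $P_{X\mid C\ge c_0}\times P_{T\wedge c_0\mid X}$, while the target law is $P_X\times P_{T\wedge c_0\mid X}$. These laws differ only by a covariate shift with Radon--Nikodym derivative $w(x)=\p(C\ge c_0)/\c(x)$, estimated by $\hat{w}(x)\propto 1/\hat{\c}(x)$; because $\eta(x)$ is invariant to positive rescaling of $\hat{w}$, I can work with $w(x)=1/\c(x)$ throughout. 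The moment conditions $\E[1/\hat\c(X)^{1+\delta}]<\infty$ and $\E[1/\c(X)^{1+\delta}]<\infty$ will be used to apply a uniform law for the weighted empirical CDF.

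Next, I would invoke a general coverage bound for weighted split conformal inference, in the spirit of the one developed in \citet{lei2020conformal} but adapted to the one-sided CQR score $V(x,y)=\hat{q}_\alpha(x;c_0)-y$. This bound controls the deviation of the marginal coverage from $1-\alpha$ by the sum of two error terms: (i) the $L^1$ mismatch $\E[|1/\hat{\c}(X)-1/\c(X)|]$ between the true and estimated weights, and (ii) the gap between the true population $(1-\alpha)$-quantile (under the weighted distribution) of $V=\hat{q}_\alpha(X;c_0)-T\wedge c_0$ and zero. Under A1 the first term vanishes by assumption, and the standard coverage bound for weighted conformal inference with consistent weights immediately yields \eqref{eq:double_robust_marginal_LPB}. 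Under A2, the second term vanishes: writing $\Delta(x)=\hat{q}_\alpha(x;c_0)-q_\alpha(x;c_0)$, the local-linear behavior in A2(i) yields $|\p(V\le 0\mid X=x)-(1-\alpha)|\le b_2\,|\Delta(x)|=b_2\,\calE(x)$ on the event $\{q_\alpha(x;c_0)+\Delta(x)<c_0\}$, and the hypotheses $\E[\calE(X)/\c(X)]\to 0$ and $\E[\calE(X)/\hat\c(X)]\to 0$ in A2(ii) integrate this error to zero in both the true and estimated weighted measures. Combined with the finite moment conditions on $1/\c$ and $1/\hat\c$ this gives marginal coverage in either case.

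For the conditional statement \eqref{eq:conditional_coverage_LPB} under A2, the key step is to show $\eta(X)\to 0$ in $\p$-probability. The population weighted $(1-\alpha)$-quantile of the $V_i$'s is close to $0$ by the argument above, and a Glivenko--Cantelli-type estimate for the weighted empirical distribution on $\calZ_\ca$ transfers this to the empirical $\eta(X)$; the moment bound $\E[1/\hat{\c}(X)^{1+\delta}]<\infty$ is what makes this uniform convergence work. Combined with A2(ii), this gives $|\hat{L}(X)-q_\alpha(X;c_0)|\to 0$ in probability. Substituting into the lower and upper bounds in A2(i) then yields $\p\big(T\wedge c_0\ge \hat{L}(X)\mid X\big)\to 1-\alpha$ in probability, which implies \eqref{eq:conditional_coverage_LPB} for arbitrary $\varepsilon>0$.

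The main obstacle I anticipate is the boundary condition $q_\alpha(x;c_0)+\eps<c_0$ in A2(i): near the truncation point the local-linear quantile comparison breaks down, and one needs a separate argument on the event $\{q_\alpha(X;c_0)+\Delta(X)\ge c_0\}$. Fortunately, on that event $\hat{L}(X)\wedge c_0\le c_0\le q_\alpha(X;c_0)+\Delta(X)$, and the contribution to the coverage gap can be absorbed into $\calE(X)$ through the same integrability controls. A secondary difficulty is the careful accounting of the finite-sample bias introduced by plugging the estimate $\hat{w}$ into the weighted quantile; this is where the $(1+\delta)$-moment conditions on the inverse censoring probabilities are essential, allowing a uniform control of the weighted empirical process via a truncation-plus-tail-bound argument.
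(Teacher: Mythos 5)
Your overall architecture matches the paper's: reduce to weighted conformal inference on the subpopulation $\{C\ge c_0\}$ with covariate shift $w(x)\propto 1/\c(x)$, then invoke a double-robustness bound for weighted split conformal inference (the paper uses a generalization of Theorem A.1--A.2 of Lei \& Cand\`es 2020, stated in its Appendix as a general nested-sets result), with A1 feeding the weight-error term and A2 feeding the quantile-error term; conditional coverage under A2 is obtained by showing $\eta(X)\to 0$ via a weighted-empirical-process argument using the $(1+\delta)$-moment hypotheses. You also correctly flag the boundary complication arising from the restriction $q_\alpha(x;c_0)+\eps<c_0$ in A2(i), which the paper indeed must treat separately.

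Where the proposal is imprecise is exactly in the boundary step. You partition on the event $\{q_\alpha(X;c_0)+\Delta(X)\ge c_0\}$, i.e.\ $\{\hat q_\alpha(X;c_0)\ge c_0\}$, and assert that on this event the coverage gap ``can be absorbed into $\calE(X)$ through the same integrability controls.'' That does not go through: the event $\{\hat q_\alpha(X;c_0)\ge c_0\}$ has nonvanishing probability whenever $\p(q_\alpha(X)\ge c_0)>0$, yet on $\{q_\alpha(X)\ge c_0\}$ one can have $\calE(X)=0$, so nothing forces the $\calE$-moment conditions to make this region's contribution small. The paper's resolution is different and does not route through $\calE$: since the algorithm always outputs $\hat L(x)\le c_0$, one has the pointwise lower bound $\p(T\wedge c_0\ge \hat L(x)\mid X=x)\ge\p(T\ge c_0\mid X=x)$, which by A2(i) (applied at $\eps=c_0-q_\alpha(x)$) is at least $1-\alpha-b_2\xi$ on the set $\{q_\alpha(X)\ge c_0-\xi\}$ for any small $\xi>0$. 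One then restricts attention to $\{q_\alpha(X)<c_0-\xi\}$, where the local-linear hypothesis A2(i) applies cleanly and the general weighted-conformal double-robustness theorem can be invoked with $\calO_s(x)=[q_\alpha(x;c_0)-s,\infty)$, $s_0=0$, and $\Delta(x)\le\calE(x)$. So the correct ``bad'' event is $\{q_\alpha(X)\ge c_0-\xi\}$, a deterministic function of $x$, not the estimator-dependent event you use, and the coverage there is automatic rather than absorbed into an error term. With that one replacement the rest of your plan is sound and matches the paper's proof.
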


\revise{
  \begin{remark}
    The condition A2 (i) holds if $\,T$ has a bounded and absolutely
    continuous density conditional on $X$ in a neighborhood of
    $q_{\alpha}(x)$. In fact, noting that
    $q_{\alpha}(x; c_0) = q_{\alpha}(x) \wedge c_0$, when
    $q_{\alpha}(x; c_0) + \eps \le c_0$, we have
    $q_{\alpha}(x) \le c_0$ and thus
    $T\wedge c_0 \ge q_{\alpha}(x) \wedge c_0$ if and only if
    $T\ge q_{\alpha}(x)$.
  \end{remark}
}

Intuitively, if $\hat{c}(x)\approx c(x)$, then the procedure
approximates the oracle version of weighted split-CQR with the true
weights, and the LPBs should be approximately calibrated. On the other
hand, if $\hat{q}_{\alpha}(x; c_0)\approx q_{\alpha}(x; c_0)$, then
$V_i \approx q_{\alpha}(X_i; c_0) - T_i\wedge c_0$. As a result,
\[\p(V_i \le 0\mid X_i)\approx \p(T_i\wedge c_0\le q_{\alpha}(X_i; c_0)\mid X_i) = \alpha.\]
Thus, the $(1 - \alpha)$-th quantile of the $V_i$'s conditional on
$\calZ_\tr$ is approximately $0$. To keep on going, recall that
$\eta(x)$ is the $(1 - \alpha)$-th quantile of the random distribution
$\sum_{i\in\calZ_{\ca}}\hat{p}_{i}(x)\delta_{V_i} +
\hat{p}_{\infty}(x)\delta_{\infty}$, and set $G$ to be the cumulative
distribution function of this random distribution. Then,
\[G(0) \approx \E[G(0)\mid \calZ_\tr] = \sum_{i\in \calZ_{\ca}}\hat{p}_i(x)\p(V_i\le 0\mid \calZ_\tr)\approx \sum_{i\in \calZ_{\ca}}\hat{p}_i(x)(1 - \alpha)\approx 1 - \alpha,\]
implying that $\eta(x)\approx 0$. Therefore, $\hat{L}(x)\approx q_{\alpha}(x; c_0)$, which approximately achieves the desired conditional coverage. 

With the same intuition, we can establish a similar result for the
CDR-LPB with a slightly more complicated version of Assumption
A2.
\begin{theorem}\label{thm:double_robustness_CDR}
Let $F(\cdot\mid x)$ denote the conditional distribution of $T\wedge c_0$ given $X = x$. With the same settings and assumptions as in Theorem \ref{thm:double_robustness_CQR}, the same conclusions hold if A2 is replaced by the following conditions:
  \begin{enumerate}[label = (\roman*)]
  \item there exists $r > 0$ such that, for any $x$ and $\eps \in [0, r]$, 
    \revise{\[\p(T\wedge c_0\ge q_{\alpha + \eps}(x; c_0)\mid X = x) = 1 - \alpha - \eps, \quad \text{if }q_{\alpha + \eps}(x; c_0) < c_0.\]}
  \item $\underset{N\rightarrow\infty}{\lim}\E\left[\calE(X) / \hat{c}(X)\right] = 
    \underset{N\rightarrow\infty}{\lim}\E\left[\calE(X) / c(X)\right] = 0$, where
    \[\calE(x) = \sup_{s\in [\alpha - r, \alpha + r]}|F(\hat{q}_{s}(x; c_0) \mid x) - F(q_{s}(x; c_0)\mid x)|.\]
  \end{enumerate}
\end{theorem}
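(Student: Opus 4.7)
The plan is to derive Theorem \ref{thm:double_robustness_CDR} by the same template as Theorem \ref{thm:double_robustness_CQR}, specializing the general non-asymptotic and asymptotic double-robustness results for the nested-score construction of \cite{gupta2019nested} (cited above as covering CMR, CQR, and CDR) to the CDR conformity score $V(x,y;\calZ_\tr) = \alpha - \hat{F}_{Y\mid X=x}(y)$. The two hypotheses A1 and the new A2 will be handled separately, and the marginal statement \eqref{eq:double_robust_marginal_LPB} will be obtained by integrating the conditional statement \eqref{eq:conditional_coverage_LPB} when A2 holds.

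First, I would dispatch the A1 branch, which is essentially free. The estimation of the censoring mechanism enters only through the weights $W_i = 1/\hat{\c}(X_i)$ in Algorithm \ref{algo:weighted_split}, and the argument in the proof of Theorem \ref{thm:double_robustness_CQR} that converts $\E[|1/\hat{\c}(X) - 1/\c(X)|]\to 0$ together with the integrability $\E[1/\hat{\c}(X)^{1+\delta}] + \E[1/\c(X)^{1+\delta}] < \infty$ into the marginal coverage \eqref{eq:double_robust_marginal_LPB} is agnostic to the form of the score. So I would quote that part of the CQR proof verbatim, noting that it only uses the weighted exchangeability mechanism of \cite{tibshirani2019conformal} and Proposition \ref{prop:exact_LPB}-style reasoning for the true weights.

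The substantive step is the A2 branch, which drives both \eqref{eq:conditional_coverage_LPB} and, via dominated convergence using the integrability of $1/\c$, the marginal statement \eqref{eq:double_robust_marginal_LPB}. The key observation for the CDR score is the identity
\begin{align}
\{V_i \le t\} = \{\hat{F}(T_i\wedge c_0 \mid X_i) \ge \alpha - t\} = \{T_i\wedge c_0 \ge \hat{q}_{\alpha - t}(X_i; c_0)\},
\end{align}
so that $\p(V_i \le t \mid X_i) = 1 - F(\hat{q}_{\alpha - t}(X_i; c_0)\mid X_i)$. Combining A2(i), which gives $F(q_s(x; c_0)\mid x) = s$ on the relevant interval when $q_s(x; c_0) < c_0$, with A2(ii), which forces $|F(\hat{q}_s(x;c_0)\mid x) - s|$ to vanish in mean weighted by $1/\hat{\c}$ and $1/\c$ uniformly in $s\in[\alpha - r,\alpha + r]$, I obtain $\p(V_i \le t\mid X_i) \to 1 - \alpha + t$ in the same weighted-$L^1$ sense, on an interval of $t$ around $0$. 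Following the CQR template, this approximate calibration of the conditional score distribution propagates through the weighted empirical quantile to give $\eta(x) \to 0$ in probability, hence $\hat{L}(x) = \hat{q}_{\alpha - \eta(x)}(x; c_0)\wedge c_0 \approx \hat{q}_\alpha(x; c_0)$, and the desired conditional coverage.

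The main obstacle is the truncation at $c_0$, which creates a point mass in the distribution of $T\wedge c_0$ and destroys the absolute-continuity intuition that underlies the CQR argument. This is precisely why A2(i) and the sup in the definition of $\calE$ in A2(ii) are restricted to quantiles strictly below $c_0$. I would handle this by a case-split on whether $\hat{q}_\alpha(x; c_0) < c_0$: on that event the above weighted-calibration argument applies unchanged, while on the complement $\hat{L}(x) = c_0$ so that $\{T\wedge c_0 \ge \hat{L}(X)\} = \{T\wedge c_0 = c_0\}$ and the coverage contribution from this region can be bounded below using A2(i) in its limiting form as $\eps \uparrow r$. Merging the two cases recovers \eqref{eq:conditional_coverage_LPB}, and integrating against $P_X$ (using the integrability of $1/\c(X)$ as in A1 to control the covariate-shift adjustment) yields \eqref{eq:double_robust_marginal_LPB}.
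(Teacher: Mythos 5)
Your overall strategy is the right one and matches the paper's: dispatch the A1 branch exactly as in the CQR case, and handle A2 by specializing the general nested-set double-robustness theorem of the appendix (Theorem B.2 in the paper's numbering, our Theorem~\ref{thm:double_robustness}) to the CDR score, using the identity $\{V_i \le t\} = \{T_i\wedge c_0 \ge \hat q_{\alpha-t}(X_i;c_0)\}$ so that A2(i)--(ii) yield $\p(V_i\le t\mid X_i)\approx 1-\alpha+t$ in the appropriate weighted $L^1$ sense. The paper phrases this via the bound $\Delta(x)\le \calE(x)$ for the oracle nested sets $\calO_s(x)=[q_{\alpha-s}(x;c_0),\infty)$, which is equivalent to your calibration identity; that step is fine.

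The genuine gap is in your treatment of the truncation at $c_0$. You propose a case-split on whether the \emph{estimated} quantile $\hat q_{\alpha}(x;c_0)$ hits $c_0$, and assert that on the complement $\hat L(x)=c_0$ so that coverage can be lower-bounded via A2(i) ``in the limit $\eps\uparrow r$.'' Neither part holds. First, $\hat L(x)=\hat q_{\alpha-\eta(x)}(x;c_0)\wedge c_0$; if $\eta(x)>0$ then $\alpha-\eta(x)<\alpha$ and $\hat q_{\alpha-\eta(x)}(x;c_0)$ can be strictly below $c_0$ even when $\hat q_\alpha(x;c_0)=c_0$, so $\hat L(x)=c_0$ is not implied. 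Second, and more importantly, A2(i) gives control of $F(q_s(x;c_0)\mid x)$ only where the \emph{true} quantile $q_s(x;c_0)<c_0$; it says nothing useful on a set characterized by the \emph{estimated} quantile being at $c_0$. The paper instead splits on the true quantile: with $\xi=\min\{r,\eps\}$, on $\{q_{\alpha+\xi}(X)\ge c_0\}$ the trivial bound $\hat L(x)\le c_0$ gives $\p(T\wedge c_0\ge\hat L(x)\mid X=x)\ge \p(T\ge c_0\mid X=x)\ge 1-\alpha-\xi$, while on $\{q_{\alpha+\xi}(X)<c_0\}$ the reduction to the nested-set theorem goes through unimpeded because then $q_{\alpha-t}(x;c_0)<c_0$ for the relevant $t$. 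You would need to replace your case split by this one (and drop the claim $\hat L(x)=c_0$) before the truncation argument closes.
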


The double robustness of weighted split conformal inference has some
appeal; indeed, the researcher can leverage knowledge about both the
conditional survival function and the censoring mechanism without any
concern for which is more accurate. Suppose the Cox model is adequate
in a randomized clinical trial; then it can be used to produce
$\hat{q}_{\alpha}(x; c_0)$ in conjunction with the known censoring
mechanism. If the model is indeed correctly specified, the LPB is
conditionally calibrated, as are classical prediction intervals
derived from the Cox model \citep{kalbfleisch2011statistical}; if the
model is misspecified, however, the LPB is still calibrated.

\revise{
  \begin{remark}
    A special case is when the completely independent censoring
    assumption holds, yet the researcher is unaware of this and still
    applies the estimated $\hat{c}(\cdot)$ to obtain the prediction
    intervals.  As implied by Theorem~\ref{thm:double_robustness_CQR}
    and~\ref{thm:double_robustness_CDR}, if $\hat{c}(\cdot)$ is
    approximately a constant function, the prediction interval is
    approximately calibrated. Notably, even if $\hat{c}(\cdot)$
    deviates from a constant, our prediction interval still achieves
    coverage as long as the estimated weights are non-decreasing in
    the conformity scores. We present this additional robustness
    result in Section~\ref{appx:coverage_independent_censoring} of the
    Appendix.
\end{remark}
}

\revise{ As a concluding remark, the prediction interval can become
  numerically and statistically unstable in the presence of extreme
  weights since the proposed method depends on $c(x)$ (or the
  estimated $\hat{c}(x)$) through its inverse. The reader may have
  observed that $c(x)$ plays a role similar to that of the propensity
  score in causal inference; the reweighting step in Algorithm
  \ref{algo:weighted_split} is analogous to inverse propensity score
  weighting-type methods. Assumption A1 in Theorem
  \ref{thm:double_robustness_CQR} mimics the overlap condition
  \citep[e.g.][]{d2021overlap} in the causal inference
  literature. That said, there is a crucial difference. In a typical
  causal setting, the overlap condition is an assumption about the
  unknown data generating process, which cannot be manipulated. In
  contrast, in our work Assumption A1 can always be satisfied by
  selecting a sufficiently low threshold $c_0$.
  We provide a detailed discussion in
  Section~\ref{appx:c_up} of the Appendix.}


\revise{\subsection{Adaptivity to high-quality modeling} We have seen
  that when the quantiles of survival times are well estimated,
  $\hat{L}(x)\approx q_{\alpha}(x; c_0)$, which is the oracle lower
  prediction bound for $T\wedge c_0$, had the true survival function
  been known. This holds without knowing whether the survival model is
  well estimated or not. This suggests that conformalized survival
  analysis has favorable adaptivity properties, as formalized below. 
  \begin{theorem}\label{thm:adaptivity}
    \begin{enumerate}
    \item Under the settings and assumptions of Theorem \ref{thm:double_robustness_CQR}, assume further that A2 (ii) holds and a modified version of A2 (i) holds: there exists $b_1 > 0$ and $r > 0$ such that, for any $x$ and $\eps \in [0, r]$, 
        \[\p(T\wedge c_0\ge q_{\alpha}(x; c_0) - \eps\mid X = x)\ge 1 - \alpha + b_1 \eps.\]
      Then, for any $\eps > 0$,
      \[\lim_{N, n\rightarrow \infty}\p_{X\sim Q_X}(\hat{L}(X) \ge q_{\alpha}(X; c_0) - \eps) = 1.\]
    \item Under the settings and assumptions of Theorem \ref{thm:double_robustness_CDR}, assume further the condition (ii) and the modified version of condition (i): there exists $r > 0$ such that, for any $x$ and $\eps \in [0, r]$,
      \[\p(T\wedge c_0\ge q_{\alpha - \eps}(x; c_0)\mid X = x) \ge 1 - \alpha + \eps.\]
      Then, for any $\eps > 0$,
      \[\lim_{N, n\rightarrow \infty}\p_{X\sim Q_X}(\hat{L}(X) \ge q_{\alpha - \eps}(X; c_0)) = 1.\]
    \end{enumerate}
  \end{theorem}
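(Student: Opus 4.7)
The plan is to reuse the machinery already developed in the proofs of Theorems \ref{thm:double_robustness_CQR} and \ref{thm:double_robustness_CDR}, now employing the modified lower-quantile conditions to bound $\eta(X)$ from above rather than below. Focus first on the CQR case. Writing $\hat{L}(X) = (\hat{q}_\alpha(X;c_0) - \eta(X)) \wedge c_0$, the event $\{\hat{q}_\alpha(X;c_0) - \eta(X) \ge c_0\}$ already gives $\hat{L}(X) = c_0 \ge q_\alpha(X;c_0)$, so it suffices to prove
\[
\p_{Q_X}\!\bigl(\hat{q}_\alpha(X;c_0) - \eta(X) \ge q_\alpha(X;c_0) - \eps\bigr) \longrightarrow 1.
\]
This decomposes into (i) $\hat{q}_\alpha(X;c_0) - q_\alpha(X;c_0) = o_{\p}(1)$ under $Q_X$ and (ii) $\eta(X) = o_{\p}(1)$ from above. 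Claim (i) follows from A2(ii): since $c(X) \le 1$, we have $\E_{Q_X}[\calE(X)] \le \E[\calE(X)/c(X)] \to 0$, hence $\calE(X) \to 0$ in $Q_X$-probability.

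\textbf{The main obstacle: an upper bound on $\eta(X)$.} This is where the modified A2(i) enters. Conditional on $\calZ_\tr$, $\eta(X)$ is the $(1-\alpha)$-quantile of the random weighted CDF
\[
G_X \defn \sum_{i\in\calI'_\ca}\hat{p}_i(X)\delta_{V_i} + \hat{p}_\infty(X)\delta_\infty
\]
of the conformity scores $V_i = \hat{q}_\alpha(X_i;c_0) - T_i\wedge c_0$, so $\eta(X) \le \eps$ is equivalent to $G_X(\eps) \ge 1-\alpha$. Pointwise in $x$, the bound $\hat{q}_\alpha(x;c_0) - \eps \le q_\alpha(x;c_0) - (\eps - \calE(x))$ combined with the modified A2(i) (applied to $s = (\eps - \calE(x))_+ \le \eps \le r$) yields
\[
\p\bigl(V \le \eps \,\big|\, X = x,\, \calZ_\tr\bigr) \;=\; \p\bigl(T\wedge c_0 \ge \hat{q}_\alpha(x;c_0) - \eps \,\big|\, X = x\bigr) \;\ge\; 1 - \alpha + b_1\bigl(\eps - \calE(x)\bigr)_+.
\]
Since $\calE(X) \to 0$ in $Q_X$-probability and is uniformly integrable under the moment bound in A2(ii), integrating against $Q_X$ produces
\[
\E_{Q_X}\!\bigl[\p(V\le\eps\mid X,\calZ_\tr)\bigr] \;\ge\; 1 - \alpha + b_1\eps/2
\]
with probability tending to one; this is the population counterpart of $G_X(\eps)\ge 1-\alpha$.

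\textbf{Passing from population to empirical, and the CDR case.} The remaining task---promoting the population reweighted expectation above to the random empirical quantile $G_X$---is exactly the concentration step already carried out in the proof of Theorem \ref{thm:double_robustness_CQR} to establish the conditional coverage bound \eqref{eq:conditional_coverage_LPB}, only run in the reverse direction. The moment bound $\E[1/\hat{c}(X)^{1+\delta}] < \infty$ keeps the estimated weights $\hat{p}_i(X)$ stable, and under A1 or A2(ii) one obtains $|G_X(\eps) - \E_{Q_X}[\p(V\le\eps\mid X,\calZ_\tr)]| = o_\p(1)$, which combined with the previous paragraph yields $\p_{Q_X}(\eta(X)\le\eps)\to 1$. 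For part 2, the same blueprint applies with $V_i = \alpha - \hat{F}(T_i\wedge c_0\mid X_i)$ and $\hat{L}(X) = \hat{F}^{-1}(\alpha - \eta(X)\mid X)\wedge c_0$: showing $\eta(X)\le\eps$ with $Q_X$-probability tending to one reduces to a pointwise bound $\p(V\le\eps\mid X=x)\ge 1-\alpha+\eps - o(1)$, which follows directly from the modified condition (i) applied at the oracle level $\alpha-\eps$, with the discrepancy between $\hat{F}$ and $F$ absorbed via condition (ii) exactly as in the CQR case.
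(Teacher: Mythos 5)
Your overall strategy --- bound $\eta(X)$ from above using the modified lower-quantile condition, then combine with $\calE(X)\to 0$ --- is the right one and coincides with the paper's, which simply checks $\Delta'_s(X)\le\calE(X)$ and invokes Theorem~\ref{thm:prediction_intervals} (the nonasymptotic statement that $\hat{C}(X)\subset\calO_{s_0+\eps}(X)$ with high $Q_X$-probability). You are essentially re-deriving the CQR/CDR case of that theorem from scratch, which would be fine if the details held up. Two of them do not.

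First, you claim $\E_{Q_X}\bigl[\p(V\le\eps\mid X,\calZ_\tr)\bigr]$ "is the population counterpart of $G_X(\eps)\ge 1-\alpha$." It isn't: the calibration points $X_i$ are drawn from $P_X$ and the weights in $G_X$ are $\hat{p}_i(x)\propto\hat{w}(X_i)$, so the relevant mean is the normalized $\hat{w}$-weighted $P_X$-expectation, $\E_{P_X}[\hat{w}(X)\,\p(V\le\eps\mid X,\calZ_\tr)]/\E_{P_X}[\hat{w}(X)]$, not the $w$-weighted one. Theorem~\ref{thm:adaptivity} does \emph{not} assume A1 (so $\hat{w}\ne w$ in general), and this mismatch is exactly why A2(ii) requires the error $\calE$ to be small under \emph{both} $\hat{w}$- and $w$-weighting; your argument only uses the $w$-weighted half ($\E[\calE(X)/c(X)]\to 0$). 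You need the $\E[\calE(X)/\hat{c}(X)]\to 0$ half to control the empirical weighted average. Second, the concentration step is not simply the one in the proof of Theorem~\ref{thm:double_robustness_CQR} "run in the reverse direction." The reverse direction introduces the term $\hat{p}_\infty(\tilde{X})=\hat{w}(\tilde{X})/(\sum_i\hat{w}(X_i)+\hat{w}(\tilde{X}))$, which can deflate $G_X(\eps)$ below its conditional mean unless one additionally bounds the test-point weight $\hat{w}(\tilde{X})$ under $Q_X$; the paper does this in the proof of Theorem~\ref{thm:prediction_intervals} through the extra moment condition $\E[w(X)^{1+\delta_0}]<\infty$ (see the H\"older/Markov step around \eqref{eq:wtdX}), a condition that does hold here because $w\propto 1/c$ and $\E[1/c(X)^{1+\delta}]<\infty$, but that your sketch never invokes. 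Both gaps are repairable, but as written the proof does not close.
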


  In theory, if $c_0$ is allowed to grow with $n$ and $C$ exceeds
  $c_0$ with sufficient probability, then
  $\hat{L}(x)\approx q_{\alpha}(x)$ (see Appendix
  \ref{app:adaptivity}). In practice, it would however be wiser to
  tune $c_0$ in a data-adaptive fashion (discussed in the next
  subsection) than to prescribe a predetermined growing sequence. }

\subsection{Choice of threshold}
\label{subsec:c0}

The threshold $c_0$ induces an estimation-censoring tradeoff: a larger
$c_0$ mitigates the censoring effect, closing the gap between the target
outcome $T$ and the operating outcome $T\wedge c_0$, but reduces
the sample size to estimate the censoring mechanism and the
conditional survival function.  It is thus important to pinpoint the
optimal value of $c_0$ to maximize efficiency.

To avoid double-dipping, we choose $c_0$ on the training fold
$\calZ_\tr$. In this way, $c_0$ is independent of the calibration fold
$\calZ_\ca$ and we are not using the same data twice. In particular,
Proposition \ref{prop:exact_LPB}, Theorem
\ref{thm:double_robustness_CQR} and \ref{thm:double_robustness_CDR}
all apply. Concretely, we (1) set a grid of values for $c_0$, (2)
randomly sample a holdout set from $\calZ_\tr$, (3) apply Algorithm
\ref{algo:weighted_split} on the rest of $\calZ_\tr$ for each value of
$c_0$ to generate LPBs for each unit in the holdout set, and (4)
select $c_0$ which maximizes the average LPBs on the holdout set. One
way to see all of this is to pretend that the training fold is the
whole dataset and measure efficiency as the average realized LPBs. In
practice, we choose $25\%$ units from $\calZ_\tr$ as the holdout
set. The procedure is convenient to implement, though it is by no
means the most powerful approach. 

\revise{Under suitable conditions, we can choose $c_0$ by using the
  calibration fold $\calZ_{\ca}$ and have the resulting LPBs still be
  (approximately) calibrated. To be specific, given a candidate set
  $\calC$ for $c_0$, we simply maximize the average LPB on
  $\calZ_\ca$:
\begin{align*}
  \hat{c}_0 = \argmax{c_0 \in \calC}~\frac{1}{|\calI_{\ca}|}
  \sum_{i\in\calI_{\ca}} \hat{L}_{c_0}(X_i),
\end{align*}
where $\hat{L}_{c_0}(X)$ is given by the conformalized survival
analysis with the threshold $c_0$. In Section~\ref{sec:uniform_c} of
the Appendix we derive uniform results for the $c_0$'s
in $\calC$, and prove coverage guarantees for $L_{\hat{c}_0}(X)$ via a
generalization of the techniques for unweighted conformal inference by
\cite{yang2021finite}.}


\section{Simulation studies}
\label{sec:sim}
In this section, we design simulation studies to evaluate the
performance of our method.  Specifically, we run four sets of
experiments detailed in Table \ref{tab:params_1}. In each experiment,
we compare the CQR- and CDR-LPB with the following alternatives:
\begin{itemize}
  \item Cox model: we generate the LPB as the $\alpha$-th quantile from an estimated Cox model. The method is implemented via the \texttt{survival} R-package \citep{survival-package}.
  \item Accelerated failure time (AFT) model: we generate the LPB as
    the $\alpha$-th quantile from an estimated AFT model with Weibull
    noise. The method is implemented in the \texttt{survival} R
    package.
  \item Censored quantile regression: we consider three variants of quantile regression methods, proposed by \citet{powell1986censored}, \citet{portnoy2003censored}, and \citet{peng2008survival}, respectively. All three procedures are implemented in the \texttt{quantreg} R package \citep{roger2020quantreg}.
  \item Censored quantile regression forest \citep{li2020censored}: this is a variant of quantile random forest \citep{athey2019generalized} designed to handle time-to-event outcomes. We reimplement the method based on the code provided in \url{https://github.com/AlexanderYogurt/censored\_ExtremelyRandomForest}.
  \item Naive CQR: we apply split-CQR \citep{romano2019conformalized} naively to $(X_i,\tT_i)_{i=1}^n$, where the quantiles are estimated by the \texttt{quantreg} R package.
\end{itemize}
For the CQR-LPB, the conditional quantiles are estimated via censored
quantile regression forest or distribution boosting~\citep{friedman2020contrast}; for the CDR-LPB, the conditional survival function is estimated via distribution boosting, which is implemented in
the R package \texttt{conTree}~\citep{Jerome2020conTree}. 
 
In each experiment, we generate $200$ independent datasets, each
containing a training set of size $n = 3000$, and a test set of size
$n = 3000$. For conformal methods, $50\%$ of the training set is used for
fitting the predictive model, and the remaining $50\%$ of the training set is reserved for
calibration.
The splitting ratio between the training set and
  the test set is slightly different from the recommendation
  by~\citet{sesia2020comparison}, where they suggest using $75\%$ of
  the data for training and $25\%$ for calibration.  We reserve more
  data for calibration to ensure there are still enough samples in the
  calibration set after the selection and to decrease the variability
  of the LPBs.
  We then evaluate the coverage of LPBs as
$(1 / n_{\text{test}})\sum_{i=1}^{n_{\text{test}}} \Indc\{T_i \ge
\hat{L}(X_i)\}$.  All the results in this section can be replicated
with the code available at
\url{https://github.com/zhimeir/cfsurv\_paper}.  In addition, the
proposed CQR- and CDR-LPB are implemented in the R package
\texttt{cfsurvival}, available at
\url{https://github.com/zhimeir/cfsurvival}.

The covariate vector $X\in\R^p$ is generated from $P_X$. The survival
time $T$ is generated from an AFT model with Gaussian noise, i.e.
\begin{align*}
  \log{T} \mid X \sim \calN \Big( \mu(X) , \sigma^2(X) \Big).
\end{align*}
We consider $2\times 2$ settings with univariate or multivariate
covariates plus homoscedastic or heteroscedastic errors.
Here the term ``homoscedastic'' or ``heteroscedastic'' is applied to
  $\log{T}$. The choice of the parameters in each setting is
specified in Table~\ref{tab:params_1}.

Finally, we apply all the methods with target coverage level
$1 - \alpha = 90\%$. In each experiment, we estimate $c(x)$ by
distribution boosting.

\begin{table}
  \caption{\label{tab:params_1}
		Parameters used in the simulation study. ``Homosc.''~and
    ``Heterosc.''~are short for homoscedastic and heteroscedastic;
    ``Uvt.''~and ``Mvt.''~are short for univariate and
    multivariate. $\calU(a,b)$ denotes the uniform distribution
    supported on $[a,b]$; $\calE(\lambda)$ denotes the exponential
    distribution with rate $\lambda$.
    }
  \centering
  \begin{tabular}{c|ccccc}
     & dimension $p$ & $P_X$ &  $P_{C\mid X}$ & $\mu(x)$ & $\sigma(x)$\\\hline
    Uvt. + Homosc. & $1$ & $\calU(0,4)$ & $\calE(0.4)$ &  $2+0.37\sqrt{x}$ &  $1.5$ \\
   Uvt. + Heterosc. & $1$ & $\calU(0,4)$ & $\calE(0.4)$ &$2+0.37\sqrt{x}$ & $1+x/5$\\ 
    Mvt. + Homosc. & $100$ & $\calU([-1,1]^{p})$  
                   & $\calE(0.4)$ &  $\log{2}+1+0.55(x_1^2-x_3x_5)$ & $1$\\
    Mvt. + Heterosc. & $100$ & $\calU([-1,1]^p)$  & $\calE(0.4)$
                     & $\log{2}+1+0.55(x_1^2-x_3x_5)$ & $|x_{10}|+1$\\
  \end{tabular}
\end{table}
Figure~\ref{fig:sim1_marginal_coverage} presents the empirical
coverage of the LPBs on uncensored survival times. Censored random
forests, the Cox model, the AFT model, and the three quantile regression
methods fail to achieve the target coverage in most cases. On the
other hand, the naive CQR attains the desired coverage but at the price of
being overly conservative. In contrast, both the CQR- and CDR-LPB
achieve near-exact marginal coverage, as predicted by our theory.

\begin{figure}[ht]
  \centering
  \includegraphics[width = \textwidth]{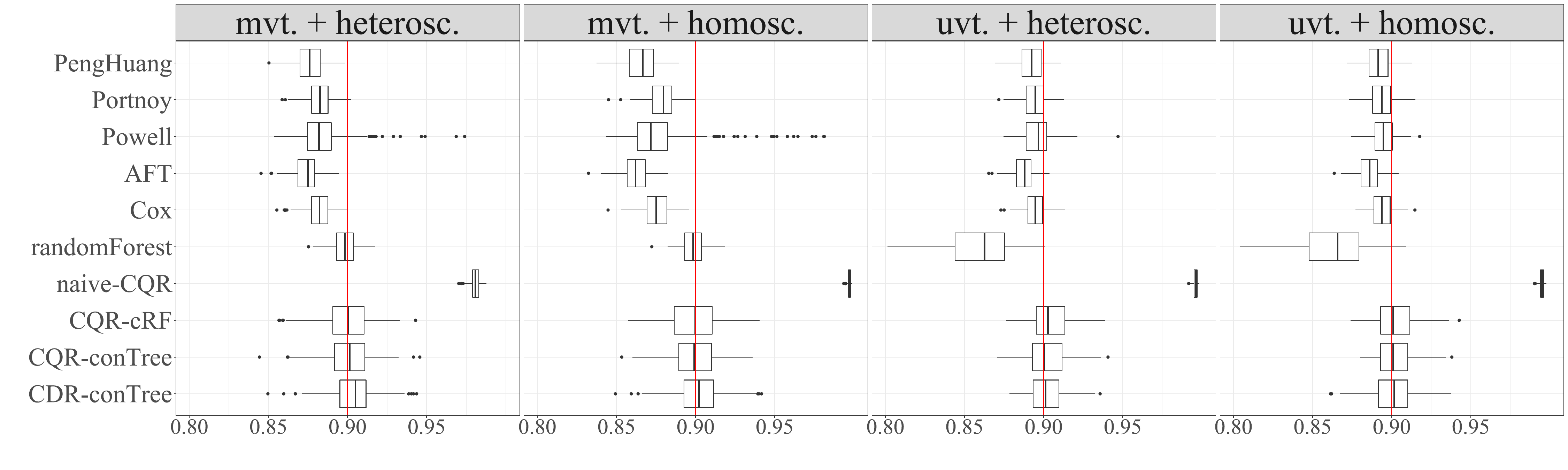}
  \caption{Empirical $90\%$ coverage of the uncensored survival time $T$. ``CQR-cRF'' is short for the CQR-LPB with censored quantile regression forest; ``CQR-conTree'' and ``CDR-conTree'' are short for the CQR- and CDR-LPB with distribution boosting. The other abbreviations are the same as in Table~\ref{tab:params_1}.
  }
  \label{fig:sim1_marginal_coverage}
\end{figure}

Next, we investigate the conditional coverage and efficiency of these
methods.  In Figure~\ref{fig:sim1_conditional_coverage}(a), we plot
the empirical conditional coverage as a function of the conditional
variance of $T$ on $X$. In particular, we stratify the data into $10$
groups based on equispaced percentiles of $\mathrm{Var}(T\mid X)$ and
plot the average coverage within each stratum along with a $90\%$
confidence band obtained via repeated sampling. Note that in either
the homoscedastic or the heteroscedastic case, $\mathrm{Var}(T\mid X)$
is varying with $X$. Not surprisingly, the naive CQR is conditionally
conservative. In the univariate case, both the CQR- and CDR-LPB
approximately achieve desired conditional coverage; in the
multivariate case, the conditional coverage is slightly uneven, though
still concentrating around the target
line. Figure~\ref{fig:sim1_conditional_coverage}(b) presents the ratio
between the LPBs and the true $\alpha$-th conditional quantile as a
function of $\text{Var}(T \mid X)$. This is a measure of efficiency
since the true conditional quantile is the oracle LPB. Here, we
observe that naive CQR-LPBs are close to zero, confirming that they
are overly conservative, while the CQR- and CDR-LPBs are fairly close
to the oracle LPB, implying that both methods are relatively
efficient.

\begin{figure}[ht]
  \begin{minipage}{.5\linewidth}
    \centering
    \includegraphics[width = \textwidth]{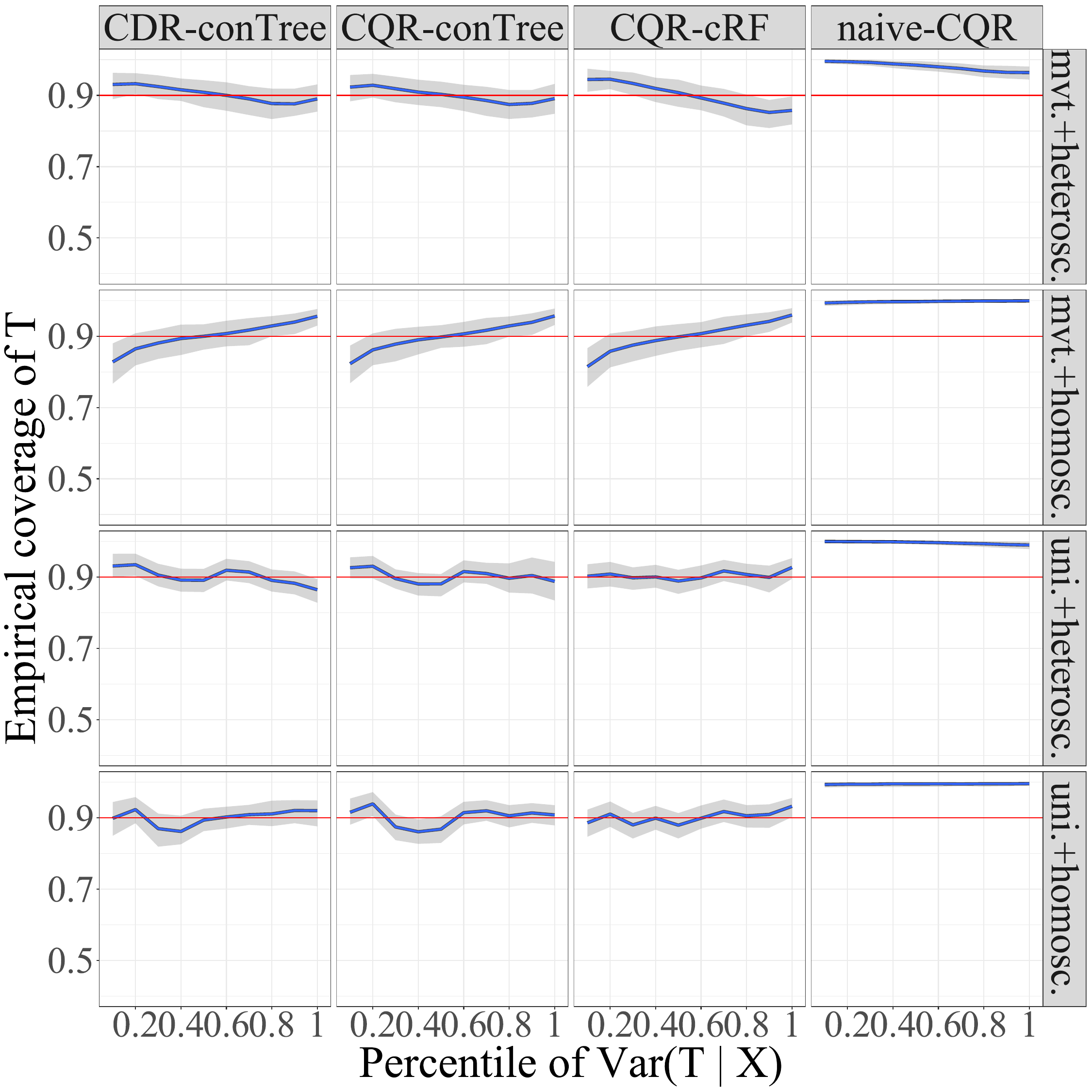}
    (a)
  \end{minipage}
  \hfill
  \begin{minipage}{.5\linewidth}
    \centering
    \includegraphics[width = \textwidth]{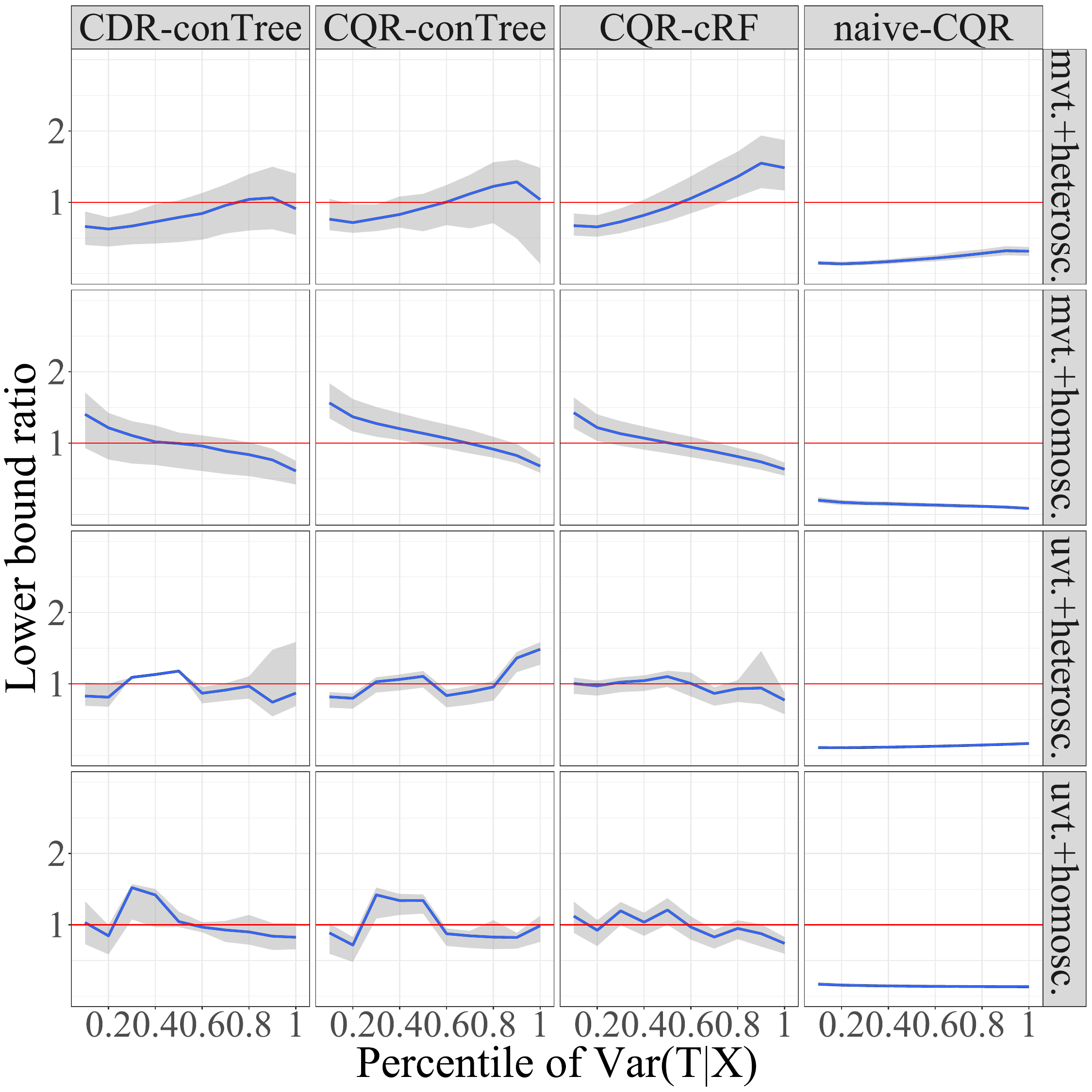}
    (b)
  \end{minipage}
  \caption{Results from the experiments detailed in Table \ref{tab:params_1}: (a) empirical $90\%$ conditional coverage and (b) ratio between the LPB and the theoretical quantile as a function of $\text{Var}(T\mid X)$. The blue curves correspond to the mean coverage in (a) and the median ratio in (b). The gray confidence bands correspond to the $95\%$ and $5\%$ quantiles of the estimates over repeated sampling. The abbreviations are the same as in Figure~\ref{fig:sim1_marginal_coverage}.}
    \label{fig:sim1_conditional_coverage}
\end{figure}


\section{Application to UK Biobank COVID-19 data}
\label{sec:application}
We apply our method to the UK Biobank COVID-19 dataset to demonstrate
robustness and practicability.  UK Biobank~\citep{bycroft2018uk} is a 
large-scale biomedical database and research resource, containing
in-depth genetic and health information from half a million UK
participants. In April 2020, UK Biobank started to release COVID-19
testing data, and has since continued to regularly provide
updates. This gives researchers access to a cohort of COVID-19
patients, along with their date of confirmation, survival status,
pre-existing conditions, and other demographic covariates.

We include in our analysis all individuals in UK Biobank who received
a positive COVID-19 test result before January 21st, 2021.  This
results in a dataset of size $n=14,\!861$ with $484$ events, defined
as a COVID-related death.  We extract eight covariate features,
namely, \emph{age, gender, body mass index (BMI), waist size,
  cardiovascular disease status, diabetes status, hypothyroidism
  status, and respiratory disease status}. As in
Section~\ref{sec:setup}, the censoring time is the time lapse between
the date of a positive test and January 21st, 2021. The survival time
is the time lapse between the date of a positive test and the event
(which may have yet to occur).

We wish to harness this data to produce an LPB on the survival time of
each COVID-19 patient. To apply the CQR- or CDR-LPB, we set the
threshold $c_0$ to be $14$ days. Since survival time assessment likely
informs high-stakes decision-making, we set the target level to $99\%$
for reliability.

\subsection{Semi-synthetic examples}

To demonstrate robustness, we start our analysis with two
semi-synthetic examples so that the ground truth is known and
calibration can be assessed (results on real outcomes are presented
next). We keep the covariate matrix $X$ from the UK Biobank COVID-19
data.  In the first simulation study, we substitute the censoring time
with a synthetic $C$. In the second, each
survival time, observed or not, is substituted with a synthetic version. Details follow:  
\begin{itemize}
 \item {\em Synthetic $C$}: we take the censored survival time $\tT$
   as the uncensored survival time and generate the censoring time
   $C_{\text{syn}}$ as
   \begin{align*}
      C_{\text{syn}}  \sim \calE(0.001\cdot \text{age} + 0.01\cdot \text{gender}).
   \end{align*}
   In this setting, the observables are $(X, C_{\text{syn}}, \tT \wedge C_{\text{syn}})$, and
   we wish to construct LPBs on $\tT$.
\item {\em Synthetic $T$}: we keep the real censoring time $C$, and
  generate a survival time $T_{\text{syn}}$ as: 
   \begin{align*}
     \log{T_{\text{syn}}} \mid X \sim \calN(2 + 0.05\cdot \text{age} + 0.1 \cdot\text{gender}, 1). 
   \end{align*}
   In this setting, the observables are $(X, C, T_{\text{syn}}\wedge C)$, and we
   wish to construct LPBs on $T_{\text{syn}}$.
\end{itemize}

\begin{figure}[ht]
  \centering
  \includegraphics[width = 0.8\textwidth]{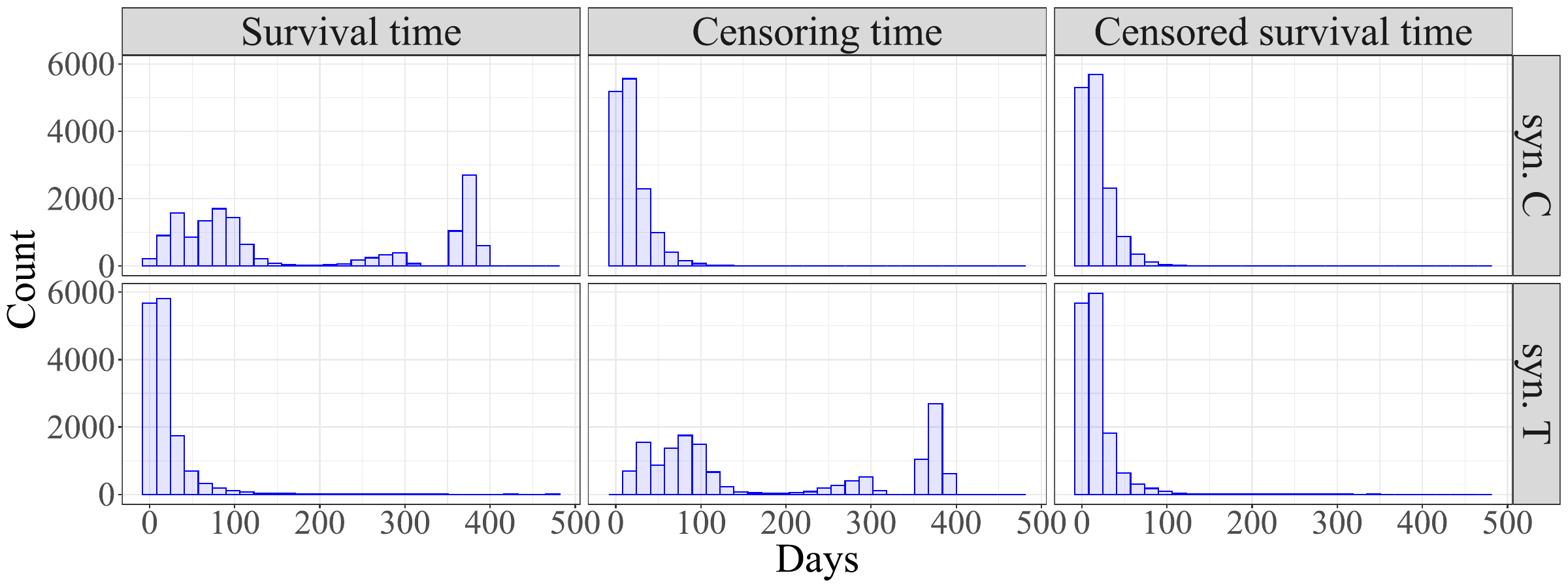}
  \caption{Histograms of the survival time, censoring time, and
    censored survival time defined as the minimum between the two, in
    each simulation setting. 
  }
  \label{fig:histogram_synthetic_data}
\end{figure}

Figure~\ref{fig:histogram_synthetic_data} shows the histograms of the
survival time, censoring time, and censored survival time from the two
simulated datasets. We apply the CDR-LPB (with $c_0 = 14$) to
both. For comparison, we also apply the AFT and naive CQR. To evaluate
the LPBs, we randomly split the data into a training set with $75\%$
of the data and a holdout set with the remaining $25\%$. Each method
is applied to the training set, and the resulting LPBs are evaluated
on the holdout set. We repeat the above procedure $100$ times to
create 100 pairs of training and test data sets.

To visualize conditional calibration, we fit a Cox model on the data
to generate a predicted risk score for each unit and stratify all
units into $10$ subgroups defined by deciles of the predicted
risk. The results for synthetic $C$ and $T$ are plotted in
Figures~\ref{fig:synthetic_c} and~\ref{fig:synthetic_t},
respectively. 
As in the simulation studies from Section \ref{sec:sim}, we see that
the naive CQR is overly conservative. Notably, although the AFT-LPB is
well calibrated in the synthetic-$C$ setting, this method is overly conservative in the synthetic-$T$ setting, even though the model
is correctly specified. In contrast, the CDR-LPB is calibrated in both
examples. From the middle panels of Figures \ref{fig:synthetic_c} and
\ref{fig:synthetic_t}, we also observe that the CDR-LPB is
approximately conditionally calibrated. Finally, the right panels show
that CDR-LPB nearly preserves the rank of the predicted risk given by
the Cox model. The flat portion of the LPB towards the left end corresponds to the threshold, implying that at least $99\%$ of people with predicted risk scores lower than $0.5$ can survive beyond $14$ days. 
\begin{figure}[ht]
  \centering
  \begin{minipage}{0.3\textwidth}
    \includegraphics[width = \textwidth]{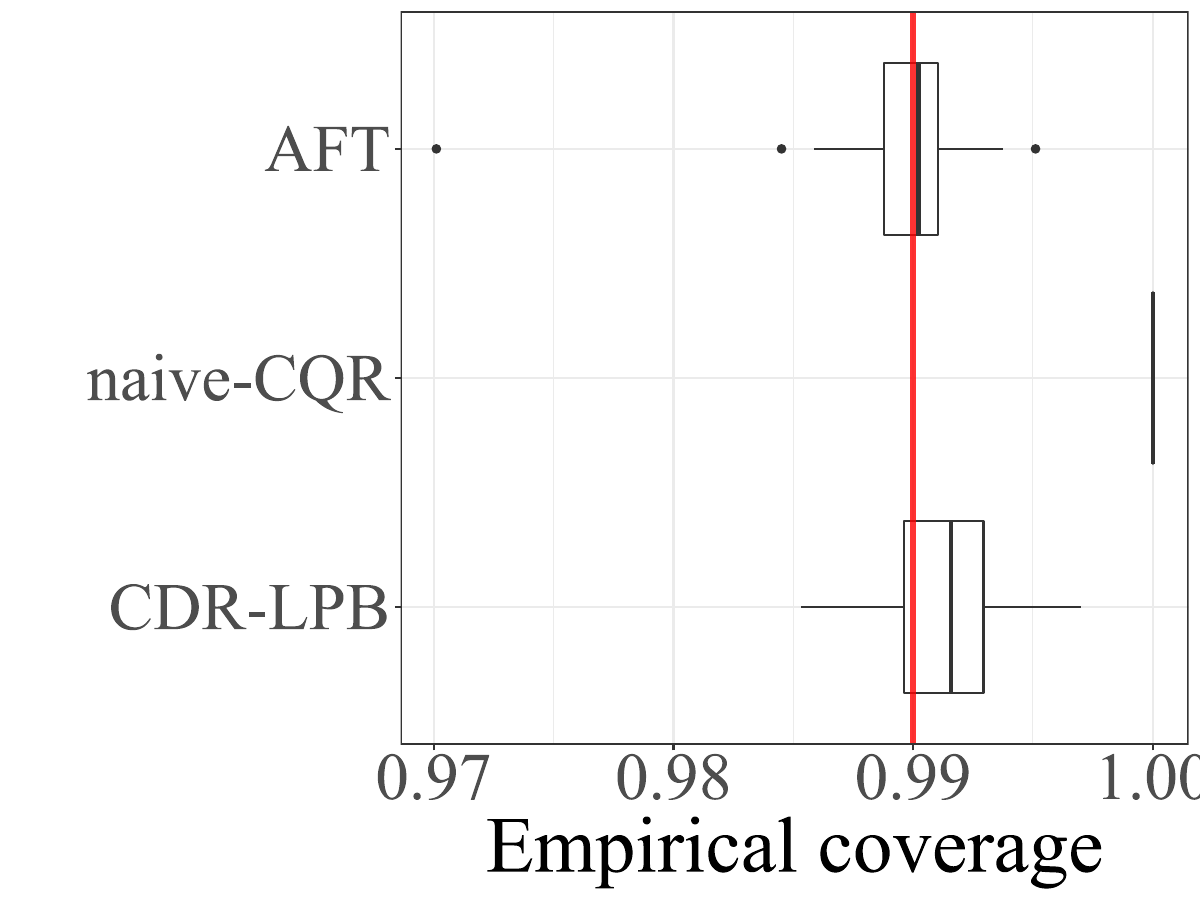}
  \end{minipage}
  \hfill
  \begin{minipage}{0.3\textwidth}
    \includegraphics[width =\textwidth]{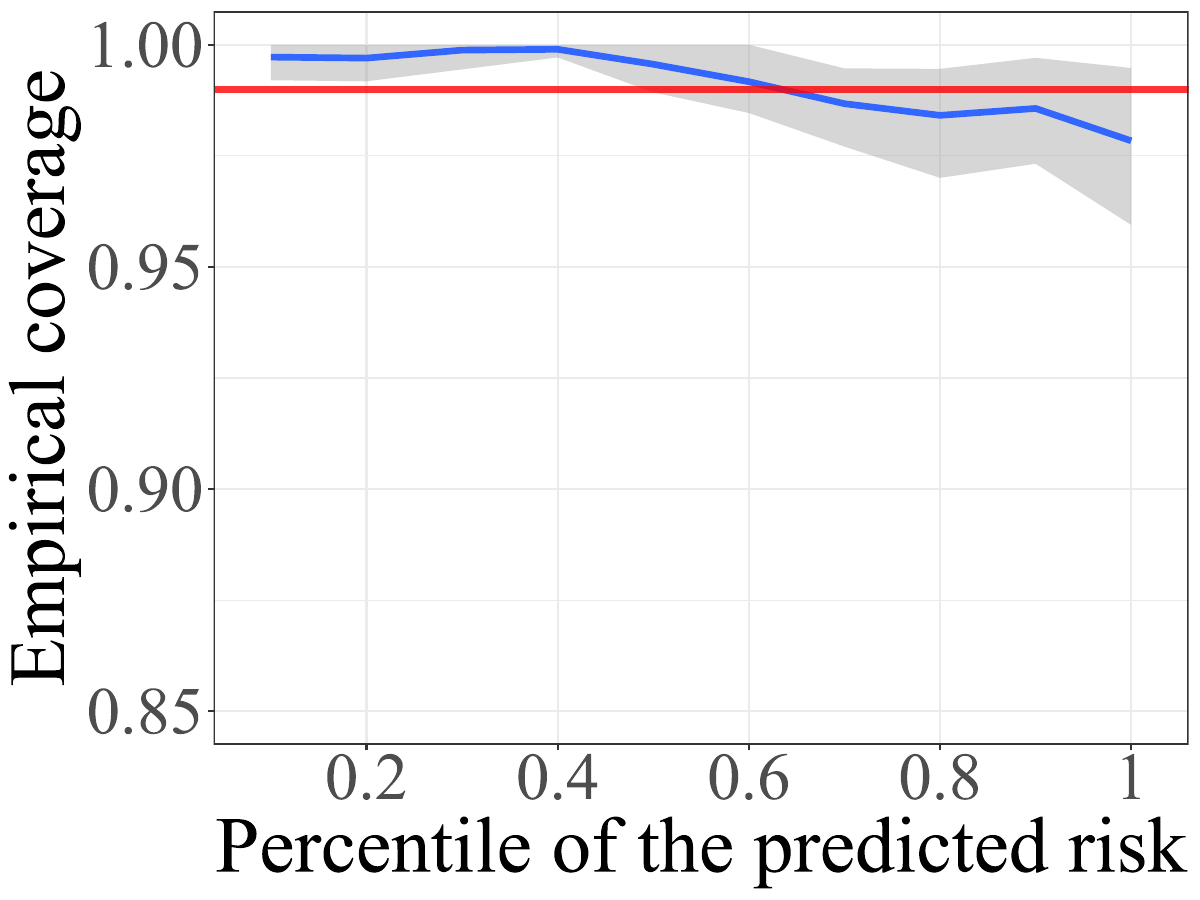}
  \end{minipage}
  \hfill
  \begin{minipage}{0.3\textwidth}
    \includegraphics[width = \textwidth]{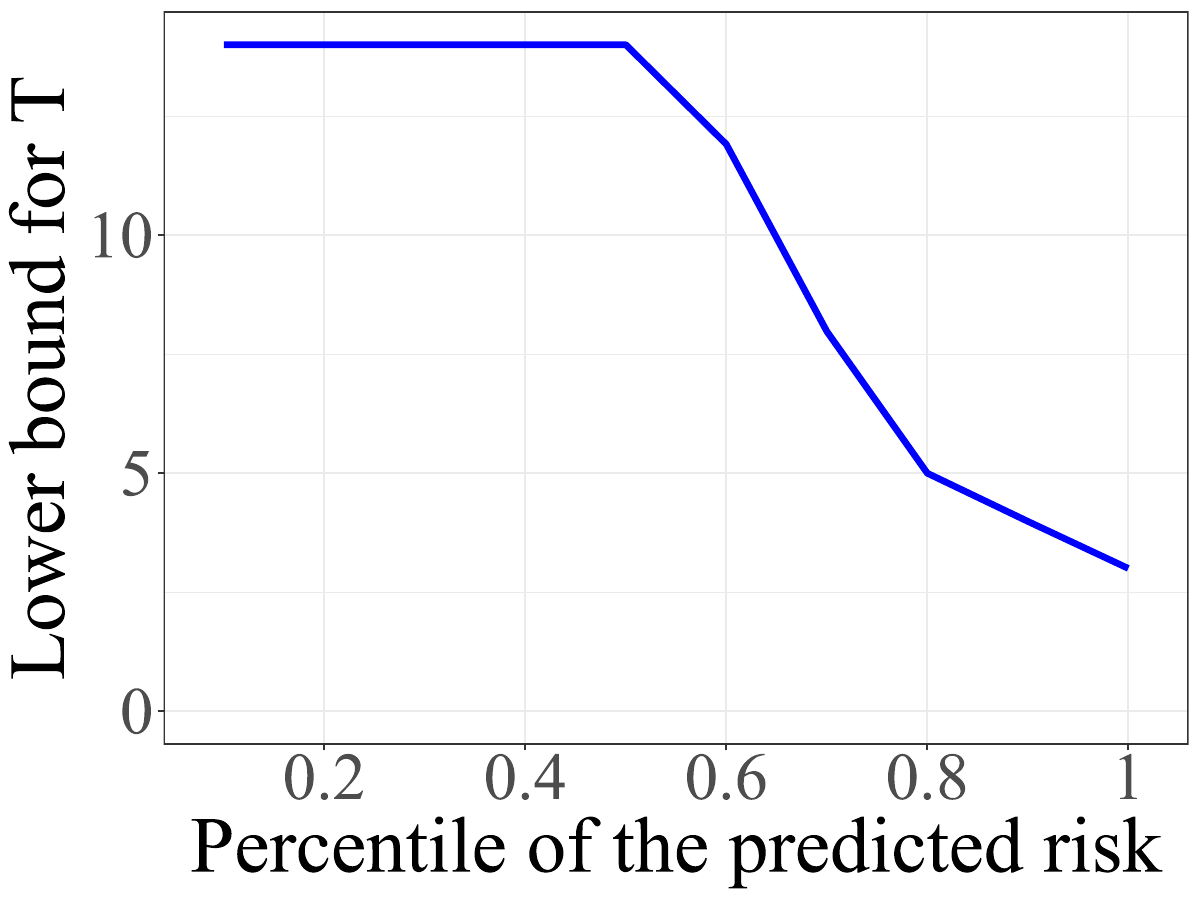}
  \end{minipage}
  \caption{Results for synthetic censoring times across 100
    replications: empirical coverage (left), empirical conditional
    coverage of the CDR-LPB (middle), and CDR-LPB as a function of
    the percentile of the predicted risk (right).  The target coverage level
    is $99\%$. The blue curves correspond to the mean coverage in the
    middle panel and the median LPB in the right panel; the gray
    confidence bands correspond to the $5\%$ and $95\%$ quantiles of
    the estimates across $100$ independent replications.}
  \label{fig:synthetic_c}
\end{figure}

\begin{figure}[ht]
  \centering
  \begin{minipage}{0.3\textwidth}
    \includegraphics[width = \textwidth]{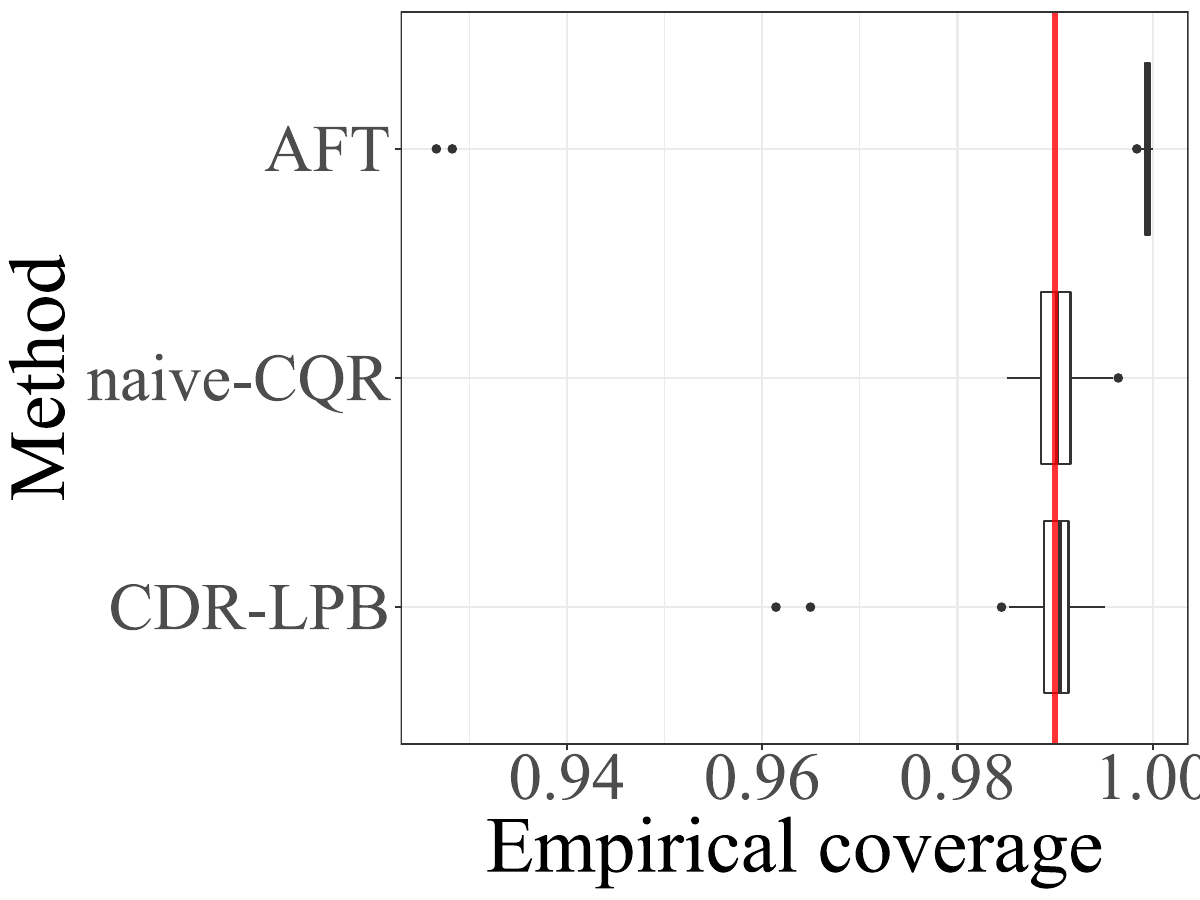}
  \end{minipage}
  \hfill
  \begin{minipage}{0.3\textwidth}
    \includegraphics[width =\textwidth]{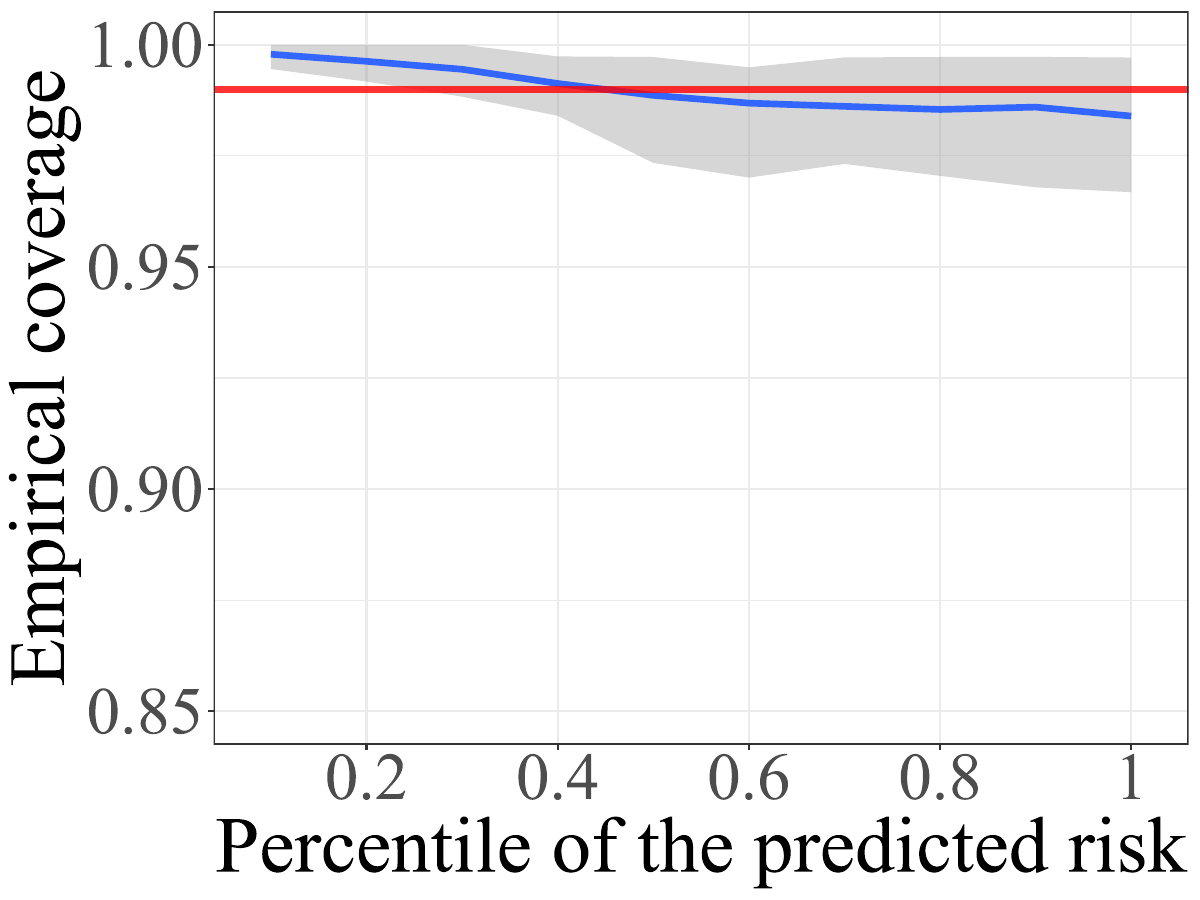}
  \end{minipage}
  \hfill
  \begin{minipage}{0.3\textwidth}
    \includegraphics[width = \textwidth]{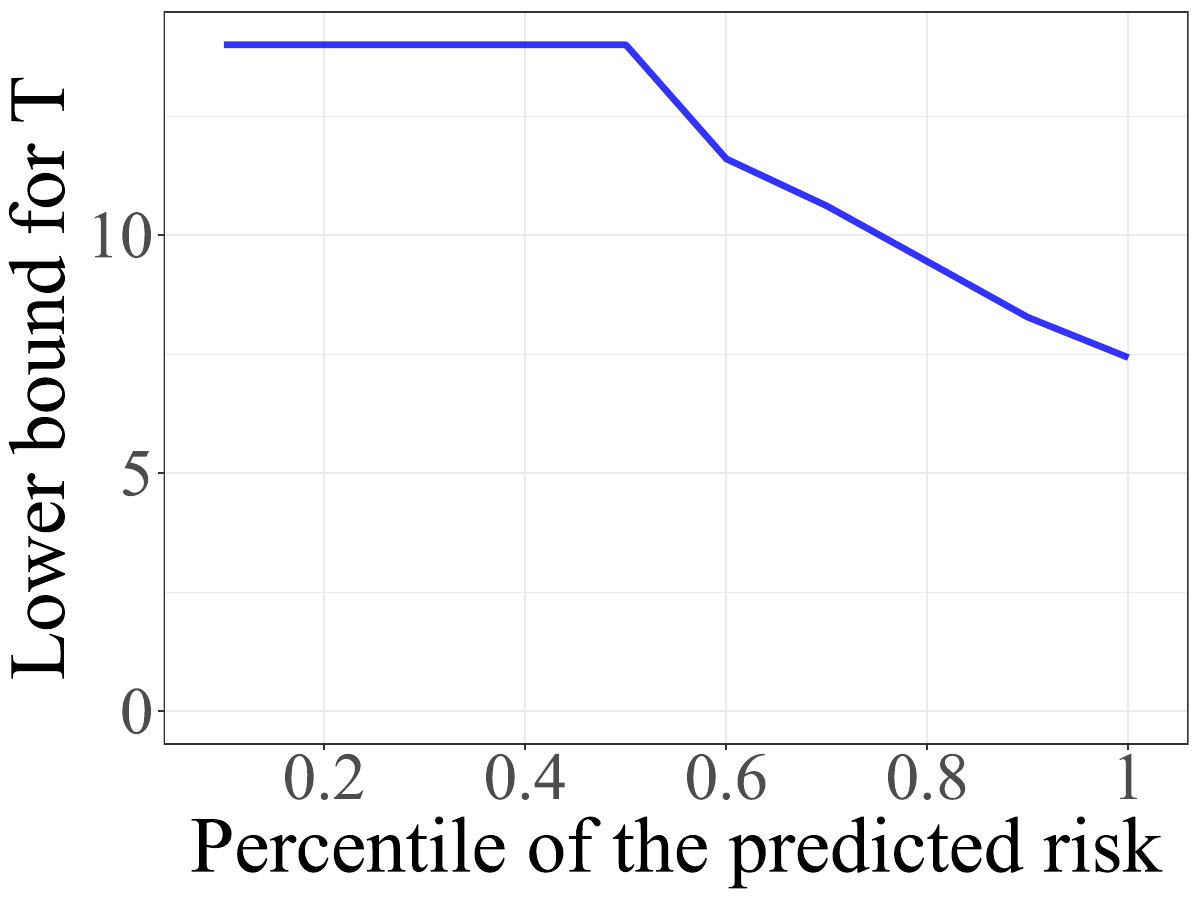}
  \end{minipage}
  \caption{Results for synthetic survival times: everything else is as
    in Figure~\ref{fig:synthetic_c}.}
  \label{fig:synthetic_t}
\end{figure}

\subsection{Real data analysis}

We now turn attention to actual COVID-19 responses. Again, we randomly
split the data into a training set including $75\%$ of data and a
holdout set including the remaining $25\%$.  Then we run the CDR on
the training set and validate the LPBs on the holdout set. The issue
is that the actual survival time is only partially observed, and thus,
the coverage of a given LPB cannot be assessed accurately (this is
precisely why we generated semi-synthetic responses in the previous
section.)  Nevertheless, we note that
\begin{align*}
  \beta_{\text{lo}} := \p\big( \tT \ge \hat{L}(X) \big)  \le  \p\big(T \ge \hat{L}(X)\big) \le 1 - \p \big( \tT < \hat{L}(X), T \le C\big) =: \beta_{\text{hi}},
\end{align*}
where both $\beta_{\text{lo}}$ and $\beta_{\text{hi}}$ are estimable
from the data. This says that we can assess the marginal coverage of
the LPBs by evaluating a lower and upper bound on the
coverage. Of course, this extends to conditional
coverage.

To assess the stability, we evaluate our method on $100$ independent sample splits. Figure~\ref{fig:bounds_of_coverage} presents the empirical lower and
upper bound of the marginal coverage and those of the conditional
coverage as functions of the predicted risk (as in the semi-synthetic
examples), together with their variability across $100$ sample splits. The left panel shows that the upper bound is very close to
the lower bound, and both concentrate around the target level. Thus we
can be assured that the CDR-LPB is well calibrated. Similarly, the
other panels show that the CDR-LPB is approximately conditionally
calibrated.  We conclude this section by showing in
Figure~\ref{fig:covid_conditional_bound} the LPBs as functions of the
percentiles of the predicted risk, age, and BMI, respectively.

\begin{figure}[ht]
  \centering
  \begin{minipage}{0.3\textwidth}
  \centering
  \includegraphics[width = \textwidth]{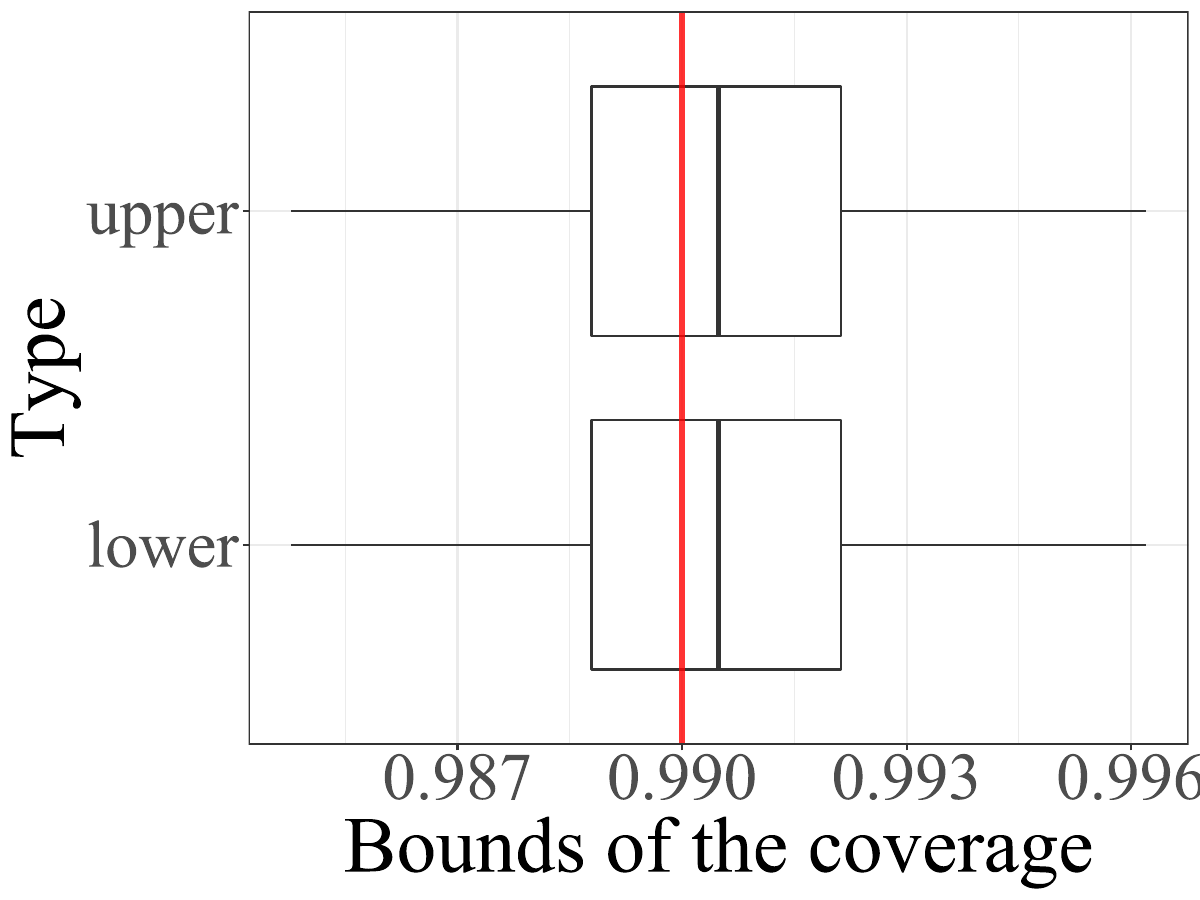}
  (a)
  \end{minipage}
  \hfill
  \begin{minipage}{0.3\textwidth}
  \centering
  \includegraphics[width = \textwidth]{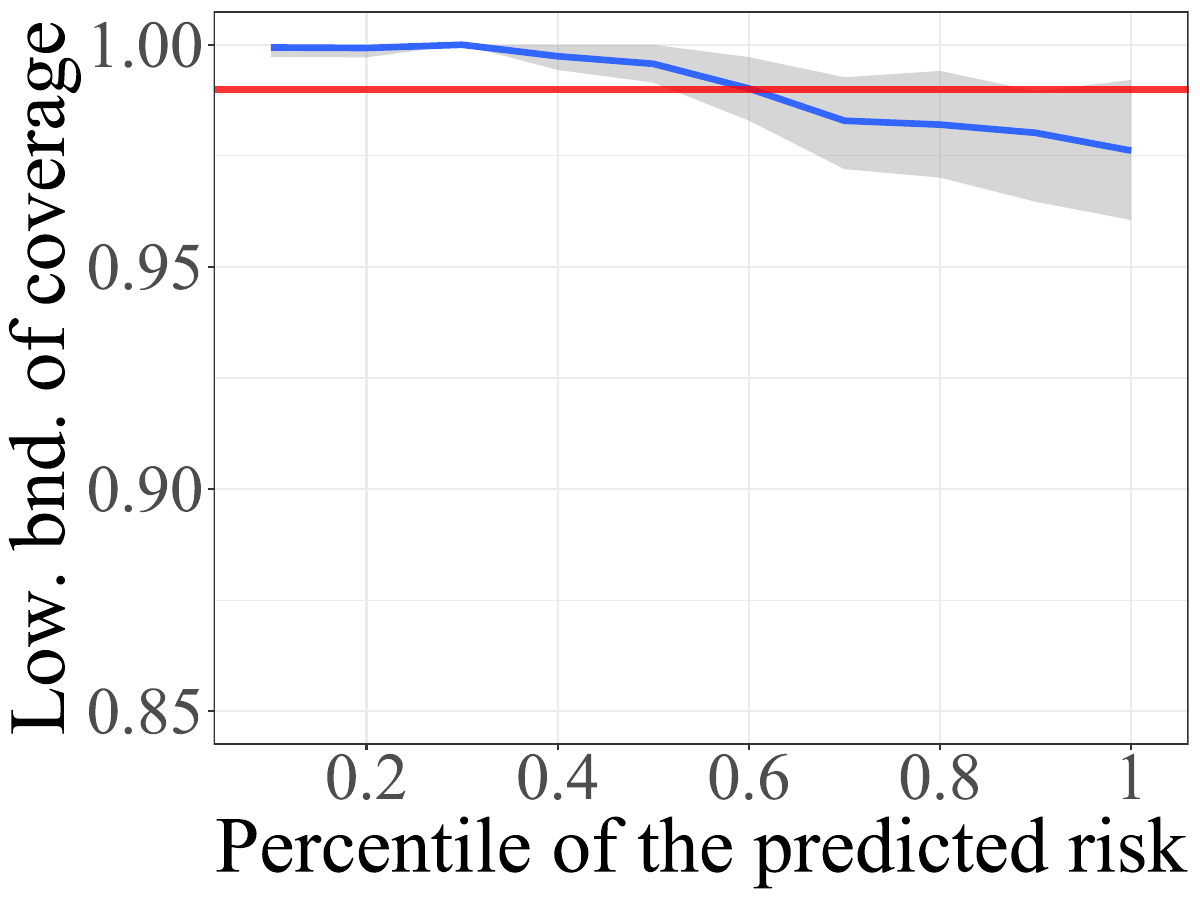}
  (b)
  \end{minipage}
  \hfill
  \begin{minipage}{0.3\textwidth}
  \centering
  \includegraphics[width = \textwidth]{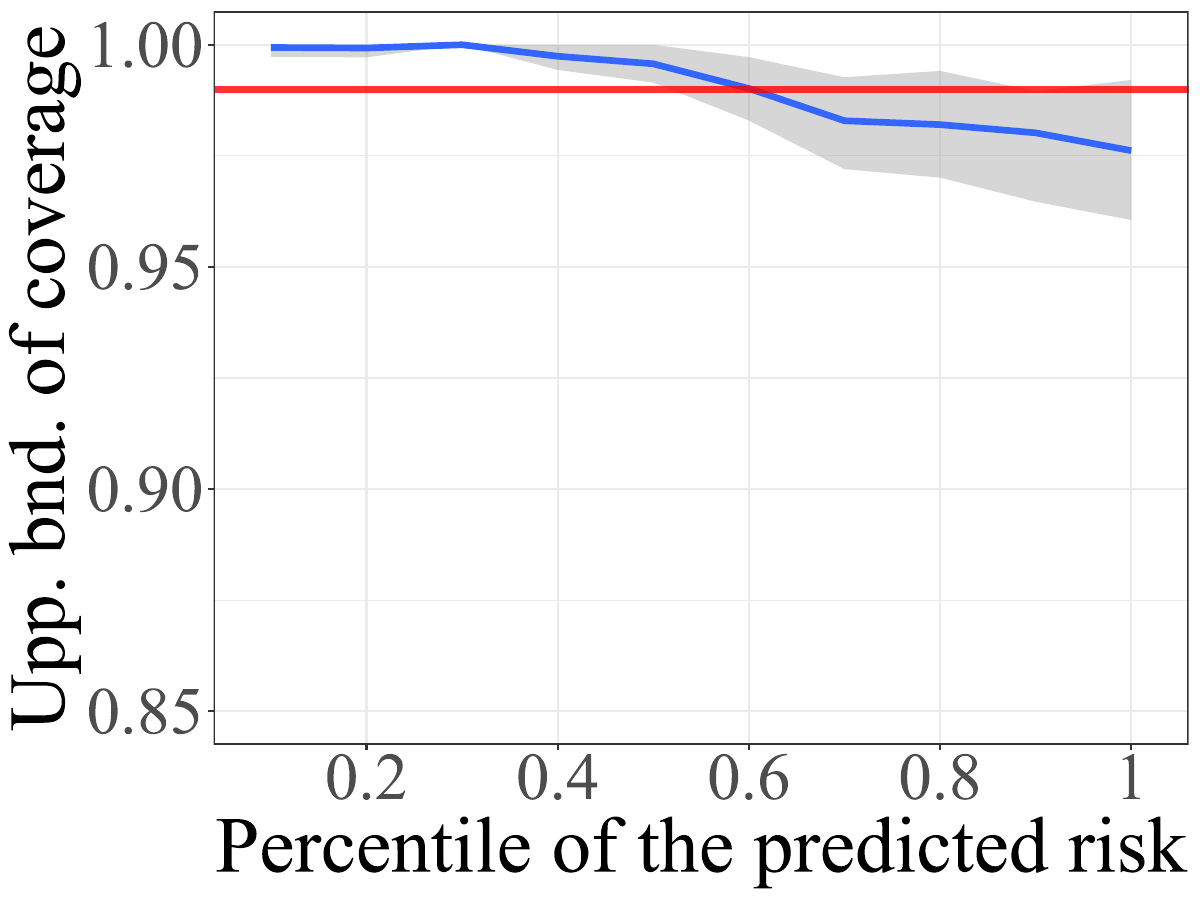}
  (c)
  \end{minipage}
  \caption{Analysis of the UK Biobank COVID-19 data: (a) lower and upper bounds of the empirical coverage;
    (b) lower and (c) upper bounds of empirical coverage as a function of the predicted risk. The target coverage level is $99\%$. The blue 
           curves correspond to the mean coverage, and the gray confidence bands correspond to the $5\%$ and $95\%$ quantiles of 
         the estimates across $100$ sample splits. 
       } 
  \label{fig:bounds_of_coverage}
\end{figure}

\begin{figure}[ht]
  \centering
  \begin{minipage}{0.3\textwidth}
    \includegraphics[width = \textwidth]{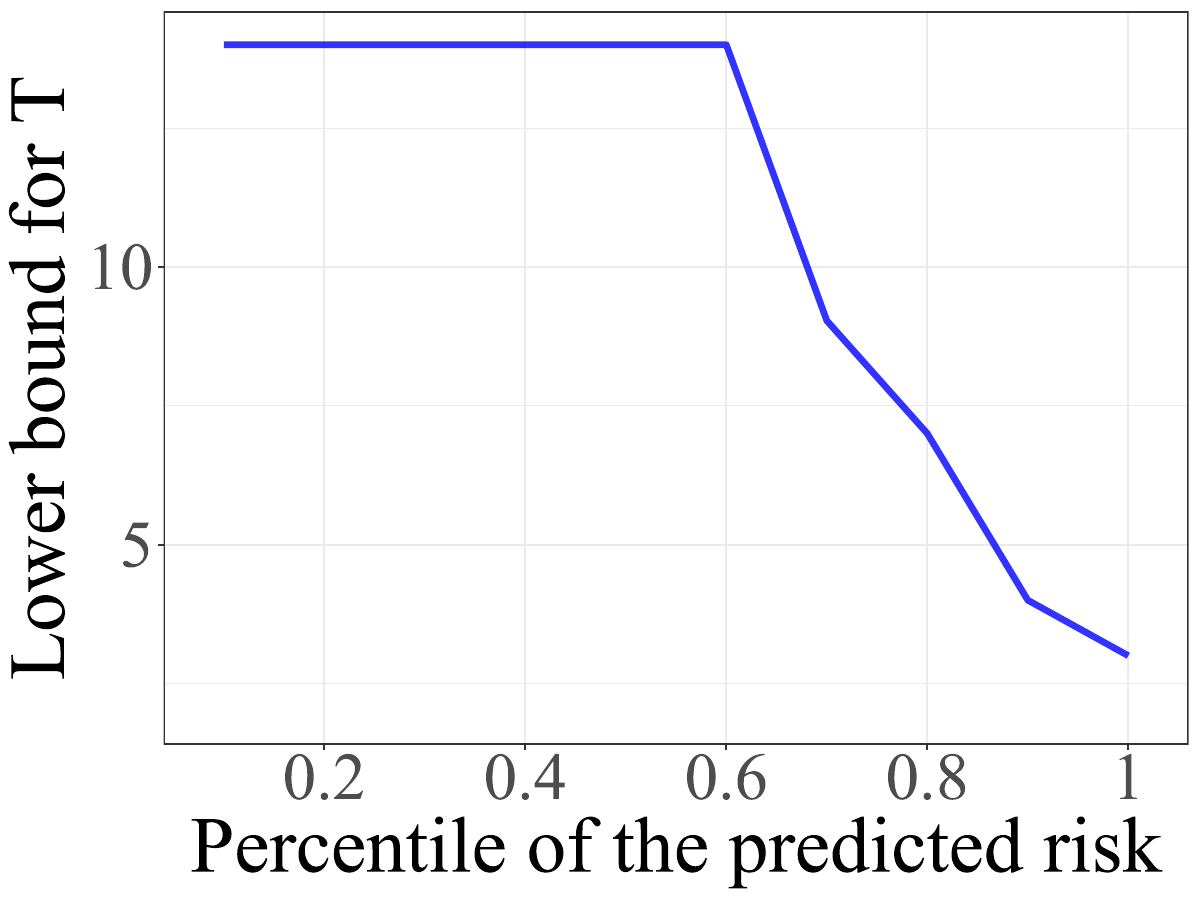}
  \end{minipage}
  \begin{minipage}{0.3\textwidth}
    \includegraphics[width = \textwidth]{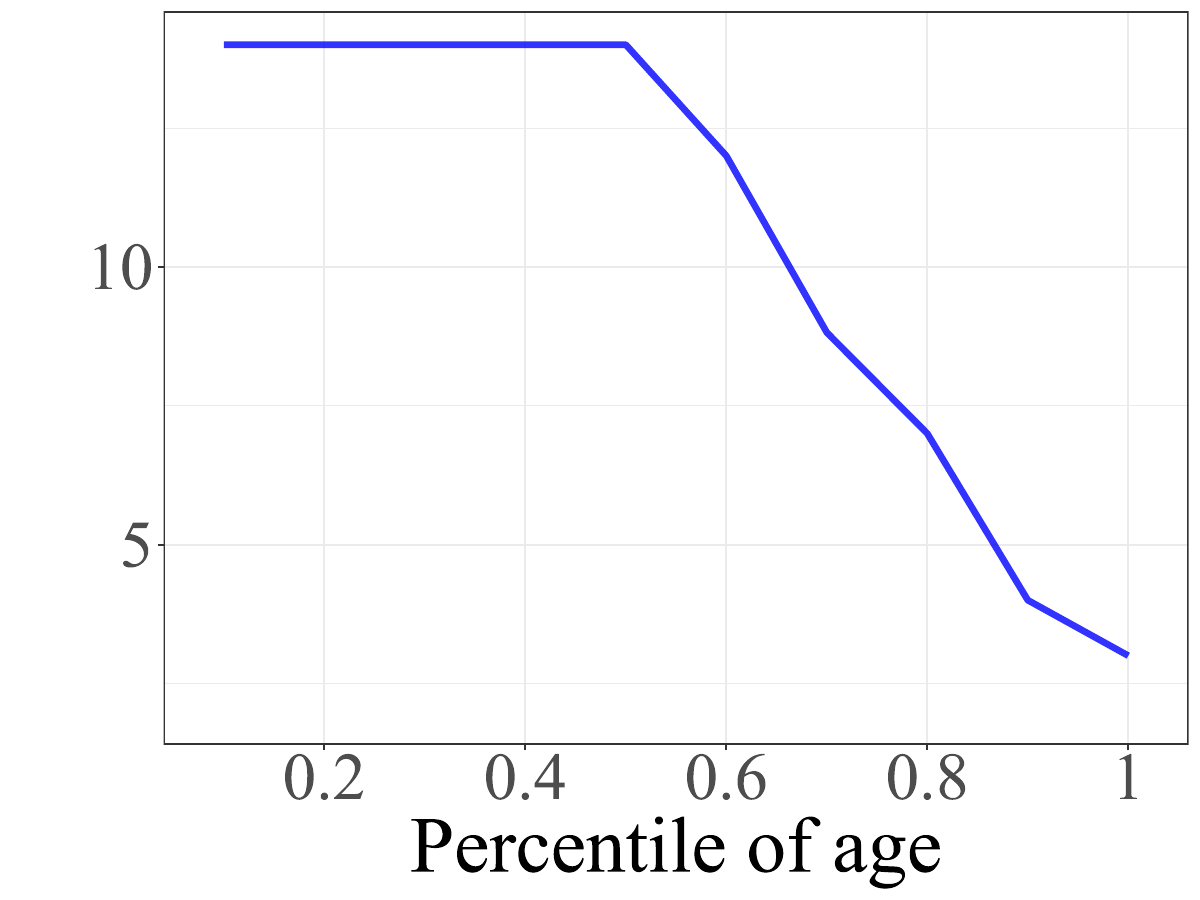}
  \end{minipage}
  \begin{minipage}{0.3\textwidth}
    \includegraphics[width = \textwidth]{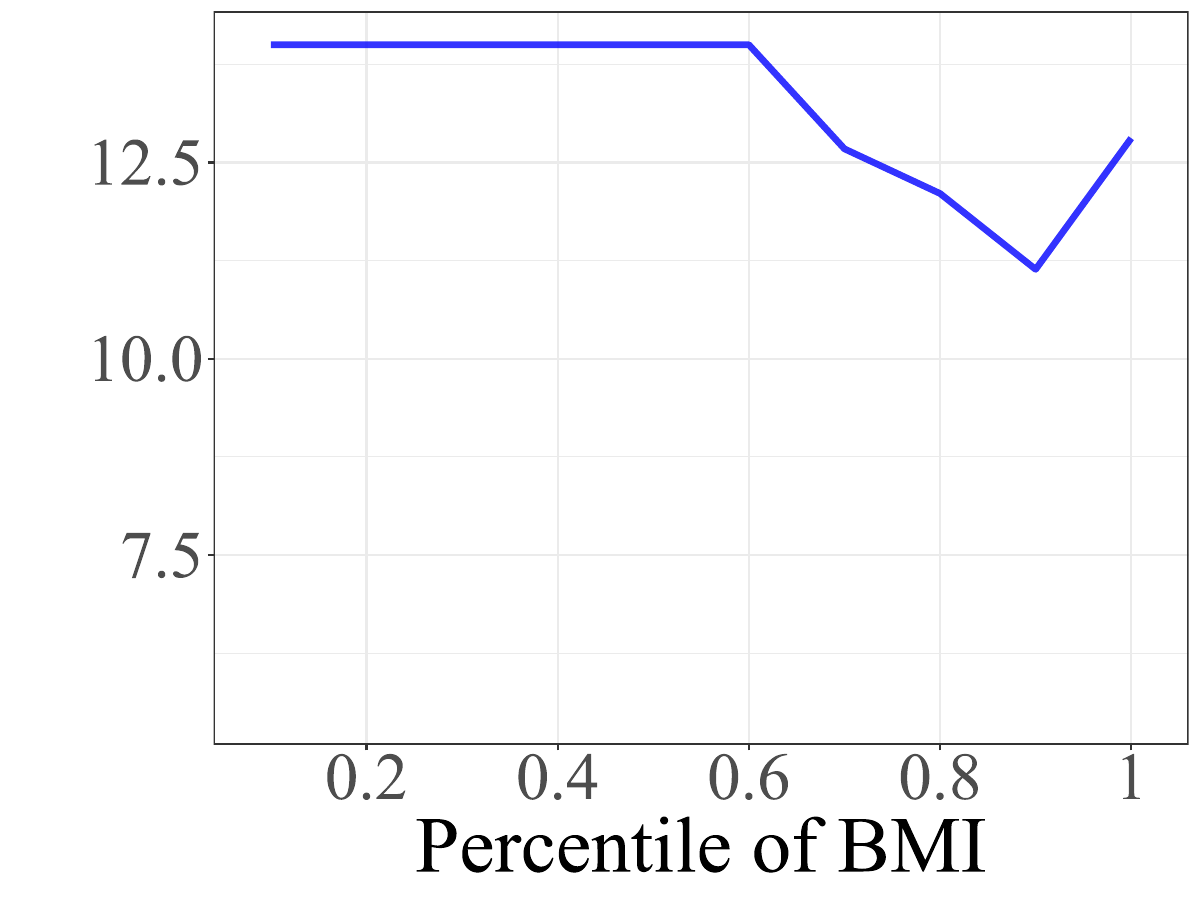}
  \end{minipage}
  \caption{Analysis of the UK Biobank COVID-19 data: LPBs on the survival time of COVID-19 patients as a function of the 
    percentiles of predicted risk (left), age (middle) and BMI (right). The target coverage level is $99\%$. The blue 
  curves correspond to the median LPB across $100$ sample splits. }
  \label{fig:covid_conditional_bound}
\end{figure}


\section{Discussion and extensions}
\label{sec:discussion}
\subsection{Beyond Type-I censoring}\label{subsec:beyond}
In practice, censoring can be driven by multiple factors. As discussed
in \cite{leung1997censoring}, the two most common types of right
censoring in a clinical study are the end-of-study censoring caused by
the trial termination and the loss-to-follow-up censoring caused by
unexpected attrition; \revise{see also \cite{korn1986censoring} and
  \cite{schemper1996note} for an account of the two types of
  censoring}. Let $\Cend$ denote the former and $\Closs$ the
latter. By definition, $\Cend$ is observable for every patient, as
long as the entry times are accurately recorded. \revise{When the
  event is not death (e.g., the patient's returning visit), $\Closs$
  is observable if all patients are tracked until
  the end of the study. } However, when the event is death, $\Closs$
can only be observed for surviving patients. This is because for dead
patients, it is impossible to know when they would have been lost to
follow-up, had they survived.

In survival analysis without loss-to-follow-up censoring, or
time-to-event analysis with non-death events, the setting of Type I
censoring considered in this paper is plausible. However, it is found
that both the end-of-study and loss-to-follow-up censoring are
involved in many applications \citep{leung1997censoring}. In these
cases, the effective censoring time $C$ is the minimum of $\Cend$ and
$\Closs$, and is only observable for surviving patients, namely the
patients with $T > C$. This situation prevents us from applying
Algorithm \ref{algo:weighted_split} below because the subpopulation
with $C\ge c_0$ is not fully observed. If we use the subpopulation
whose $C$ is 1) observed and 2) larger than or equal to a threshold
$c_0$ instead, then the joint distribution of $(X, T)$ becomes
$P_{X\mid C\ge c_0, T > C}\times P_{T\mid X, C\ge c_0, T > C}$. The
extra conditioning event $T > C$ induces a shift of the conditional
distribution, since
$P_{T\mid X, C\ge c_0, T > C}\neq P_{T\mid X, C\ge c_0}$ in general,
rendering the weighted split conformal inference invalid.

Our method can nevertheless be adapted to yield meaningful inference
under an additional assumption:
\begin{equation}
  \label{eq:Cend}
  (T, \Closs) ~ \indep ~ \Cend\mid X.
\end{equation}
\revise{
Unlike \cite{korn1986censoring} and \cite{schemper1996note},
\eqref{eq:Cend} does not impose any restrictions on the dependence
between $T$ and $\Closs$, which is harder to conceptualize.} The
assumption \eqref{eq:Cend} tends to be plausible, \revise{especially
  when the total length of follow-up is short,} since the randomness
of the end-of-study censoring time often comes from the entry time of
a patient, which is arguably exogenous to the survival time and
attrition, at least when conditioning on a few demographic
variables. \revise{There are certain cases where \eqref{eq:Cend} could
  be violated. For example, if new treatments become available during
  the course of a study, subjects who enter later are different from
  those who enter earlier as they could have been given the
  alternative treatments, but were not.}

Let $T' = T\wedge \Closs$, the survival time censored merely by the loss to follow-up. Then the censored survival time $\td{T} = T \wedge C = T'\wedge \Cend$, and \eqref{eq:Cend} implies that $T' \indep \Cend\mid X$, an analog of the \condIC~\eqref{eq:conditional_independent_censoring}. Since $\Cend$ is observed for every patient, Algorithm \ref{algo:weighted_split} can be applied to produce an LPB $\hat{L}(\cdot)$ such that
\[\p(T' \ge \hat{L}(X))\ge 1 - \alpha\Longrightarrow \p(T \ge \hat{L}(X))\ge 1 - \alpha.\]

\revise{
  In Section~\ref{sec:add_sim} of the Appendix, 
  we provide an additional simulation illustrating
  the result of our method in this setting.
  An observation in conjunction with this line of reasoning is
  that, unlike most survival analysis techniques, our method
  distinguishes two sources of censoring and takes advantage of the
  censoring mechanism itself.}
It can be regarded as a building block to remove the
adverse effect of $\Cend$. It remains an interesting question whether
the censoring issue induced by $\Closs$ can be resolved or alleviated in this context.

\subsection{Sharper coverage criteria}\label{subsec:localized}

It is more desirable to achieve a stronger conditional coverage
criterion:
\begin{align}
  \label{eq:conditional_criterion}
  \p\left(T\ge \hat{L}(X)\mid X = x\right) \ge 1-\alpha,
\end{align}
which states that $\hat{L}(X)$ is a conditionally calibrated LPB. Clearly, \eqref{eq:conditional_criterion} implies valid marginal coverage. Theorem \ref{thm:double_robustness_CQR} and \ref{thm:double_robustness_CDR} show that the CQR- and CDR-LPB are approximately conditionally calibrated if the conditional quantiles are estimated well. However, without distributional assumptions, we can show that \eqref{eq:conditional_criterion} can only be achieved by trivial LPBs. 
\begin{theorem}\label{thm:conditional_coverage}
  Assume that $X\in \R^{p}$ and $C\ge 0, T\ge 0$. Let $P_{(X, C)}$ be
  any given distribution of $(X, C)$. If $\hat{L}(\cdot)$ satisfies
  \eqref{eq:conditional_criterion} uniformly for all joint
  distributions of $(X, C, T)$ with $(X, C)\sim P_{(X, C)}$, 
  then for all such
  distributions,
  \[\p(\hat{L}(x) = 0) \ge 1 - \alpha,\]
  at almost surely all points $x$ aside from the atoms of $P_X$.
\end{theorem}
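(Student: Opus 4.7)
The plan is to argue by contradiction via a likelihood-perturbation and TV-distance coupling: if $\hat L(\cdot)$ could be positive with nontrivial probability on a non-atomic region of covariate space, then we could perturb $T\mid X,C$ on a tiny set of covariates to force $T\equiv 0$ there, and conditional coverage on that region would collapse $\hat L$ to be $\le 0$ almost everywhere there, contradicting the original behavior by a coupling argument.

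First, I would observe that because $T\ge 0$, we may assume $\hat L\ge 0$ pointwise (replacing $\hat L$ by $\max(\hat L,0)$ only enlarges the LPB and preserves calibration), so the target claim $\p(\hat L(x)=0)\ge 1-\alpha$ reduces to $\p(\hat L(x)\le 0)\ge 1-\alpha$. Next, suppose the conclusion fails: there exist $\delta>0$ and a set $S\subseteq\R^p$, disjoint from the atoms of $P_X$, with $P_X(S)>0$ and
\[
\p_P\bigl(\hat L(x)>0\bigr)>\alpha+\delta\quad\text{for all }x\in S.
\]
Because $S$ avoids atoms of $P_X$ and has positive mass, by the non-atomic splitting property (Sierpi\'nski's theorem) we may extract a subset $A\subseteq S$ with $0<P_X(A)<\delta/(2n)$.

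The key construction: build a new joint distribution $P_A$ on $(X,C,T)$ that has the same $(X,C)$-marginal as $P$ and agrees with $P$ on $T\mid (X,C)$ off the event $\{X\in A\}$, while setting $T=0$ almost surely on $\{X\in A\}$. Then $P_A$ belongs to the same class (the marginal $P_{(X,C)}$ is unchanged), so the hypothesized conditional coverage applies to it. The laws $P$ and $P_A$ differ only when $X\in A$, hence $\dtv(P,P_A)\le P_X(A)$, and tensorization gives
\[
\dtv\bigl(P^{\otimes n},P_A^{\otimes n}\bigr)\le n\,P_X(A)<\delta/2.
\]
Since $\hat L(x_0)$ is a (possibly randomized) function of the $n$ training points only, for every fixed $x_0$ we get
\[
\p_{P_A}\bigl(\hat L(x_0)>0\bigr)\ge \p_P\bigl(\hat L(x_0)>0\bigr)-\delta/2>\alpha+\delta/2.
\]

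To close the contradiction, invoke the conditional coverage hypothesis under $P_A$: for $P_X$-a.e.\ $x\in A$ we have $\p_{P_A}(T=0\mid X=x)=1$ by construction, so $\p_{P_A}(\hat L(x)\le 0\mid X=x)\ge 1-\alpha$, equivalently $\p_{P_A}(\hat L(x)>0)\le \alpha$. This contradicts the preceding display for $P_X$-a.e.\ $x\in A$, finishing the argument. The main obstacle is precisely the interplay between the ``almost sure'' qualifier on the conditional coverage and the pointwise statement required for the TV coupling: it is handled by choosing $A$ inside the non-atomic part (so the conditional distribution $P_A(T\mid X=x)=\delta_0$ genuinely holds on a full-$P_X$ subset of $A$) and by arranging $P_X(A)$ small enough to absorb the $\delta$-gap, which is where the absence of atoms at $x$ is used.
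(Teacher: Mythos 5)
Your proof is correct and takes essentially the same route as the paper: perturb the conditional law of $T$ given $X$ to a point mass on a small non-atomic set, carry the positivity of $\hat L$ forward to the perturbed distribution via a total-variation coupling, and contradict the conditional coverage guarantee under the perturbed law. The only cosmetic differences are that you use the subadditivity bound $\dtv(P^{\otimes n},P_A^{\otimes n})\le n\,\dtv(P,P_A)$ where the paper uses $\sqrt{2-2(1-\dtv(P,P_A))^n}$, and you place the perturbed mass at $0$ after the $\max(\hat L,0)$ reduction, whereas the paper picks thresholds $0<t_1<t_0$ with $\p(\hat L(x)>t_0)>\alpha+\delta$ and places the mass at $t_1$.
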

Theorem \ref{thm:conditional_coverage} implies that no nontrivial LPB exists even if the distribution of $(X, C)$ is known. Put another way, it is impossible to achieve desired conditional coverage while being agnostic to the conditional survival function. This impossibility result is inspired by previous works on uncensored outcomes and two-sided intervals \citep{vovk2012conditional, foygel2019limits}. 

It is valuable to find other achievable coverage criteria which are
sharper than the marginal coverage criterion
\eqref{eq:marginal_criterion}. Without censoring and covariate shift,
\cite{vovk2003mondrian} introduced Mondrian conformal inference to
achieve desired marginal coverage over multiple subpopulations. The
idea is further developed from different perspectives
\citep{vovk2012conditional, lei2013distribution, guan2019conformal,
  foygel2019limits, romano2019malice}. Given a partition of the
covariate space $\{\calX_1, \ldots, \calX_K\}$, Mondrian conformal
inference guarantees that
\[\p(Y\in \hat{C}(X)\mid X\in \calX_k) \ge 1 - \alpha, \quad k = 1, \ldots, K.\]
Mondrian conformal inference allows
  the subgroups to also depend on the outcome; see
  \cite{vovk2005algorithmic}, which refers to the rule of forming
  subgroups as a ``taxonomy.'' Besides, the subgroups can also be
  overlapping; see \cite{foygel2019limits}.
Following their techniques, we can extend Mondrian conformal inference to our case by modifying the calibration term $\eta(x)$ (in Algorithm \ref{algo:weighted_split}):
\begin{equation}
  \label{eq:mondrian_eta}
  \eta(x) = \quantile\left(1 - \alpha; \frac{\sum_{i\in\calI_{\text{ca}}, X_i\in \calX_k} 
    \hat{p}_i(x)\delta_{V_i} + \hat{p}_{\infty}(x)\delta_{\infty}}{\sum_{i\in\calI_{\text{ca}, X_i\in \calX_k}}\hat{p}_i(x) + \hat{p}_{\infty}(x)}\right), \quad \forall x\in \calX_k.
\end{equation}
Suppose $\calX_1$ and $\calX_2$ correspond to male and female
subpopulations. Then $\eta(x)$ is a function of both the testing point
$x$ and the gender. That said, estimation of censoring mechanisms and
conditional survival functions can still depend on the whole training
fold $\calZ_\tr$ as joint training may be more powerful than
separate training on each subpopulation \citep{romano2019malice}.

When the censoring mechanism is known, we can prove that
\begin{equation}
  \label{eq:mondrian_criterion}
  \p(T\wedge c_0\ge \hat{L}(X)\mid X\in \calX_k)\ge 1 - \alpha, \quad k = 1, \ldots, K.
\end{equation}
By the \condIC, the target distribution in the localized criterion \eqref{eq:mondrian_criterion} for a given $k$ can be rewritten as
\begin{equation}
  \label{eq:local_subpopulation_dist}
  (X, T\wedge c_0) \mid C\ge c_0, X\in \calX_k~\sim~P_{X\mid C\ge c_0, X\in \calX_k}\times P_{T\wedge c_0 \mid X}.
\end{equation}
The covariate shift between the observed and target distributions is
\[w_{k}(x) = \frac{d P_{X\mid C\ge c_0, X\in \calX_k}}{dP_{X}}(x) \propto \frac{I(x\in \calX_k)}{\p(C\ge c_0\mid X = x)}.\]
This justifies the calibration term \eqref{eq:mondrian_eta} in the weighted Mondrian conformal inference. Since the weighted Mondrian conformal inference is a special case of Algorithm \ref{algo:weighted_split}, it also enjoys the double robustness property, implied by Theorem \ref{thm:double_robustness} in Section \ref{app:double_robustness} in the Appendix.

\subsection{Survival counterfactual prediction}\label{subsec:causal}

The proposed method in this paper is designed for a single cohort. In practice, patients are often exposed to multiple conditions, and the goal is to predict the counterfactual survival times had the cohort been exposed to a different condition. For example, a clinical study typically involves a treatment group and a control group. For a new patient,  it is of interest to predict her survival time had she been assigned the treatment. For uncensored outcomes, \cite{lei2020conformal} proposed a method based on weighted conformal inference for counterfactual prediction under the potential outcome framework \citep{neyman1923application, rubin1974estimating}. We can extend their strategy to handle censored outcomes and apply it to the survival counterfactual prediction.

Suppose each patient has a pair of potential survival times
$(T(1), T(0))$, where $T(1)$ (resp. $T(0)$) denotes the survival time
had the patient been assigned into the treatment (resp. control)
group. Our goal is to construct a calibrated LPB on $T(1)$, given
i.i.d.~observations $(X_i, W_i, C_i, T_i)_{i=1}^{n}$ with $W_i$
denoting the treatment assignment and
\[T_i = \left\{
      \begin{array}{ll}
        T_i(1), & W_i = 1,\\
        T_i(0), & W_i = 0.
      \end{array}
\right.\]
Without further assumptions on the correlation structures between $T(1)$ and $T(0)$, it is natural to conduct inference based on the observed treated group since the control group contains no information about $T(1)$. The joint distribution of $(X, T(1)\wedge c_0)$ on this group becomes
\[(X, T(1)\wedge c_0) \mid C\ge c_0, W = 1~\sim~P_{X\mid C\ge c_0, W = 1}\times P_{T(1)\wedge c_0 \mid X, C\ge c_0, W = 1}.\]
Under the assumption that $(T(1), T(0))\indep (W, C) \mid X$, the conditional distribution of $T(1)\wedge c_0$ matches the target:
\[P_{T(1)\wedge c_0 \mid X, C\ge c_0, W = 1} = P_{T(1)\wedge c_0\mid X}.\]
The assumption is a combination of the strong ignorability assumption \citep{rubin1978bayesian}, a widely accepted starting point in causal inference, and the \condIC.
The density ratio of
the two covariate distributions can be characterized by
\[w(x) = \frac{dP_{X\mid C\ge c_0, W = 1}}{dP_{X}}(x)\propto \frac{1}{\p(C\ge c_0, W = 1\mid X = x)}.\]
In many applications, it is plausible to further assume that $C\indep W\mid X$. In this case,
\[\p(C\ge c_0, W = 1\mid X = x) = \p(C\ge c_0\mid X = x)\p(W = 1\mid X = x),\]
where the first term is the censoring mechanism and the second term is the propensity score \citep{rosenbaum1983central}. Therefore, we can obtain calibrated LPBs on counterfactual survival times if both the censoring mechanism and the propensity score are known. This assumption is often plausible for randomized clinical trials. Furthermore, it has a doubly robust guarantee of coverage that is similar to Theorems \ref{thm:double_robustness_CQR} and \ref{thm:double_robustness_CDR}.


\subsection*{Acknowledgment}

E. C. was supported by Office of Naval Research grant N00014-20-12157,
by the National Science Foundation grants OAC 1934578 and DMS 2032014,
by the Army Research Office (ARO) under grant W911NF-17-1-0304, and by
the Simons Foundation under award 814641. L.~L.~and Z.~R.~were
supported by the same NSF OAC grant and by the Discovery Innovation
Fund for Biomedical Data Sciences.  L.L.~was also supported by NIH
grant R01MH113078. The authors are grateful to Steven Goodman, Lucas Janson, Ying Jin, Yan Min,
Chiara Sabatti, Matteo Sesia, Lu Tian and Steve Yadlowsky for their
constructive feedback. This research has been conducted using the UK Biobank Resource under Application Number 27837.

\bibliographystyle{rss}
\bibliography{cfsurv}

\appendix
\numberwithin{equation}{section}
\numberwithin{theorem}{section}
\numberwithin{lemma}{section}
\section{Proofs of impossibility results}

\subsection{Proof of Theorem \ref{thm:distribution_free}}
Let $(X_i, C_i, T_i)_{i=1}^{n+1}\stackrel{i.i.d.}{\sim} (X, C, T)$. For notational convenience, we put $Z_i = (X_i, C_i, T_i)$. To avoid confusion, we expand $\hat{L}(x)$ into $\hat{L}(Z_1, \ldots, Z_n; x)$. Note that $\hat{L}$ depends on $(Z_1, \ldots, Z_n)$ through $(X_i, C_i, \tT_i)_{i=1}^{n}$. Let 
\[\phi(Z_1, \ldots, Z_{n+1}) = I(T_{n+1} < \hat{L}_{n}(Z_1, \ldots, Z_n; X_{n+1})).\]
Since $\hat{L}_n$ satisfies \eqref{eq:marginal_criterion} under the conditionally independent censoring assumption \eqref{eq:conditional_independent_censoring}, we have that
\[\p\left(\phi(Z_1, \ldots, Z_{n+1}) = 1\right) \le \alpha.\]
As a result, if we treat $T\indep C\mid X$ as a null hypothesis,
$\phi(Z_1, \ldots, Z_{n+1})$ is an $\alpha$-level test. Note that $X$
is continuous, and $(T, C)$ are continuous or discrete. By Theorem 2
and Remark 4 of \cite{shah2020hardness}, for any joint distribution
$Q$ of $Z$ with the same continuity conditions on $(X, C, T)$,
\begin{align}\label{eq:phiQ}
  \p_{Z_i\stackrel{i.i.d.}{\sim}Q}\left(\phi(Z_1, \ldots, Z_{n+1}) = 1\right) \le \alpha.
\end{align}
Let $\td{Z}_i = (X_i, C_i, \tT_i)$ and $Q$ denote its distribution. Then $\tT_i\wedge C_i = T_i\wedge C_i$ and thus
\[\hat{L}_{n}(\td{Z}_1, \ldots, \td{Z}_n; x) = \hat{L}_{n}(Z_1, \ldots, Z_n; x).\]
Clearly, $X$ is absolutely continuous with respect to the Lebesgue measure and $\tT, C$ are absolutely continuous with respect to the Lebesgue measure or the counting measure. 
By \eqref{eq:phiQ} and the definition of $\phi$, we have
\[\p\left(\td{T}_{n+1} \ge \hat{L}_{n}(Z_1, \ldots, Z_n; X_{n+1})\right)\ge 1 - \alpha.\]
The proof is then completed by replacing $(T_{n+1}, X_{n+1})$ with $(T, X)$. 

\subsection{Proof of Theorem \ref{thm:conditional_coverage}}\label{subapp:conditional_coverage}

We prove the theorem by modifying the proof of Proposition 4 from \cite{vovk2012conditional}. To avoid confusion, we expand $\hat{L}(x)$ into $\hat{L}(Z_1, \ldots, Z_n; x)$ where $Z_{i} = (X_i, C_i, \tT_i)$. Fix any distribution $P$ with $(X, C)\sim P_{(X, C)}$ and $C\ge 0, T\ge 0$ almost surely. Suppose there exists a set $\event$ of $P_{X}$-non-atom $x$ such that $P_{X}(\event) > 0$, and for any $x\in \event$, 
  \[\p_{Z_i\stackrel{i.i.d.}{\sim}P}(\hat{L}(Z_1, \ldots, Z_n, x) > 0) > \alpha.\]
  Since $\event$ only includes non-atom $x$'s, there exists $t_{0} > 0$ and $\delta > 0$ such that
  \begin{align}
    \label{eq:iidP}
    \p_{Z_i\stackrel{i.i.d.}{\sim}P}(\hat{L}(x) > t_{0}) > \alpha + \delta, \quad \forall x\in \event.
  \end{align}
  We can further shrink $\event$ so that
  \begin{align}
    \label{eq:Pevent}
    \sqrt{2 - 2(1 - P_{X}(\event))^{n}} \le \delta / 2.
  \end{align}
  Fix any $t_{1}\in (0, t_{0})$. Define a new probability distribution $Q$ on $(X, C, T)$ with $(X, C) \sim P_{(X, C)}$ and the regular conditional probability
  \[Q(T\in A\mid X = x, C = c) = \left\{
      \begin{array}{ll}
        P(T\in A\mid X = x, C = c), & x\not\in \event,\\
        \delta_{t_{1}}(A), & x\in \event,
      \end{array}
    \right.\]
  where $\delta_{t_{1}}$ defines the point mass on $t_{1}$. Let $d_{\mathrm{TV}}$ denote the total-variation distance. Then,
  \begin{align}
    d_{\mathrm{TV}}(P, Q)
    &= \sup_{\calA_{X}, \calA_{C}, \calA_{T}}|P(X\in \calA_{X}, C\in \calA_{C}, T\in \calA_{T}) - Q(X\in \calA_{X}, C\in \calA_{C}, T\in \calA_{T})|\\
    & = \sup_{\calA_{X}, \calA_{C}, \calA_{T}}|P(X\in \calA_{X}\cap \event, C\in \calA_{C}, T\in \calA_{T}) - Q(X\in \calA_{X}\cap \event, C\in \calA_{C}, T\in \calA_{T})|\\
    & \le \sup_{\calA_{X}, \calA_{C}, \calA_{T}}\max\{P(X\in \calA_{X}\cap \event, C\in \calA_{C}, T\in \calA_{T}), Q(X\in \calA_{X}\cap \event, C\in \calA_{C}, T\in \calA_{T})\}\\
    & \le \sup_{\calA_{X}}\max\{P(X\in \calA_{X}\cap \event), Q(X\in \calA_{X}\cap \event)\}\\
    & = \sup_{\calA_{X}}P_{X}(\calA_{X}\cap \event)\le P_{X}(\event).
  \end{align}
  Using the tensorization inequality for the total-variation distance (see e.g., \cite{tsybakov2008introduction}, Section 2.4) and \eqref{eq:Pevent}, we obtain that
  \[d_{\mathrm{TV}}(P^{n}, Q^{n})\le \sqrt{2 - 2(1 - d_{\mathrm{TV}}(P, Q))^{n}}\le \delta / 2.\]
  Together with \eqref{eq:iidP}, this implies that
  \[\p_{Z_i\stackrel{i.i.d.}{\sim}Q}(\hat{L}(x) > t_{0}) > \alpha + \delta / 2, \quad \forall x\in \event.\]
  Let $Z = (X, C, T)$ be an independent draw from $Q$. The above inequality can be reformulated as
  \[\p_{Z_i, Z\stackrel{i.i.d.}{\sim}Q}(\hat{L}(X) > t_{0}\mid X = x) > \alpha + \delta / 2, \quad \forall x\in \event.\]
  Marginalizing over $x\in \event$, it implies that
  \begin{equation}
    \label{eq:contradiction}
    \p_{Z_i, Z\stackrel{i.i.d.}{\sim}Q}(\hat{L}(X) > t_{0}, X\in \event) > (\alpha + \delta / 2)Q_{X}(\event).
  \end{equation}
  By definition of $Q$, $T = t_1 < t_0$ almost surely conditional on $X \in \event$. Thus,
  \[\p_{Z_i, Z\stackrel{i.i.d.}{\sim}Q}(T < \hat{L}(X), X\in \event) > (\alpha + \delta / 2)Q_{X}(\event).\]
  On the other hand, since $Q$ is a distribution with the same marginal distribution of $(X, C)$ and $T\ge 0$ almost surely, for any $x$, 
  \[\p_{Z_i, Z\stackrel{i.i.d.}{\sim}Q}(T < \hat{L}(X)\mid X = x)\le \alpha.\]
  Marginalizing over $x\in \event$, it implies that
  \[\p_{Z_i, Z\stackrel{i.i.d.}{\sim}Q}(T < \hat{L}(X), X\in \event) \le \alpha Q_{X}(\event).\]
  This contradicts \eqref{eq:contradiction} since $Q_{X}(\event) = P_{X}(\event) > 0$. The theorem is proved by contradiction. 
  
  \section{Double robustness of weighted conformal inference}\label{app:double_robustness}

  \revise{
  Throughout this section, we will focus on the generic weighted conformal inference sketched in Algorithm \ref{algo:weighted_split_general} below.

  \begin{algorithm}[H]\label{algo:weighted_split_general}    
  \DontPrintSemicolon  
  \SetAlgoLined
  \BlankLine
  \caption{generic weighted split conformal inference}
  \textbf{Input:} level $\alpha$; data $\calZ=(X_i,Y_i)_{i\in\calI}$; testing point $x$;\\
  \hspace{0.08\textwidth}function $V(x,y;\calD)$ to compute the conformity score between $(x,y)$ and 
  data $\calD$; \\
  \hspace{0.08\textwidth}function $\hat{w}(x;\calD)$ to fit the weight function at $x$ using 
  $\calD$ as data.\;
  \vspace*{.3cm}
  \textbf{Procedure:}\\
  \vspace*{.1cm}
  \hspace{0.02\textwidth}1. Split $\calZ$ into a training fold $\calZ_{\text{tr}} \triangleq (X_i,Y_i)_{i\in\calI_{\text{tr}}}$
  and a calibration fold $\calZ_{\text{ca}} \triangleq (X_i,Y_i)_{i\in\calI_{\text{ca}}}$.\; 
  \hspace{0.02\textwidth}2. For each $i\in \calI_{\text{ca}}$, compute the conformity score $V_i = V(X_i,Y_i;\calZ_{\text{tr}})$.\;
  \hspace{0.02\textwidth}3. For each $i\in\calI_{\text{ca}}$, compute the weight $W_i = \hat{w}(X_i;\calZ_{\text{tr}})\in [0, \infty)$.\;
  \hspace{0.02\textwidth}4. Compute the weights $\hat{p}_i(x) = \frac{W_i}{\sum_{i\in\calI_{\text{ca}}}W_i +\hat{w}(x;\calZ_{\text{tr}})}$
  and $\hat{p}_{\infty}(x) = \frac{\hat{w}(x;\calZ_{\text{tr}})}{\sum_{i\in\calI_{\text{ca}}}W_i + \hat{w}(x;\calZ_{\text{tr}})}$.\;
  \hspace{0.02\textwidth}5. Compute $\eta(x) = \quantile\left(1 - \alpha; \sum_{i\in\calI_{\text{ca}}} 
  \hat{p}_i(x)\delta_{V_i} + \hat{p}_{\infty}(x)\delta_{\infty}\right)$.\;
  \vspace*{.3cm}
  \textbf{Output}: $\hat{L}(x) = \inf\{y: V(x,y;\calZ_{\text{tr}}) \le \eta(x) \}$
\end{algorithm}
}
\subsection{Conformity score via nested sets}\label{subsec:score}

\cite{gupta2019nested} introduced a broad class of conformity scores
characterized by nested sets. Suppose we have a totally ordered index
set $\calS$ (e.g., $\R$) and a sequence of nested sets
$\{\calF_{s}(x; \calD): s\in \calS\}$ in the sense that
$\calF_{s_1}(x; \calD)\subset \calF_{s_2}(x; \calD)$ for any
$s_1\le s_2\in\calS$. Define a score as the index of the minimal set that
includes $y$, i.e.
\begin{align}\label{eq:nested_sets}
  V(x,y;\calD) = \inf\{s\in\calS: y\in \calF_{s}(x;\calD)\}.
\end{align}
Without loss of generality, we assume throughout that $\calF_{\inf \calS}(x)$ is the empty set and $\calF_{\sup \calS}(x)$ is the full domain of $Y$. The CMR-, CQR-, and CDR-based scores are instances of this: 
\begin{itemize}
\item[-] CMR score: $\calF_{s}(x; \calD) = [\hat{m}(x) - s, \infty)$.
\item[-] CQR score: $\calF_{s}(x; \calD) = [\hat{q}_{\alpha}(x) - s, \infty)$.
\item[-] CDR score: $\calF_{s}(x; \calD) = [\hat{q}_{\alpha - s}(x), \infty)$.
\end{itemize}
We refer the readers to Table 1 of \cite{gupta2019nested} for a list of other conformity scores. 

\revise{
\subsection{Nonasymptotic theory for weighted conformal inference}
In this section, we establish nonasymptotic bounds for the coverage which would imply the asymptotic results (e.g., Theorem \ref{thm:double_robustness_CQR} and Theorem \ref{thm:double_robustness_CDR}), formally proved in Section \ref{app:double_robustness_asym}. The first result is identical to Theorem A.1 of \cite{lei2020conformal}. 

\begin{theorem}\label{thm:double_robustness_w}
  Let
  $(X_{i}, Y_{i})\stackrel{i.i.d.}{\sim}(X, Y) \sim P_{X}\times
  P_{Y\mid X}$ and $Q_{X}$ be another distribution on the domain of
  $X$. Set $N = |\Z_{\tr}|$ and $n = |\Z_{\ca}|$. Further, let $\hat{w}(x) = \hat{w}(x; \Z_{\tr})$ be an
  estimate of $w(x) = (dQ_{X} / dP_{X})(x)$, and $\hat{C}(x)$
  be the conformal interval resulting from Algorithm
  \ref{algo:weighted_split_general} with an arbitrary conformity score (not necessarily the ones defined in Section \ref{subsec:score}). Assume that
  $\E[\hat{w}(X) \mid \Z_{\tr}] < \infty$, where $\E$ denotes
  expectation over $X\sim P_{X}$. Redefine
  $\hat{w}(x)$ as
  $\hat{w}(x) / \E[\hat{w}(X) \mid \Z_{\tr}]$ so that
  $\E[\hat{w}(X) \mid \Z_{\tr}] = 1$. Then
  \begin{equation}\label{eq:unconditional_coverage_rate_w}
    \p_{(X, Y)\sim Q_{X}\times P_{Y\mid X}}\lb Y\in \hat{C}(X)\rb\ge 1 - \alpha - \frac{1}{2}\E_{X\sim P_{X}}\big[|\hat{w}(X) - w(X)|\big].
  \end{equation}
  \end{theorem}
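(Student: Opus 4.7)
The plan is to fix the training fold and reduce the coverage question to one about changing only the test marginal on $X$. Condition on $\calZ_{\tr}$, so that the score function $V(\cdot,\cdot;\calZ_{\tr})$ and the estimated weight $\hat w(\cdot;\calZ_{\tr})$ become deterministic functions. Since Algorithm \ref{algo:weighted_split_general} is invariant under positive rescaling of the weights (as remarked earlier in the paper), the normalization $\E[\hat w(X)\mid \calZ_{\tr}]=1$ is harmless; it also lets us define a probability measure $\tilde Q_X$ on the domain of $X$ by the Radon--Nikodym derivative $d\tilde Q_X/dP_X = \hat w$.

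The key step is to invoke the exact-coverage guarantee for oracle weighted split conformal inference (Tibshirani et al., 2019) in a fictitious experiment. If the test point were drawn from $\tilde Q_X \times P_{Y\mid X}$ rather than $Q_X \times P_{Y\mid X}$, the covariate-shift likelihood ratio relative to $P_X \times P_{Y\mid X}$ would equal $\hat w$ exactly, so Algorithm \ref{algo:weighted_split_general} would be applied with the ``correct'' weight function. Their Theorem 2, applied conditionally on $\calZ_{\tr}$, then yields
\[
\p_{(X,Y)\sim \tilde Q_X\times P_{Y\mid X}}\big(Y\in \hat C(X)\bigm|\calZ_{\tr}\big)\ge 1-\alpha.
\]

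To finish, I would transfer from $\tilde Q_X$ to $Q_X$ via a total-variation argument. The joint law of the calibration sample is unchanged in both scenarios, so for any event $E$ measurable with respect to $(\calZ_{\ca},(X,Y))$, the elementary bound $|\mu(E)-\nu(E)|\le \dtv(\mu,\nu)$ gives
\[
\big|\p_{Q_X\times P_{Y\mid X}}(E\mid \calZ_{\tr}) - \p_{\tilde Q_X\times P_{Y\mid X}}(E\mid \calZ_{\tr})\big| \le \dtv(Q_X,\tilde Q_X) = \tfrac12 \int |w(x)-\hat w(x)|\,dP_X(x),
\]
where the last equality uses the shared conditional $P_{Y\mid X}$ to reduce to the $X$-marginals. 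Specializing to $E=\{Y\in \hat C(X)\}$, combining with the previous display, and taking expectation over $\calZ_{\tr}$ yields the stated inequality.

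The main obstacle lies in the first step: I need to verify that the weighted-exchangeability argument of Tibshirani et al.\ genuinely applies verbatim in the fictitious world. Concretely, after conditioning on $\calZ_{\tr}$, one must check that the calibration scores together with the test score drawn from $\tilde Q_X\times P_{Y\mid X}$ satisfy their weighted-exchangeability requirement with weights $\hat w$. This is exactly the setting they treat, since $\hat w$ is deterministic conditional on $\calZ_{\tr}$ and $d\tilde Q_X/dP_X = \hat w$ by construction; their proof then transfers line by line. The total-variation bookkeeping in the remaining steps is routine.
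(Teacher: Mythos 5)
Your proof is correct and is essentially the standard argument (the paper defers to Theorem A.1 of Lei and Cand\`es, 2021, whose proof proceeds by exactly this route): condition on $\calZ_{\tr}$, introduce the auxiliary target $\tilde Q_X$ with $d\tilde Q_X/dP_X=\hat w$ so that the algorithm's weights become oracle and the Tibshirani et al.\ coverage guarantee applies exactly, then transfer from $\tilde Q_X$ to $Q_X$ via the total-variation bound $\dtv(Q_X,\tilde Q_X)=\tfrac12\E_{P_X}|w-\hat w|$ and average over $\calZ_{\tr}$. The bookkeeping is sound, including the observation that the calibration fold's law is shared between the two scenarios, so the TV distance of the full joint reduces to that of the test-covariate marginals.
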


  The second result generalizes Theorem A.2 of \cite{lei2020conformal}. 
  \begin{theorem}\label{thm:double_robustness_q}
  In the setting of Theorem \ref{thm:double_robustness_w}, assume further that
  \begin{enumerate}[label = C\arabic*]
    \item $\p_{X\sim Q_{X}}(w(X) < \infty) = 1$, and there exist $\delta, M > 0$ such that $\lb\E\left[\hat{w}(X)^{1 + \delta}\right]\rb^{1 / (1 + \delta)} \le M$;
      \item There exists $r > 0, s_0\in \calS$, $b_2 > b_1 >0$, and a sequence of oracle nested sets $\{\calO_s(x)\}_{s\in\calS}$, such that
      \begin{enumerate}[label = (\roman*)]
      \item for any $\eps \in [0, r]$,
        \begin{align}
          1 - \alpha - b_2\eps \le \p\left( Y \in \calO_{s_0-\eps}(X) \mid X\right)
          \le 1 - \alpha - b_1\eps, \quad \text{almost surely};
        \end{align}
      \item there exist $k, \ell > 0$ such that $\displaystyle \lim_{N\rightarrow \infty} \E[\hat{w}(X)\Delta^{k}(X)] = \lim_{N\rightarrow \infty} \E[w(X)\Delta^{\ell}(X)] = 0$, where $\Delta(x) =\sup_{s\in [s_0 - r, s_0]}~\Delta_s(x)$, and
        \begin{align}
          \Delta_s(x) = \inf\left\{\Delta\ge 0: 
          s-\Delta\in\calS , \calF_{s - \Delta}(x; \calZ_\tr) \subset \calO_{s}(x) \text{ and } \calO_{s - \Delta}(x) \subset \calF_{s}(x; \calZ_\tr)\right\}.
        \end{align}
      \end{enumerate}
    \end{enumerate}
    Then there are constants $\const_{1}$ and $\const_2$ that only depend on $r, b_1, b_2, \delta, M, k, \ell$ such that
    \begin{multline}
      \p_{(X, Y)\sim Q_{X}\times P_{Y\mid X}}(Y \in \hat{C}(X))\ge 1 - \alpha \\
       -\const_{1} \left\{\frac{(\log n)^{(1 + \delta') / 2(2 + \delta')}}{n^{\delta' / (2 + \delta')}} + \lb\E[\hat{w}(X)\Delta^{k}(X)]\rb^{1 / (2+k)} + \lb\E[w(X)\Delta^{\ell}(X)]\rb^{1/ (1+\ell)}\right\},\label{eq:unconditional_coverage_rate_q}
    \end{multline}
    and, with probability at least $1 - \beta$, 
    \begin{multline}
      \p_{(X, Y)\sim Q_{X}\times P_{Y\mid X}}(Y \in \hat{C}(X) \mid X )\ge 1 - \alpha \\
       - \const_{2}\left\{\frac{(\log n)^{(1 + \delta') / 2(2 + \delta')}}{n^{\delta' / (2 + \delta')}} + \lb\E[\hat{w}(X)\Delta^{k}(X)]\rb^{1 / (2+k)} + \lb\frac{\E[w(X)\Delta^{\ell}(X)]}{\beta}\rb^{1/ \ell}\right\}.\label{eq:conditional_coverage_rate_q}
     \end{multline}
     where $\delta' = \min\{\delta, 1\}$. In particular, $\const_1$ and $\const_2$ depends on $M$ polynomially.
  \end{theorem}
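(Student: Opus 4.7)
The plan is to compare the weighted conformal procedure based on the score $V(\cdot,\cdot;\calZ_\tr)$ with an \emph{oracle} procedure built from the sets $\{\calO_s(x)\}$ via the score $V^{*}(x,y):=\inf\{s\in\calS:y\in\calO_s(x)\}$. By C2(i), the conditional $(1-\alpha)$-quantile of $V^{*}$ is pinned at the constant $s_0$ almost surely in $X$, with the conditional CDF locally sandwiched between $1-\alpha-b_2\eps$ and $1-\alpha-b_1\eps$. This $X$-free quantile structure is precisely what makes the weighted empirical quantile robust to \emph{arbitrary} reweighting $\hat w$, so the proof does not require $\hat w\approx w$; only the weight-moment bound C1 and the nested-set gap $\Delta$ from C2(ii) enter the final error.

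\textbf{Key steps.} First, unpack C2(ii): the two inclusions $\calF_{s-\Delta_s(x)}(x;\calZ_\tr)\subset\calO_s(x)$ and $\calO_{s-\Delta_s(x)}(x)\subset\calF_s(x;\calZ_\tr)$ together yield $|V(x,y;\calZ_\tr)-V^{*}(x,y)|\le \Delta(x)$ whenever both scores lie in the active range $[s_0-r,s_0]$. Second, condition on $\calZ_\tr$ and apply the weighted-exchangeability identity already used for Theorem \ref{thm:double_robustness_w} to rewrite the target coverage as
\begin{equation*}
  \p_{(X,Y)\sim Q_X\times P_{Y\mid X}}\bigl(V\le \eta(X)\bigr)\;=\;\E_{X\sim Q_X}\bigl[F_V(\eta(X)\mid X,\calZ_\tr)\bigr],
\end{equation*}
and show that the weighted empirical quantile $\eta(x)$ concentrates near $s_0$; uniform-in-$x$ concentration at rate $(\log n)^{(1+\delta')/2(2+\delta')}n^{-\delta'/(2+\delta')}$ follows from a Bernstein-type bound applied to the weights truncated at $M_n\asymp n^{1/(2+\delta)}$, noting that $x$ enters $\eta(x)$ only through the scalar $\hat w(x;\calZ_\tr)$ in the numerator of $\hat p_\infty(x)$. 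Third, assemble: the coverage is at least $\p(V^{*}\le s_0-\Delta(X)-\zeta_n\mid X)\ge 1-\alpha-b_2\bigl(\Delta(X)+\zeta_n\bigr)$, where $\zeta_n$ collects both the quantile-concentration error and the $\hat w$-weighted $\Delta$ contribution incurred when translating between $V$ and $V^{*}$ on the calibration side. H\"older's inequality against the $(1+\delta)$-moment in C1 turns these into $\E[\hat w(X)\Delta^k(X)]^{1/(2+k)}$ and $\E[w(X)\Delta^\ell(X)]^{1/(1+\ell)}$ (on the target side, using $\E_{X\sim Q_X}[\,\cdot\,]=\E_{X\sim P_X}[w(X)\,\cdot\,]$ together with $\E[w(X)]=1$), producing \eqref{eq:unconditional_coverage_rate_q}. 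The polynomial dependence of $\const_1,\const_2$ on $M$ is tracked through these H\"older steps.

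\textbf{Conditional bound.} For \eqref{eq:conditional_coverage_rate_q} one keeps the conditioning on $X$ throughout: the inequality $\p(V\le\eta(X)\mid X)\ge 1-\alpha-b_2(\Delta(X)+\zeta_n)$ holds pointwise on a high-probability event, and the target-side $\E[w(X)\Delta^\ell(X)]$ is converted into a probabilistic $1-\beta$ statement via Markov's inequality applied to $w(X)\Delta^\ell(X)$, which exchanges the exponent $1/(1+\ell)$ for $1/\ell$ and introduces the $1/\beta$ factor.

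\textbf{Main obstacle.} The technical heart is the weighted quantile concentration under only a $(1+\delta)$-moment on $\hat w$. Unboundedness of the weights rules out Hoeffding/DKW directly; the remedy is a truncation at $M_n\asymp n^{1/(2+\delta)}$ that balances truncation bias $M_n^{-\delta}$ against empirical fluctuation of order $\sqrt{M_n(\log n)/n}$, and this choice is what drives the exponent $\delta'/(2+\delta')$ and the $(\log n)^{(1+\delta')/2(2+\delta')}$ factor. A secondary subtlety is that the $V_i$'s themselves depend on $\calZ_\tr$ through the fitted nested sets, so every bound must be stated conditionally on $\calZ_\tr$ before being integrated, and $\Delta$ must be handled as a $\calZ_\tr$-measurable random function throughout. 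Once this weighted-quantile lemma is in place, everything downstream is a bookkeeping exercise interpolating the $\Delta$-norms against the weight moments via H\"older.
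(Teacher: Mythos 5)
Your proposal has the right structural skeleton and correctly identifies the key technical obstacle, but the argument at the heart of it---weighted empirical quantile concentration under only a $(1+\delta)$-moment---is handled by a different route from the paper's, and your sketch of that route does not quite cohere.

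\textbf{What matches.} Your decomposition mirrors the paper: introduce the oracle score $V^*(x,y)=\inf\{s:y\in\calO_s(x)\}$ whose conditional $(1-\alpha)$-quantile is pinned at $s_0$ by C2(i); condition on $\calZ_\tr$ throughout; split the coverage into a ``quantile-too-low'' event $\{\eta(\tX)<s_0-\eps\}$ plus a bias term governed by $\Delta$; translate target-side expectations via $\E_{Q_X}[\cdot]=\E_{P_X}[w(X)\cdot]$; and use Markov/H\"older against the $\Delta$-norms to get $\E[\hat w\Delta^k]^{1/(2+k)}$ and $\E[w\Delta^\ell]^{1/(1+\ell)}$. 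The relation $|V-V^*|\le\Delta(x)$ in the active range, read off from the two inclusions in C2(ii), is a serviceable paraphrase of how the paper uses $\Delta_s$, though strictly speaking the inclusions give one-sided set shifts rather than a symmetric score gap.

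\textbf{Where your route diverges and where it is incomplete.} The paper does \emph{not} truncate. It conditions on $\calD=\{(X_i)_{i\in\calI_\ca},\tX\}$, under which $G(s_0-\eps)-G^*(s_0-\eps)=\sum_i\hat p_i(\tX)\{\Indc(V_i\le s_0-\eps)-\p(V_i\le s_0-\eps\mid\calD)\}$ is sub-Gaussian with the \emph{random} parameter $\hat\sigma^2=\sum_i\hat p_i(\tX)^2$, and then controls the tail $\p(\hat\sigma^2\ge\gamma_n)$ via the Rosenthal/von Bahr--Esseen moment inequalities in Lemma~\ref{lem:what}. The rate $(\log n)^{(1+\delta')/2(2+\delta')}n^{-\delta'/(2+\delta')}$ falls out of optimizing $\gamma_n\asymp\eps_n^2/\log n$ and then balancing $\eps_n$ against $(\log n)^{(1+\delta')/2}/(n^{\delta'}\eps_n^{1+\delta'})$. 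Your alternative---truncate $\hat w$ at $M_n\asymp n^{1/(2+\delta)}$ and apply Bernstein---is plausible in outline, but the sketch does not close. First, the claimed balance of ``truncation bias $M_n^{-\delta}$'' against ``fluctuation $\sqrt{M_n\log n/n}$'' at $M_n\asymp n^{1/(2+\delta)}$ only holds when $\delta=1$; for $\delta<1$ the two exponents differ, and a correct variance estimate of the truncated weight (order $M_n^{1-\delta}$, not $M_n$) changes the balancing point entirely. Second, the self-normalization is not confronted: the $\hat w(X_i)$ sit in both the numerator and the denominator of $\hat p_i(\tX)$, so truncating the weights perturbs the normalizer as well, and one must additionally rule out the event $\max_i\hat w(X_i)>M_n$, whose probability under a $(1+\delta)$-moment is $\Theta(nM_n^{-(1+\delta)})$ and is not small at your choice of $M_n$. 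The paper's Lemma~\ref{lem:what} is precisely the device that handles this self-normalized, heavy-tailed structure cleanly without truncation. Until the weighted-quantile concentration step is worked out in detail, with a correct variance bound, a correct failure-probability accounting, and a correct treatment of the denominator $\sum_i\hat w(X_i)+\hat w(\tX)$, the proof is not complete.

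The remaining bookkeeping (bounding $G^*(s_0-\eps)$ via Markov on $\hat w(X_i)\Indc\{\Delta(X_i)>\eps/2\}$, choosing $\eps_n$ to absorb the three error contributions, and converting the $Q_X$-side $\p(\Delta(\tX)>\eps_n)$ into the $\beta$-dependent conditional statement) is described correctly and matches the paper.
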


  The next theorem shows that the conformal prediction interval is approximately the oracle interval when the outcome model is well estimated.

  \begin{theorem}\label{thm:prediction_intervals}
  In the setting of Theorem \ref{thm:double_robustness_w}, assume further that
  \begin{enumerate}[label = C'\arabic*]
    \item There exist $\delta_0, \delta, M > 0$ such that $\lb\E\left[w(X)^{1 + \delta_0}\right]\rb^{1 / (1 + \delta_0)} \le M, \lb\E\left[\hat{w}(X)^{1 + \delta}\right]\rb^{1 / (1 + \delta)} \le M$;
      \item There exists $r > 0, s_0\in \calS$, $b_1 >0$, and a sequence of oracle nested sets $\{\calO_s(x)\}_{s\in\calS}$, such that
      \begin{enumerate}[label = (\roman*)]
      \item for any $\eps \in [0, r]$,
        \begin{align}
          \p\left( Y \in \calO_{s_0+\eps}(X) \mid X\right)
          \ge 1 - \alpha + b_1\eps, \quad \text{almost surely};
        \end{align}
      \item there exist $k, \ell > 0$ such that $\displaystyle \lim_{N\rightarrow \infty} \E[\hat{w}(X)\Delta^{'k}(X)] = \lim_{N\rightarrow \infty} \E[w(X)\Delta^{'\ell}(X)] = 0$, where $\Delta'(x) =\sup_{s\in [s_0, s_0+r]}~\Delta'_s(x)$,
        \begin{align}
          \Delta'_s(x) = \inf\left\{\Delta\ge 0: 
          s+\Delta\in\calS , \calF_{s}(x; \calZ_\tr) \subset \calO_{s+\Delta}(x) \text{ and } \calO_{s}(x) \subset \calF_{s+\Delta}(x; \calZ_\tr)\right\}.
        \end{align}
      \end{enumerate}
    \end{enumerate}
    Then, for any $\beta\in (0, 1)$, 
    \[\p_{X \sim Q_{X}}\lb \hat{C}(X)\subset \mathcal{O}_{s_0 + \eps_{n}(\beta)}(X)\rb\ge 1 - \beta,\]
    where 
    \[\eps_{n}(\beta) = \frac{\const_{3}}{\beta^a}\log\lb\frac{1}{\beta}\rb \left\{\frac{1}{n^{\delta' / (1 + \delta')}} + \lb\E[\hat{w}(X)\Delta^{'k}(X)]\rb^{1/(1+k)} + \lb\E[w(X)\Delta^{'\ell}(X)]\rb^{1/\ell}\right\},\]
    the constant $\const_{3}$ only depends on $r, b_1, \delta, M, k, \ell$, and
    \[a = \min\left\{\frac{1}{2}, \frac{\delta_0}{1 + \delta_0}, \frac{1}{1+k}, \frac{1}{\ell}\right\}.\]
    In particular, $\const_{3}$ depends on $M$ polynomially.
  \end{theorem}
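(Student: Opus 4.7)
The plan is to bound the conformal threshold $\eta(X)$ from above by $s_0 + O(\eps_n(\beta))$ with high probability and, separately, show that $\Delta'(X)$ is small under $Q_X$; combining these yields the desired containment. Since $\hat{C}(X) = \calF_{\eta(X)}(X;\calZ_\tr)$, the definition of $\Delta'_s$ in C'2(ii) gives $\calF_s(x;\calZ_\tr) \subset \calO_{s+\Delta'_s(x)}(x) \subset \calO_{s+\Delta'(x)}(x)$ for $s \in [s_0,s_0+r]$, so on that event $\hat{C}(X) \subset \calO_{\eta(X)+\Delta'(X)}(X)$. The $\Delta'(X)$ piece is handled at once by Markov under $Q_X$:
$\p_{Q_X}(\Delta'(X) > t) = \E_{P_X}[w(X)\ind\{\Delta'(X) > t\}] \le t^{-\ell}\E[w(X)\Delta'(X)^\ell],$
which contributes the $(\E[w\Delta'^\ell])^{1/\ell}$ term to $\eps_n(\beta)$ and the exponent $1/\ell$ to the $\beta^{-a}$ prefactor.

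The heart of the argument is bounding $\eta(X)$. Writing $\hat{F}_{X}(s) = \frac{\sum_{i\in\calI_\ca}\hat{w}(X_i)\ind\{V_i \le s\}}{\sum_{i\in\calI_\ca} \hat{w}(X_i) + \hat{w}(X)}$ and $F(s\mid x) = \p(Y\in\calF_s(x;\calZ_\tr)\mid X=x)$, the event $\eta(X) \le s$ is equivalent to $\hat{F}_{X}(s) \ge 1-\alpha$. On the event $\{\Delta'(x) \le \eps/2\}$, the inclusions $\calO_{s_0+\eps/2}(x) \subset \calF_{s_0+\eps/2+\Delta'_{s_0+\eps/2}(x)}(x;\calZ_\tr) \subset \calF_{s_0+\eps}(x;\calZ_\tr)$ combined with C'2(i) give $F(s_0+\eps\mid x) \ge 1 - \alpha + b_1\eps/2$. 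Decomposing the expectation over $\{\Delta'(X)\le\eps/2\}$ and its complement, and bounding the complement by Markov, yields
\begin{align*}
\E_{P_X}[\hat{w}(X)F(s_0+\eps\mid X)] \ge 1 - \alpha + b_1\eps/2 - C(2/\eps)^k\E[\hat{w}(X)\Delta'(X)^k]
\end{align*}
for a universal constant $C$. I would then replace this expectation by $\hat{F}_{X}(s_0+\eps)$ via moment-based concentration, applying truncated Bernstein or Rosenthal-type inequalities to the numerator and denominator under only the $(1+\delta)$-moment assumption in C'1, in the spirit of the proof of Theorem \ref{thm:double_robustness_q}; this produces the $n^{-\delta'/(1+\delta')}$ term with a $\log(1/\beta)$ factor. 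The random test-point weight $\hat{w}(X)$ in the denominator is absorbed by a separate H\"older--Markov estimate giving $\p_{Q_X}(\hat{w}(X) > t) \lesssim M^{1+\delta_0}\cdot t^{-(1+\delta)\delta_0/(1+\delta_0)}$, contributing the $\beta^{-\delta_0/(1+\delta_0)}$ component of the prefactor.

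Assembling the pieces and choosing $\eps$ to balance the margin $b_1\eps/2$ against the bias $(2/\eps)^k\E[\hat{w}\Delta'^k]$, the concentration error, and the tail of $\Delta'(X)$ gives exactly the claimed form for $\eps_n(\beta)$, and the exponent $a = \min\{1/2,\ \delta_0/(1+\delta_0),\ 1/(1+k),\ 1/\ell\}$ emerges as the slowest of the four underlying tail budgets (concentration, test-point weight tail, bias--margin balance with exponent $k$, and $\Delta'$-Markov under $Q_X$). The main obstacle is the synthesis step: coordinating four heterogeneous tail arguments under the weakest possible moment assumptions, while correctly tracking how the random test-point weight $\hat{w}(X)$ propagates through the denominator of $\hat{F}_{X}$. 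Once these tails are in place, optimizing $\eps$ to obtain the stated rate reduces to routine algebra.
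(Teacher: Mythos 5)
Your proposal follows essentially the same route as the paper's proof: write the event $\hat{C}(X)\not\subset\calO_{s_0+\eps}(X)$ in terms of $\{\eta(X) > s_0+\eps\}\cup\{\Delta'(X)>\eps\}$, use $\{\eta(X)\le s\}\iff\{\hat{F}_X(s)\ge 1-\alpha\}$, decompose $\hat{F}_X(s_0+\eps)$ into its conditional mean $G^*(s_0+\eps)$ plus fluctuation, lower-bound $G^*$ by the oracle margin $1-\alpha+b_1\eps/8$ minus bias terms $\sum\hat{p}_i(\tX)\Indc\{\Delta'(X_i)>\eps/2\}$ and $\hat{p}_\infty(\tX)$, then control the fluctuation by conditional sub-Gaussian plus moment (Rosenthal-type) bounds, control $\hat{p}_\infty(\tX)$ via the H\"older--Markov tail estimate for $\hat{w}(\tX)$ under $Q_X$ (exactly the paper's display \eqref{eq:wtdX}), control $\Delta'(X)$ by Markov under $Q_X$, and then balance $\eps$. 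Your identification of which tail each of the four exponents in $a$ is accountable for matches the paper's final assembly step. Two small remarks: the identity $\hat{C}(X)=\calF_{\eta(X)}(X;\calZ_\tr)$ holds only up to a vanishing margin (the paper writes $\hat{C}(x)\subset\calF_{\eta(x)+\eps_n}(x;\calZ_\tr)$), which absorbs harmlessly into the choice of $\eps$; and your $\hat{w}$-tail exponent $-(1+\delta)\delta_0/(1+\delta_0)$ is slightly sharper than the paper's $-\delta_0/(1+\delta_0)$, which the paper obtains by using only the first moment of $\hat{w}$ in the inner Markov step --- either version suffices.
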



\subsection{Proof of Theorem \ref{thm:double_robustness_q}}



We first state a technical lemma whose proof is deferred to Section \ref{subapp:lemma_what}. 
\begin{lemma}\label{lem:what}
Under Assumption C1, there exists a constant $A > 0$ that only depends on $\delta$ and $M$, such that
    \[\p\lb \sum_{i=1}^{n}\hat{w}(X_i)^2\ge nt\rb\le \frac{An^{(1 - \delta')/2}}{t^{(1 + \delta')/2}}, \quad \text{for any }t > 0,\]
    and
    \[\p\lb \bigg|\sum_{i=1}^{n} \big(\hat{w}(X_i) - 1\big)\bigg|\ge \frac{n}{2}\rb\le \frac{A}{n^{(\delta' + \delta) / 2}},\]
    where $\delta' = \min\{\delta, 1\}$. In particular, $A$ depends on $M$ polynomially.
\end{lemma}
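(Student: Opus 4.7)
The plan is to work conditionally on $\Z_{\tr}$, under which $\{\hat{w}(X_i)\}_{i=1}^n$ are i.i.d.\ with $\E[\hat{w}(X)] = 1$ and $\E[\hat{w}(X)^{1+\delta}]\le M^{1+\delta}$. By $L^p$-norm monotonicity, this also yields $\E[\hat{w}(X)^{1+\delta'}]\le M^{1+\delta'}$ for $\delta' = \min\{\delta,1\}$, and this is the moment that controls both bounds.

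For the first bound, the obstacle is that when $\delta < 1$ the random variable $\hat{w}(X)^2$ need not have a finite mean, which rules out Chebyshev-type second-moment arguments. The plan is therefore to apply Markov's inequality with the \emph{fractional} exponent $(1+\delta')/2 \in (0,1]$:
\[
\p\!\lb\sum_{i=1}^n \hat{w}(X_i)^2 \ge nt\rb \le (nt)^{-(1+\delta')/2}\, \E\!\left[\lb\sum_{i=1}^n \hat{w}(X_i)^2\rb^{(1+\delta')/2}\right].
\]
Since $u\mapsto u^{(1+\delta')/2}$ is subadditive on $[0,\infty)$, the expectation on the right is bounded by $n\,\E[\hat{w}(X)^{1+\delta'}]\le n M^{1+\delta'}$, which delivers the claimed rate $M^{1+\delta'} n^{(1-\delta')/2}/t^{(1+\delta')/2}$. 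The $M$-dependence enters only through $M^{1+\delta'}$, so the constant is polynomial in $M$.

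For the second bound (which I read as $|\sum_i \hat{w}(X_i) - n| \ge n/2$, correcting the apparent typo), I would split on whether $\delta\le 1$ or $\delta > 1$. The summands $Z_i = \hat{w}(X_i)-1$ are i.i.d., mean zero, with $(1+\delta)$-th absolute moment polynomial in $M$. For $\delta\le 1$, I apply the von Bahr--Esseen inequality at exponent $p=1+\delta\in(1,2]$, giving $\E|\sum_i Z_i|^{1+\delta}\le C_\delta\, n\,\E|Z_1|^{1+\delta}$; a Markov step at level $n/2$ then produces a tail of order $n^{-\delta}=n^{-\delta'}$. For $\delta>1$, the same Markov step at exponent $1+\delta\in(2,\infty)$ requires Rosenthal's inequality,
\[
\E\bigl|\textstyle\sum_i Z_i\bigr|^{1+\delta}\le C_\delta\bigl\{n^{(1+\delta)/2}(\E Z_1^2)^{(1+\delta)/2} + n\,\E|Z_1|^{1+\delta}\bigr\},
\]
whose leading $n^{(1+\delta)/2}$ term gives a tail of order $n^{-(1+\delta)/2}$. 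Since $\delta'=1$ in this regime, both cases collapse to the unified rate $n^{-(\delta+\delta')/2}$. The polynomial $M$-dependence follows from $\E|Z_1|^{1+\delta}\le 2^\delta(M^{1+\delta}+1)$ and $\E Z_1^2 \le M^2$.

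The main technical obstacle is simply recognizing that neither bound admits a clean second-moment proof: the first requires fractional Markov because $\hat{w}^2$ may have infinite mean, and the second requires Rosenthal in the super-Chebyshev regime $\delta>1$ in order to exploit the full $(1+\delta)$-th moment hypothesis and recover the sharp $n^{-(1+\delta)/2}$ rate demanded by the exponent $(\delta+\delta')/2$. Apart from these two inequalities, the argument is routine and the constants track through directly.
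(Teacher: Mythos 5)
Your proof is correct and takes essentially the same route as the paper's: a fractional Markov inequality with subadditivity of $u\mapsto u^{(1+\delta')/2}$ for the first tail, and Markov at exponent $1+\delta$ combined with von Bahr--Esseen (for $\delta\le 1$) or Rosenthal (for $\delta>1$) for the second. You also correctly identified that the second display is meant as $|\sum_i(\hat{w}(X_i)-1)|\ge n/2$, which is how the paper actually uses and proves it.
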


Throughout the proof, we treat $\Z_\tr$, hence $\hat{w}(\cdot)$ and  $\hat{F}_s(\cdot; \Z_\tr)$, as fixed. We shall prove the $\Z_\tr$-conditional versions of \eqref{eq:unconditional_coverage_rate_q} and \eqref{eq:conditional_coverage_rate_q} (with $\p(\cdot)$ and $\E[\cdot]$ replaced by $\p(\cdot \mid \Z_\tr)$ and $\E[\cdot\mid \Z_\tr]$). The results then follow by the law of iterated expectations and the H\"{o}lder's inequality.

By definition, $w(X)$ is almost surely finite under $P_{X}$.  Assumption C1 implies that $w(X)$ is almost surely finite under $Q_{X}$ and $\hat{w}(X)$ is almost surely finite under $P_{X}$. As a result, for any measurable function $f$,
  \begin{equation}
    \label{eq:transformation}
    \E_{X\sim Q_{X}}[f(X)] = \E_{X\sim P_{X}}[w(X)f(X)].
  \end{equation}
  In addition, the assumption $\E_{X\sim P_{X}}[\hat{w}(X)\mid \Z_{\tr}] < \infty$ implies that $\p_{X\sim P_{X}}(\hat{w}(X) < \infty) = 1$. By \eqref{eq:transformation},
\[\p_{X\sim Q_{X}}(\hat{w}(X) < \infty) = 1 - \E_{X\sim P_{X}}[w(X)I(\hat{w}(X) = \infty)].\]
Thus, $\hat{w}(X)$ is almost surely finite under $Q_{X}$. 

  Let $\eps < r / 3$ and $(\td{X}, \td{Y})$ denote a generic random vector drawn from
  $Q_{X}\times P_{Y\mid X}$, which is independent of the data. Then
\begin{align}
  		&\p\left(\tY \in \hat{C}(\tX) \mid \tX\right)\\
  & =  \p\left(V(\tX, \tY; \calZ_{\tr}) \le \eta(\tX) \mid \tX\right)\\
  & \ge  \p\left(\tY \in \calF_{\eta(\tX)}(\tX;\calZ_\tr) \mid \tX\right)\\
	& \ge \p\left(\tY \in \calF_{s_0 - \eps}(\tX;\calZ_\tr) \mid \tX\right)
		-\p\left(\eta(\tX) < s_0 - \eps \mid \tX\right)\\
  &\stackrel{(1)}{\ge} \p\left(\tY \in \calO_{s_0 - 2\eps - \Delta(\tX)}(\tX) \mid \tX \right) - 
    \p\left(\eta(\tX) < s_0 - \eps \mid \tX\right)\\
  & \ge  \p\left(\tY \in \calO_{s_0 - 2\eps - \Delta(\tX)\Indc\{\Delta(\tX)\le \eps\}  }(\tX) 
    \mid \tX \right) - \Indc\{\Delta(\tX) > \eps\} 
    - \p\left(\eta(\tX) < s_0 - \eps \mid \tX\right)\\
  & \stackrel{(2)}{\ge} 1 - \alpha - b_2\left(2\eps + \Delta(\tX)\Indc
    \{\Delta(\tX) \le \eps\}\right) - \Indc\{\Delta(\tX) > \eps\} 
    - \p\left(\eta(\tX) < s_0 - \eps \mid \tX\right)\\
  & \ge 1-\alpha- 3b_2\eps - \Indc\{\Delta(\tX) > \eps\} 
    - \p\left(\eta(\tX) < s_0 - \eps \mid \tX\right).\label{eq:conditional_coverage_nonasym}
\end{align}
Above, step (1) is due to the definition of $\Delta(x)$, and step (2) follows from Assumption C2 (i). 

Next, we derive an upper bound on
  $\p(\eta(\td{X}) < s_0-\eps \mid \td{X})$. Let $G$ denote the
  cumulative distribution function of the random distribution
  $\sum_{i=1}^{n}\hat{p}_{i}(\td{X})\delta_{V_{i}} +
  \hat{p}_{\infty}(\td{X})\delta_{\infty}$. Again, $G$ implicitly
  depends on $N$, $n$ and $\td{X}$. Then $\eta(\td{X}) < s_0-\eps$
  implies $G(s_0-\eps)\ge 1 - \alpha$, and thus,
  \[\p\lb\eta(\td{X}) < s_0-\eps\mid \td{X}\rb\le \p\lb G(s_0-\eps)\ge 1 - \alpha \mid \td{X}\rb, \,\, \text{a.s.}.\]
  Let $G^{*}(s_0-\eps)$ denote the expectation of $G(s_0-\eps)$ conditional on
  $\D = \{(X_{i})_{i=1}^{n}, \td{X}\}$, namely,
  \[G^{*}(s_0-\eps) = \E[G(s_0-\eps)\mid \D] = \sum_{i=1}^{n}\hat{p}_{i}(\td{X})\p(V_{i}\le s_0-\eps\mid \D).\]
  For any $t > 0$, the triangle inequality implies that
  \begin{align}
    &\p\lb\eta(\td{X}) < s_0-\eps\mid \td{X}\rb\nonumber\\
    & \le \p\lb G(s_0-\eps) - G^{*}(s_0-\eps)\ge t \mid \td{X}\rb + \p\lb G^{*}(s_0-\eps)\ge 1 - \alpha - t \mid \td{X}\rb, \,\, \text{a.s.}.    \label{eq:etaX1}
  \end{align}
  To bound the first term, we note that
  \[G(s_0-\eps) - G^{*}(s_0-\eps) = \sum_{i=1}^{n}\hat{p}_{i}(\td{X})\left\{
    I(V_{i}\le s_0 - \eps) - \p(V_{i}\le s_0 -\eps\mid \D)\right\}.\] Conditional on
  $\D$, $G(s_0-\eps) - G^{*}(s_0-\eps)$ is sub-Gaussian with
  parameter
  \[\hat{\sigma}^{2} = \sum_{i=1}^{n}\hat{p}_{i}(\td{X})^{2}.\]
  For any $t > 0$,
  \[\p\lb G(s_0-\eps) - G^{*}(s_0-\eps) \ge t\mid \D\rb\le \exp\lb
    -\frac{t^{2}}{2\hat{\sigma}^{2}}\rb.\]
  Let $\gamma_{n}$ be any fixed sequence with $\gamma_{n} = O(1)$. Taking expectation over $\D\setminus \{\td{X}\}$, we obtain that
  \begin{align}
    &\p\lb G(s_0-\eps) - G^{*}(s_0-\eps) \ge t \mid \td{X}\rb\\
    & \le \E\left[\exp\lb -\frac{t^{2}}{2\hat{\sigma}^{2}}\rb\mid \td{X}\right] \\
    & \le \exp\lb - \frac{t^2}{2\gamma_{n}}\rb + \p\lb\hat{\sigma}^2 \ge \gamma_{n}\mid \td{X}\rb\\
    & = \exp\lb - \frac{t^2}{2\gamma_{n}}\rb + \p\lb \frac{\sum_{i=1}^{n}\hat{w}(X_i)^2}{\lb\sum_{i=1}^{n}\hat{w}(X_i) + \hat{w}(\td{X})\rb^2}\ge \gamma_{n}\mid \td{X}\rb\\
    & \le \exp\lb - \frac{t^2}{2\gamma_{n}}\rb + \p\lb \frac{\sum_{i=1}^{n}\hat{w}(X_i)^2}{\lb\sum_{i=1}^{n}\hat{w}(X_i)\rb^2}\ge \gamma_{n}\mid \td{X}\rb\\
    & \stackrel{(1)}{=} \exp\lb - \frac{t^2}{2\gamma_{n}}\rb + \p\lb \frac{\sum_{i=1}^{n}\hat{w}(X_i)^2}{\lb\sum_{i=1}^{n}\hat{w}(X_i)\rb^2}\ge \gamma_{n}\rb\\
    & \le \exp\lb - \frac{t^2}{2\gamma_{n}}\rb + \p\lb\sum_{i=1}^{n}\hat{w}(X_i)\le \frac{n}{2}\rb + \p\lb\sum_{i=1}^{n}\hat{w}(X_i)^2\ge \frac{n^2\gamma_{n}}{4}\rb\\
    & \le \exp\lb - \frac{t^2}{2\gamma_{n}}\rb + \p\lb\bigg|\sum_{i=1}^{n}\big(\hat{w}(X_i) - 1\big)\bigg|\ge \frac{n}{2}\rb + \p\lb\sum_{i=1}^{n}\hat{w}(X_i)^2\ge \frac{n^2\gamma_{n}}{4}\rb, \label{eq:etaX2_0}
  \end{align}
  where (1) uses the fact that $\td{X}$ is independent of
  $(\hat{w}(X_i))_{i=1}^{n}$. Note that this bound holds uniformly with $\td{X}$. Throughout the rest of the proof, we write
  $a_{1,n}\preceq a_{2,n}$ if there exists a constant $\const$ that only depends on $r, b_1, b_2, \delta, M, k, \ell$ and, in particular, on $M$ polynomially such that
  $a_{1,n}\le \const a_{2,n}$ for all $n$.

  ~\\
  \noindent By Lemma \ref{lem:what},
  \begin{align}
    &\p\lb G(s_0-\eps) - G^{*}(s_0-\eps) \ge t \mid \td{X}\rb\nonumber\\
    & \preceq \exp\lb - \frac{t^2}{2\gamma_{n}}\rb + \frac{n^{(1 - \delta')/2}}{(n\gamma_n)^{(1 + \delta') / 2}} + \frac{1}{n^{(\delta' + \delta) / 2}}\nonumber\\
    & \preceq \exp\lb - \frac{t^2}{2\gamma_{n}}\rb + \frac{1}{n^{\delta'}\gamma_n^{(1 + \delta') / 2}} + \frac{1}{n^{(\delta' + \delta) / 2}}\nonumber\\
    & \preceq \exp\lb - \frac{t^2}{2\gamma_{n}}\rb + \frac{1}{n^{\delta'}\gamma_n^{(1 + \delta') / 2}},\label{eq:etaX2}
  \end{align}
  where the last step follows from $\gamma_{n} = O(1)$ and $(\delta + \delta') / 2 \ge \delta'$.

  ~\\
  \noindent Next, we almost surely bound the term
  $\p\lb G^{*}(s_0-\eps)\ge 1 - \alpha - t \mid \td{X}\rb$. When $\eps < r/3$,
\begin{align}
   G^*(s_0 - \eps) 
  & = \sum^n_{i=1} \hat{p}_i(\tX) \p\left(V_i \le s_0 - \eps \mid \calD\right) \\
  & = \sum^n_{i=1} \hat{p}_i(\tX) \left[ \p\left(V_i \le s_0 - \eps ,\Delta(X_i) > \eps/2 \mid \calD \right)
  + \p\left(V_i \le s_0 -  \eps, \Delta(X_i) \le \eps/2 \mid \calD \right)\right]\\
  & \stackrel{(1)}{\le} \sum^n_{i=1} \hat{p}_i(\tX) \left[\Indc \left\{\Delta(X_i) > \eps/2\right\}
  + \p\left(V_i \le s_0 -  \eps/2 - \Delta(X_i), \Delta(X_i) \le \eps/2 \mid \calD \right)\right]\\
  & \le \sum^n_{i=1} \hat{p}_i(\tX) \left[\Indc \left\{\Delta(X_i) > \eps/2\right\}
  + \p\left(Y_i \in \calF_{s_0 - \eps/4 - \Delta(X_i)} (X_i; \calZ_\tr)\mid \calD \right)\right]\\
  & \stackrel{(2)}{\le} \sum^n_{i=1} \hat{p}_i(\tX) \left[ \Indc\left\{\Delta(X_i) > \eps/2 \right\}
  + \p\left(Y_i \in \calO_{s_0 -  \eps/8}(X_i) \mid \calD \right)\right]\\ 
  & \stackrel{(3)}{\le} \sum^n_{i=1} \hat{p}_i(\tX) \Indc\left\{\Delta(X_i) > \eps/2\right\}
  + 1-\alpha - \frac{b_1\eps}{8},\label{eq:etaX3}
\end{align}
where step (1) holds because $\Delta(X_i)$ is deterministic
conditional on $\calD$, step (2) follows from the definition of
$\Delta(X_i)$, and step (3) follows from the Assumption C2 (i). For any $t \le b_1\eps / 16$,
\begin{align}
  &\p\lb G^*(s_0 - \eps) \ge 1 - \alpha - t\mid \td{X}\rb\\
  & \le \p\lb\sum^n_{i=1} \hat{p}_i(\tX) \Indc\left\{\Delta(X_i) > \eps/2\right\}\ge \frac{b_1\eps}{8} - t\mid \td{X}\rb\\
  & \le \p\lb\frac{\sum_{i=1}^{n}\hat{w}(X_i)I(\Delta(X_i)\ge \eps / 2)}{\sum_{i=1}^{n}\hat{w}(X_i)}\ge \frac{b_1\eps}{16}\mid \td{X}\rb\\
  & \le \p\lb\sum_{i=1}^{n}\hat{w}(X_i)\le \frac{n}{2}\rb + \p\lb \sum_{i=1}^{n}\hat{w}(X_i)I\lb \Delta(X_i)\ge \frac{\eps}{2}\rb\ge \frac{nb_1 \eps}{32}\rb\\
  & \le \p\lb\bigg|\sum_{i=1}^{n}\big(\hat{w}(X_i) - 1\big)\bigg|\ge \frac{n}{2}\rb + \p\lb \sum_{i=1}^{n}\hat{w}(X_i)I\lb \Delta(X_i)\ge \frac{\eps}{2}\rb\ge \frac{nb_1 \eps}{32}\rb.   \label{eq:etaX4}
\end{align}
By Markov's inequality,
\begin{align*}
  \p\lb \sum_{i=1}^{n}\hat{w}(X_i)I\lb \Delta(X_i)\ge \frac{\eps}{2}\rb\ge \frac{n b_1\eps}{32}\rb &\preceq \frac{1}{\eps}\E\left[\hat{w}(X)I\lb \Delta(X)\ge \frac{\eps}{2}\rb\right]\preceq \frac{\E[\hat{w}(X)\Delta^{k}(X)]}{\eps^{1+k}},
\end{align*}
where the last step uses the simple fact that $I(\Delta(X_i)\ge \eps / 2)\le (2/\eps)^{k}\Delta^{k}(X_i)$. Further, by Lemma \ref{lem:what}, for any $t\le \eps b_1 / 16$,
\begin{equation}
  \label{eq:etaX5}
  \p\lb G^*(s_0 - \eps) \ge 1 - \alpha - t\mid \td{X}\rb\preceq \frac{1}{n^{(\delta' + \delta) / 2}} + \frac{\E[\hat{w}(X)\Delta^{k}(X)]}{\eps^{1+k}}.
\end{equation}

~\\
  \noindent Combining \eqref{eq:etaX1} - \eqref{eq:etaX5} together and setting $t = \eps b_1 / 16$, we obtain that for any sequence $\gamma_{n} = O(1)$, 
  \begin{align}
    \p\lb\eta(\td{X}) < s_0-\eps\mid \td{X}\rb &\preceq \exp\lb - \frac{b_1^2}{512}\frac{\eps^2}{\gamma_{n}}\rb + \frac{1}{n^{\delta'}\gamma_{n}^{(1 + \delta') / 2}} + \frac{1}{n^{(\delta + \delta') / 2}} + \frac{\E[\hat{w}(X)\Delta^{k}(X)]}{\eps^{1+k}}\nonumber\\
                                            & \preceq \exp\lb - \frac{b_1^2}{512}\frac{\eps^2}{\gamma_{n}}\rb + \frac{1}{n^{\delta'}\gamma_{n}^{(1 + \delta') / 2}} + \frac{\E[\hat{w}(X)\Delta^{k}(X)]}{\eps^{1+k}}. \label{eq:etaX6}
  \end{align}
  Substitute $\eps$ with $\eps_{n}$ and assume $\eps_{n} \le r / 3$
  (recall the beginning of the proof). Set
  \begin{equation}
    \label{eq:gamman}
    \gamma_{n} = \frac{512}{b_1^2}\frac{\eps_{n}^2}{\log n}.
  \end{equation}
  Clearly, $\gamma_{n} = o(1)$. Then the first term of \eqref{eq:etaX6} is $1 / n$, and thus,
\[\p\lb\eta(\td{X}) < s_0-\eps_{n}\mid \td{X}\rb \preceq \frac{(\log n)^{(1 + \delta') / 2}}{n^{\delta'}\eps_{n}^{1 + \delta'}} + \frac{\E[\hat{w}(X)\Delta^{k}(X)]}{\eps_{n}^{1+k}}.\]
  Equivalently, there exists a constant $\const$ that only depends on $r, b_1, b_2, \delta, M, k, \ell$ and, in particular, on $M$ polynomially, such that
  \[\p\lb\eta(\td{X}) < s_0-\eps_{n}\mid \td{X}\rb \le \const\left\{\frac{(\log n)^{(1 + \delta') / 2}}{n^{\delta'}\eps_{n}^{1 + \delta'}} + \frac{\E[\hat{w}(X)\Delta^{k}(X)]}{\eps_{n}^{1+k}}\right\}, \,\, \text{a.s.}.\]
  Together with \eqref{eq:conditional_coverage_nonasym}, it implies that
  \begin{multline*}
    \p(\td{Y} \in \hat{C}(\td{X}) \mid \td{X})\\
    \ge 1-\alpha- 3b_2\eps - \Indc\{\Delta(\tX) > \eps\} 
    - \const\left\{\frac{(\log n)^{(1 + \delta') / 2}}{n^{\delta'}\eps_{n}^{1 + \delta'}} + \frac{\E[\hat{w}(X)\Delta^{k}(X)]}{\eps_{n}^{1+k}}\right\},
  \end{multline*}
  almost surely. Assume $\const\ge 3b_2$ without loss of generality. Then
  \begin{multline}
    \p\lb\p(\td{Y} \in \hat{C}(\td{X}) \mid \td{X}) \le 1 - \alpha - \const\left\{\eps_{n} + \frac{(\log n)^{(1 + \delta') / 2}}{n^{\delta'}\eps_{n}^{1 + \delta'}} + \frac{\E[\hat{w}(X)\Delta^{k}(X)]}{\eps_{n}^{1+k}}\right\}\rb\\
    \quad \le \p(\Delta(\td{X}) > \eps_{n}).\label{eq:conditional_coverage2}
  \end{multline}
  For any $\beta\in (0, 1)$, let
  \[\eps_{n} = \frac{(\log n)^{(1 + \delta') / 2(2 + \delta')}}{n^{\delta' / (2 + \delta')}} + \lb\E[\hat{w}(X)\Delta^{k}(X)]\rb^{1 / (2+k)} + \frac{\lb\E[w(X)\Delta^{\ell}(X)]\rb^{1/\ell}}{\beta^{1/\ell}}.\]
  Then
  \begin{equation}
    \label{eq:eps_cond}
    \frac{(\log n)^{(1 + \delta') / 2}}{n^{\delta'}\eps_{n}^{1 + \delta'}}, \frac{\E[\hat{w}(X)\Delta^{k}(X)]}{\eps_{n}^{1+k}}\le \eps_{n},
  \end{equation}
  and by Markov's inequality and \eqref{eq:transformation},
  \[\p(\Delta(\td{X}) > \eps_{n})\le \frac{\E[\Delta^{\ell}(\td{X})]}{\eps_{n}^{\ell}} = \frac{\E[w(X)\Delta^{\ell}(X)]}{\eps_{n}^{\ell}}\le \beta.\]
  Furthermore, Assumption C2 (ii) implies that $\eps_{n}\le r / 2$ when $N\ge N(r)$ and $n\ge n(r)$ for some constants $N(r), n(r)$ that only depend on $r$. Replacing $\const$ by $3\const$, we obtain that, for $N\ge N(r), n\ge n(r)$, 
  \[\p\lb\p(\td{Y} \in \hat{C}(\td{X}) \mid \td{X}) \le 1 - \alpha - \const\eps_{n}\rb\le \beta.\]
  We can further enlarge $\const$ so that $B\eps_{n} \ge 1 - \alpha$ when $N < N(r)$ or $n < n(r)$, in which case \eqref{eq:conditional_coverage_rate_q} trivially holds.

  ~\\
  \noindent To prove the unconditional result, we note that \eqref{eq:conditional_coverage2} implies
  \[\p(\td{Y} \in \hat{C}(\td{X}))\ge \lb 1 - \alpha - \const\left\{\eps_{n} + \frac{(\log n)^{(1 + \delta') / 2}}{n^{\delta'}\eps_{n}^{1 + \delta'}} + \frac{\E[\hat{w}(X)\Delta^{k}(X)]}{\eps_{n}^{1+k}}\right\}\rb(1 - \p(\Delta(\td{X}) > \eps_{n})).\]
  Let
  \[\eps_{n} = \frac{(\log n)^{(1 + \delta') / 2(2 + \delta')}}{n^{\delta' / (2 + \delta')}} + \lb\E[\hat{w}(X)\Delta^{k}(X)]\rb^{1 / (2+k)} + \lb\E[w(X)\Delta^{\ell}(X)]\rb^{1/ (1+\ell)}.\]
  Then \eqref{eq:eps_cond} remains to hold. 
  By Markov's inequality and \eqref{eq:transformation},
  \[\p(\Delta(\td{X}) > \eps_{n})\le \frac{\E[\Delta^{\ell}(\td{X})]}{\eps_{n}^{\ell}} = \frac{\E[w(X)\Delta^{\ell}(X)]}{\eps_{n}^{\ell}}\le \eps_{n}.\]
  Furthermore, Assumption (4) implies that $\eps_{n}\le r / 2$ when $N$ and $n$ are sufficiently large, in which case,
  \[\p(\td{Y} \in \hat{C}(\td{X}))\ge \lb 1 - \alpha - 3\const\eps_{n}\rb(1 - \eps_{n})\ge 1 - \alpha - (3\const+1)\eps_{n}.\]
  Similar to \eqref{eq:conditional_coverage_rate_q}, we can enlarge the constant to make \eqref{eq:unconditional_coverage_rate_q} hold 
  when $N$ or $n$ is not sufficiently large.

  \subsection{Proof of Theorem \ref{thm:prediction_intervals}}
    Let $G$ denote the
    cumulative distribution function of the random distribution
  $\sum_{i=1}^{n}\hat{p}_{i}(\td{X})\delta_{V_{i}} +
  \hat{p}_{\infty}(\td{X})\delta_{\infty}$ and $G^{*}$ denote the expectation of $G$ conditional on $\D = \{(X_{i})_{i=1}^{n}, \td{X}\}$. Clearly, for any $\eps > 0$,
    \[\eta(\td{X}) > s_0+\eps \Longrightarrow G(s_0+\eps)\le 1 - \alpha.\]
  Then, for any $t > 0$, the triangle inequality implies that, 
  \begin{align}
    &\p\lb\eta(\td{X}) > s_0+\eps\rb \\
    & \le \p\lb G(s_0+\eps) - G^{*}(s_0+\eps)\le -t \rb + \p\lb G^{*}(s_0+\eps)\le 1 - \alpha  + t \rb\label{eq:etaX0_pred}
  \end{align}
  Throughout the rest of the proof, we set $t = b_1 \eps / 16$. 

  ~\\
  \noindent
Analogous to \eqref{eq:etaX2}, with probability $1$,
  \begin{align}
    &\p\lb G(s_0+\eps) - G^{*}(s_0+\eps) \le -t \mid \td{X}\rb \preceq \inf_{\gamma > 0}\left\{\exp\lb - \frac{b^{2}_1}{512}\frac{\eps^2}{\gamma}\rb + \frac{1}{n^{\delta'}\gamma^{(1 + \delta') / 2}}\right\}.
  \end{align}
  Let $\gamma = \eps^2 / c$ for some $c > 0$. Marginalizing over $\td{X}$ implies
  \begin{align}
    &\p\lb G(s_0+\eps) - G^{*}(s_0+\eps) \le -t \rb \preceq \inf_{c > 0}\left\{\exp\lb - \frac{b^{2}_1 c}{512}\rb + \frac{c^{(1 + \delta') / 2}}{n^{\delta'}\eps^{1 + \delta'}}\right\}.\label{eq:etaX2_pred}
  \end{align}

  ~\\
  \noindent Next, we almost surely bound the term
  $\p\lb G^{*}(s_0+\eps)\le 1 - \alpha + t \mid \td{X}\rb$. When $\eps < \min\{r/3, 8\alpha/b_1\}$,
\begin{align}
   G^*(s_0 + \eps) 
  & = \sum^n_{i=1} \hat{p}_i(\tX) \p\left(V_i \le s_0 + \eps \mid \calD\right) \nonumber\\
  & \ge \sum^n_{i=1} \hat{p}_i(\tX)\p\left(V_i \le s_0 + \eps, \Delta'(X_i) \le \eps/2 \mid \calD \right)\nonumber\\
  & \ge \sum^n_{i=1} \hat{p}_i(\tX)\p\left(V_i \le s_0 + \eps/2 + \Delta'(X_i), \Delta'(X_i) \le \eps/2 \mid \calD \right)\nonumber\\
  & \ge \sum^n_{i=1}\hat{p}_i(\tX)\left[\p\left(V_i \le s_0 +  \eps/2 + \Delta'(X_i) \mid \calD \right) - \Indc\left(\Delta'(X_i) > \eps/2\mid \calD\right)\right]\nonumber\\
  & \ge \sum^n_{i=1} \hat{p}_i(\tX) \left[\p\left(Y_i \in \calF_{s_0 + \eps/4 + \Delta'(X_i)} (X_i; \calZ_\tr)\mid \calD \right) - \Indc \left\{\Delta'(X_i) > \eps/2\right\}
  \right]\nonumber\\
  & \ge \sum^n_{i=1} \hat{p}_i(\tX) \left[ \p\left(Y_i \in \calO_{s_0 + \eps/8}(X_i) \mid \calD \right) - \Indc\left\{\Delta'(X_i) > \eps/2 \right\}
  \right]\nonumber\\ 
  & \ge \lb 1-\alpha + \frac{b_1\eps}{8}\rb \left[1 - \hat{p}_{\infty}(\td{X})\right]-\sum^n_{i=1} \hat{p}_i(\tX) \Indc\left\{\Delta'(X_i) > \eps/2\right\}\nonumber\\
  & \ge 1-\alpha + \frac{b_1\eps}{8} - \td{p}_{\infty}(\td{X}) - \sum^n_{i=1} \hat{p}_i(\tX) \Indc\left\{\Delta'(X_i) > \eps/2\right\}.
\end{align}
Then
\begin{align}
  &\p\lb G^*(s_0 + \eps) \le 1 - \alpha + t\rb\\
  & \le \p\lb\sum^n_{i=1} \hat{p}_i(\tX) \Indc\left\{\Delta'(X_i) > \eps/2\right\}\ge \frac{b_1\eps}{32} \,\,\text{or }\,\,\td{p}_{\infty}(\td{X})\ge \frac{b_1\eps}{32}\rb\\
  & \le \p\lb\frac{\sum_{i=1}^{n}\hat{w}(X_i)I(\Delta'(X_i)\ge \eps / 2)}{\sum_{i=1}^{n}\hat{w}(X_i)}\ge \frac{b_1\eps}{32}\,\, \text{or }\,\,\frac{\hat{w}(\td{X})}{\sum_{i=1}^{n}\hat{w}(X_i)}\ge \frac{b_1\eps}{32}\rb\\
  & \le \p\lb\sum_{i=1}^{n}\hat{w}(X_i)\le \frac{n}{2}\rb + \p\lb \sum_{i=1}^{n}\hat{w}(X_i)I\lb \Delta'(X_i)\ge \frac{\eps}{2}\rb\ge \frac{nb_1 \eps}{64}\rb \\
  & \qquad \qquad+ \p\lb \hat{w}(\td{X})\ge \frac{nb_1 \eps}{64}\rb.\label{eq:etaX3_pred}
\end{align}
As shown in the proof of Theorem \ref{thm:double_robustness_q},
\[\p\lb\sum_{i=1}^{n}\hat{w}(X_i)\le \frac{n}{2}\rb + \p\lb \sum_{i=1}^{n}\hat{w}(X_i)I\lb \Delta'(X_i)\ge \frac{\eps}{2}\rb\ge \frac{nb_1 \eps}{64}\rb\preceq \frac{1}{n^{(\delta' + \delta) / 2}} + \frac{\E[\hat{w}(X)\Delta^{'k}(X)]}{\eps^{1+k}}.\]
By H\"{o}lder's inequality and Markov's inequality,
\begin{align}
  &\p\lb \hat{w}(\td{X})\ge \frac{nb_1 \eps}{64}\rb = \E\left[w(X)\Indc\lb \hat{w}(X)\ge \frac{nb_1 \eps}{64}\rb\right]\\
  & \le \lb\E[w(X)^{1 + \delta_0}]\rb^{1/(1 + \delta_0)}\p\lb\hat{w}(X)\ge \frac{nb_1 \eps}{64}\rb^{\delta_0/(1 + \delta_0)}\\
      & \preceq \frac{1}{(n\eps)^{\delta_0 / (1 + \delta_0)}}.\label{eq:wtdX}
\end{align}
By \eqref{eq:etaX3_pred},
\begin{equation}
  \label{eq:etaX4_pred}
  \p\lb G^*(s_0 + \eps) \le 1 - \alpha + t\rb\preceq \frac{1}{n^{(\delta' + \delta) / 2}} + \frac{1}{(n\eps)^{\delta_0 / (1 + \delta_0)}} + \frac{\E[\hat{w}(X)\Delta^{'k}(X)]}{\eps^{1+k}}.
\end{equation}


 ~\\
 \noindent Putting \eqref{eq:etaX0_pred}, \eqref{eq:etaX2_pred}, and \eqref{eq:etaX4_pred} together, we obtain that, for any $c > 0$, 
\begin{align}
  &\p\lb\eta(\td{X}) > s_0+\eps\rb\\
  & \preceq \exp\lb - \frac{b^{2}_1 c}{512}\rb + \frac{c^{(1 + \delta')/2}}{n^{\delta'}\eps^{1 + \delta'}} + \frac{1}{n^{(\delta' + \delta) / 2}} + \frac{1}{(n\eps)^{\delta_0 / (1 + \delta_0)}} + \frac{\E[\hat{w}(X)\Delta^{'k}(X)]}{\eps^{1+k}}. \label{eq:etaX7_pred}
\end{align}
Applying a union bound, we have
\begin{align}
  \label{eq:etaX8_pred}
  &\p\lb\eta(\td{X}) > s_0+\eps\text{ or }\Delta'(\td{X}) > \eps\rb\\
  & \preceq \exp\lb - \frac{b^{2}_1 c}{512}\rb + \frac{c^{(1 + \delta')/2}}{n^{\delta'}\eps^{1 + \delta'}} + \frac{1}{n^{(\delta' + \delta) / 2}} + \frac{1}{(n\eps)^{\delta_0 / (1 + \delta_0)}} \\
  &\qquad + \frac{\E[\hat{w}(X)\Delta^{'k}(X)]}{\eps^{1+k}} + \p(\Delta'(\td{X}) > \eps)\\
    & \preceq \exp\lb - \frac{b^{2}_1 c}{512}\rb + \frac{c^{(1 + \delta')/2}}{n^{\delta'}\eps^{1 + \delta'}} + \frac{1}{n^{(\delta' + \delta) / 2}} + \frac{1}{(n\eps)^{\delta_0 / (1 + \delta_0)}} \\
  &\qquad  + \frac{\E[\hat{w}(X)\Delta^{'k}(X)]}{\eps^{1+k}} + \frac{\E[w(X)\Delta^{'\ell}(X)]}{\eps^{\ell}}
\end{align}
where the last line applies Markov's inequality which yields 
\begin{align}
  \p\lb \Delta'(\td{X}) > \eps\rb\le \frac{\E[\Delta^{'\ell}(\td{X})]}{\eps^{\ell}} = \frac{\E[w(X)\Delta^{'\ell}(X)]}{\eps^{\ell}}.
\end{align}
Thus, there exists a constant $\const$ that only depends on $r, b_1, \delta, M, k, \ell$ and, in particular, on $M$ polynomially such that
\begin{align}
  &\p\lb\eta(\td{X}) > s_0+\eps \text{ or }\Delta'(\td{X}) > \eps\rb\\
  & \le \const\bigg\{\exp\lb - \frac{b^{2}_1 c}{512}\rb + \frac{c^{(1 + \delta')/2}}{n^{\delta'}\eps^{1 + \delta'}} + \frac{1}{n^{(\delta' + \delta) / 2}} + \frac{1}{(n\eps)^{\delta_0 / (1 + \delta_0)}}\\
  & \quad\qquad + \frac{\E[\hat{w}(X)\Delta^{'k}(X)]}{\eps^{1+k}} + \frac{\E[w(X)\Delta^{'\ell}(X)]}{\eps^{\ell}}\bigg\}.
\end{align}
Let $\td{B} = 10B / \beta$, 
\[c = \frac{512\log \td{B}}{b^{2}_1},\]
and
\[\eps_n = \frac{\td{B}^{1/(1+\delta')}c^{1/2}}{n^{\delta' / (1 + \delta')}} + \frac{\td{B}^{(1 + \delta_0) / \delta_0}}{n} + \lb \td{B}\E[\hat{w}(X)\Delta^{'k}(X)]\rb^{1/(1+k)} + \lb \td{B}\E[w(X)\Delta^{'\ell}(X)]\rb^{1/\ell}.\]
Then
\[\p\lb\eta(\td{X}) > s_0+\eps_n \text{ or }\Delta'(\td{X}) > \eps_n\rb\le \frac{\beta}{2} + \frac{B}{n^{(\delta' + \delta) / 2}}.\]
For a sufficiently large $n$,
\[\p\lb\eta(\td{X}) > s_0+\eps_n \text{ or }\Delta'(\td{X}) > \eps_n\rb\le \beta.\]
Recall that
\begin{align}
  \hat{C}(x) &= \{y: V(x, y; \calZ_\tr)\le \eta(x)\} = \{y: \inf\{s\in \calS: y \in \calF_s(x; \calZ_\tr)\}\le \eta(x)\} \subset \calF_{\eta(x) + \eps_n}(x; \calZ_\tr).
\end{align}
Thus, on the event $\eta(\td{X}) \le s_0+\eps$ and $\Delta'(\td{X}) \le \eps$, which has probability at least $1 - \beta$,
\[\hat{C}(\td{X})\subset \calF_{s_0 + 2\eps_n}(\td{X}; \calZ_\tr)\subset \calO_{s_0 + 3\eps_n}(\td{X}).\]
The proof is then completed by noting that
\begin{align}
  \eps_{n}&\preceq \left\{ \frac{1}{\beta^{1/(1+\delta')}}\log\lb\frac{1}{\beta}\rb + \frac{1}{\beta^{\delta_0/(1+\delta_0)}} + \frac{1}{\beta^{1/(1+k)}} + \frac{1}{\beta^{1/\ell}}\right\}\\
          & \quad \cdot \left\{\frac{1}{n^{\delta' / (1 + \delta')}} + \lb\E[\hat{w}(X)\Delta^{'k}(X)]\rb^{1/(1+k)} + \lb\E[w(X)\Delta^{'\ell}(X)]\rb^{1/\ell}\right\}\\
          & \preceq \frac{1}{\beta^a}\log\lb\frac{1}{\beta}\rb \left\{\frac{1}{n^{\delta' / (1 + \delta')}} + \lb\E[\hat{w}(X)\Delta^{'k}(X)]\rb^{1/(1+k)} + \lb\E[w(X)\Delta^{'\ell}(X)]\rb^{1/\ell}\right\}.
\end{align}

\subsection{Proof of Lemma \ref{lem:what}}\label{subapp:lemma_what}
The result is almost the same as a step in the proof of Theorem A.1 in \cite{lei2020conformal}. We repeat the proof below for the sake of completeness.

We start with the following two Rosenthal-type inequalities for sums of independent random variables with finite $(1 + \delta)$-th moments.

\begin{proposition}[Theorem 3 of \cite{rosenthal1970subspaces}]\label{prop:rosenthal}
  Let $\{Z_{i}\}_{i = 1, \ldots, n}$ be independent mean-zero random
  variables. Then for any $\delta \ge 1$, there exists $L(\delta) > 0$
  that only depends on $\delta$ such that
\[\E \bigg|\sum_{i=1}^{n}Z_{i}\bigg|^{1 + \delta}\le L(\delta)\left\{\sum_{i=1}^{n}\E |Z_{i}|^{1 + \delta} + \lb\sum_{i=1}^{n}\E |Z_{i}|^{2}\rb^{(1 + \delta) / 2}\right\}.\]
\end{proposition}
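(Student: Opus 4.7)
The statement is Rosenthal's classical moment inequality for sums of independent mean-zero random variables, so the plan is to follow one of the standard routes for this result. I would proceed via the Burkholder--Davis--Gundy (BDG) inequality applied to the discrete-time martingale $S_k = \sum_{i \le k} Z_i$ (with respect to the natural filtration generated by the $Z_i$'s). Since $p := 1+\delta \ge 2$, BDG yields a constant $c_p$ such that
\[
\E|S_n|^p \;\le\; c_p\, \E\Bigl(\sum_{i=1}^n Z_i^2\Bigr)^{p/2}.
\]
The rest of the proof is then reduced to controlling the $L^{p/2}$ norm of the (nonnegative) random variable $T_n := \sum_i Z_i^2$.

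Next, I would apply Minkowski's inequality in $L^{p/2}$ (valid because $p/2 \ge 1$) to split $T_n$ into its mean and a centered piece:
\[
\|T_n\|_{p/2} \;\le\; \Bigl\|\sum_i \bigl(Z_i^2 - \E Z_i^2\bigr)\Bigr\|_{p/2} \;+\; \sum_i \E Z_i^2.
\]
Raising to the $p/2$ power and using $(a+b)^{p/2} \le 2^{p/2-1}(a^{p/2}+b^{p/2})$ isolates the desired $(\sum \E Z_i^2)^{p/2}$ contribution, so the remaining task is to show that
\[
\E\Bigl|\sum_i \bigl(Z_i^2 - \E Z_i^2\bigr)\Bigr|^{p/2} \;\le\; C_\delta \Bigl(\sum_i \E|Z_i|^p + \bigl(\sum_i \E Z_i^2\bigr)^{p/2}\Bigr).
\]
For this I would use a halving recursion: the summands $Y_i := Z_i^2 - \E Z_i^2$ are again independent and mean-zero, so a Rosenthal-type bound with exponent $p/2$ applies; iterating the halving until the exponent lies in $[1,2]$ (where the Marcinkiewicz--Zygmund or von Bahr--Esseen inequality $\|\sum Y_i\|_q \le c_q (\sum \E|Y_i|^q)^{1/q}$ is elementary) terminates the recursion. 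Tracking the constants through the finitely many halvings and using $\E(Z_i^2 - \E Z_i^2)^{q} \lesssim \E|Z_i|^{2q}$ by Jensen eventually produces the two advertised terms.

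The main technical obstacle is bookkeeping the recursion cleanly so that the resulting constant depends only on $\delta$ and no extraneous moment terms appear. A more direct alternative I would keep in mind is the truncation approach: split $Z_i = Z_i \Indc\{|Z_i|\le L\} + Z_i \Indc\{|Z_i|>L\}$ at an appropriate level $L$, control the small part via the $L^2$ orthogonality of the $Z_i$'s (yielding the $(\sum \E Z_i^2)^{p/2}$ term via Chebyshev / second-moment methods), and bound the large-part contribution crudely by $\sum \E|Z_i|^p$; optimizing over $L$ yields both terms simultaneously without any induction, at the cost of a slightly more delicate probabilistic estimate. Either route gives a constant $L(\delta)$ depending only on $\delta$, which is all that subsequent results (in particular Lemma~\ref{lem:what}) require.
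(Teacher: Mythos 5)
The paper offers no internal proof of this proposition: it is imported verbatim as Theorem 3 of \cite{rosenthal1970subspaces} and used as a black box in the proof of Lemma \ref{lem:what}, so there is nothing in the paper to compare your argument against line by line. Your sketch is one of the standard proofs of Rosenthal's inequality, and its architecture is sound: Burkholder--Davis--Gundy for the martingale $S_k$ with $p=1+\delta\ge 2$, Minkowski in $L^{p/2}$ to split the square function $\sum_i Z_i^2$ into its mean and a centred part, and then either a reduction of the exponent down to the von Bahr--Esseen range $[1,2]$ or, alternatively, the truncation argument you mention. Any of these yields a constant depending only on $\delta$, which is all the paper needs downstream.

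One step in the recursion is incomplete as written, however. When you apply a Rosenthal-type bound at exponent $p/2$ to $Y_i = Z_i^2-\E Z_i^2$, the resulting square-function term is $\lb\sum_i \E Y_i^2\rb^{p/4}$, and $\E Y_i^2$ is controlled by the \emph{fourth} moment $\E Z_i^4$, which is not one of the two advertised quantities; the Jensen bound $\E|Y_i|^q\lesssim \E|Z_i|^{2q}$ only disposes of the sum-of-moments term, not of this one. To close the induction for $p>4$ you additionally need a moment-interpolation step: by log-convexity of moments, $\E Z_i^4\le (\E Z_i^2)^{(p-4)/(p-2)}(\E|Z_i|^p)^{2/(p-2)}$, then H\"older over $i$ and the weighted AM--GM inequality give $\lb\sum_i \E Z_i^4\rb^{p/4}\le \lb\sum_i \E Z_i^2\rb^{p/2}+\sum_i \E|Z_i|^p$, after which the constants indeed depend only on $\delta$ through the finitely many halvings. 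With that ingredient added (or by carrying out the truncation route instead, which avoids the recursion altogether), your plan is a complete and correct proof of the cited inequality.
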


\begin{proposition}[Theorem 2 of \cite{vonbahr65}]\label{prop:vonbahresseen}
 Let $\{Z_{i}\}_{i = 1, \ldots, n}$ be independent mean-zero random variables. Then for any $\delta \in [0, 1)$,
\[\E \bigg|\sum_{i=1}^{n}Z_{i}\bigg|^{1 + \delta}\le 2\sum_{i=1}^{n}\E |Z_{i}|^{1 + \delta}.\]
\end{proposition}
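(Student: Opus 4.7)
The plan is to prove the inequality by induction on $n$, where the inductive step rests on a sharp pointwise Taylor-type bound. Writing $p := 1+\delta \in [1,2)$ and $S_k := Z_1+\cdots+Z_k$, the case $p=1$ is immediate from the triangle inequality, giving the stated bound with constant $1$ (a fortiori with $2$), so I focus on $p\in(1,2)$.

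The core ingredient is the pointwise lemma: for every $p\in[1,2]$ and every $x,y\in\R$,
\[
|x+y|^p \;\le\; |x|^p + p\,\mathrm{sgn}(x)\,|x|^{p-1}\,y + 2\,|y|^p.
\]
To prove this I would first dispatch the trivial case $x=0$, where it reduces to $|y|^p\le 2|y|^p$. Otherwise dividing through by $|x|^p$ and setting $t := y/x$ reduces the statement to the one-variable inequality $|1+t|^p \le 1 + pt + 2|t|^p$ on $\R$. Setting $g(t) := 1 + pt + 2|t|^p - |1+t|^p$, I have $g(0)=0$ and $g$ is $C^1$ away from $\{0,-1\}$. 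A direct sign analysis of $g'$ on the three intervals $(-\infty,-1)$, $(-1,0)$, $(0,\infty)$ shows $g\ge 0$ throughout; the constant $2$ is precisely what is needed near $t=-1$, where $|1+t|^p$ vanishes and the first-order correction $pt$ has to be absorbed by $2|t|^p$.

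With the pointwise lemma in hand, I apply it with $x=S_{k-1}$ and $y=Z_k$ and take conditional expectation given $Z_1,\dots,Z_{k-1}$. Independence of $Z_k$ from $S_{k-1}$ together with $\E Z_k=0$ kills the linear term, leaving
\[
\E\!\left[|S_k|^p \,\big|\, Z_1,\dots,Z_{k-1}\right] \;\le\; |S_{k-1}|^p + 2\,\E|Z_k|^p.
\]
Taking outer expectations yields the recursion $\E|S_k|^p \le \E|S_{k-1}|^p + 2\,\E|Z_k|^p$. Iterating from $\E|S_1|^p = \E|Z_1|^p$ gives $\E|S_n|^p \le 2\sum_{i=1}^n \E|Z_i|^p$. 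If some $\E|Z_i|^p=\infty$ the bound is vacuous; otherwise the crude inequality $|a+b|^p \le 2^{p-1}(|a|^p+|b|^p)$ ensures $\E|S_k|^p<\infty$ at each stage, so all conditionings are well-defined.

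The main obstacle is the pointwise lemma with the sharp constant $2$. A symmetrization shortcut (replacing $Z_i$ by $Z_i - Z_i'$ for an independent copy $Z_i'$, then applying the Clarkson-type inequality $|a+b|^p+|a-b|^p \le 2(|a|^p+|b|^p)$ together with $|Z_i-Z_i'|^p \le 2^{p-1}(|Z_i|^p+|Z_i'|^p)$) only yields the weaker constant $2^p\le 4$. Recovering the advertised $2$ forces the asymmetric Taylor-remainder inequality, which is delicate because $|x|^p$ fails to be twice continuously differentiable at the origin for $p<2$, so a direct Taylor-with-remainder argument is unavailable. One instead has to push the inequality through by explicit case analysis, or alternatively via the Fourier representation $|x|^p = c_p\int_0^\infty (1-\cos(tx))\,t^{-1-p}\,dt$ combined with a careful bound on $1-\mathrm{Re}\!\big(\prod_j \varphi_j(t)\big)$ in terms of $\sum_j\big(1-\mathrm{Re}\,\varphi_j(t)\big)$ for characteristic functions of mean-zero random variables.
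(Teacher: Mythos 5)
The paper itself offers no proof of this proposition: it is imported verbatim as Theorem~2 of von Bahr and Esseen (1965) and used as a black box, so there is no internal argument to compare against. What you have written is essentially a reconstruction of the classical von Bahr--Esseen proof, and its architecture is correct: reduce to $p=1+\delta\in(1,2)$, invoke the pointwise bound $|x+y|^p\le |x|^p+p\,\mathrm{sgn}(x)|x|^{p-1}y+2|y|^p$, condition on $Z_1,\dots,Z_{k-1}$ so that independence and $\E Z_k=0$ annihilate the linear term, and iterate the recursion $\E|S_k|^p\le \E|S_{k-1}|^p+2\,\E|Z_k|^p$; the integrability remark via $|a+b|^p\le 2^{p-1}(|a|^p+|b|^p)$ makes the conditioning legitimate, and the iteration does give the constant $2$ since $\E|S_1|^p=\E|Z_1|^p$.

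The one soft spot is that the pointwise lemma---the only nontrivial ingredient---is asserted rather than proved: you reduce it to $g(t)=1+pt+2|t|^p-|1+t|^p\ge 0$ and claim a ``direct sign analysis of $g'$'' works, while at the same time calling this the main obstacle. The inequality is true, and the gap closes with an elementary case analysis that needs no delicate calculus: with $q=p-1\in[0,1]$, subadditivity of $u\mapsto u^{q}$ gives $(1+s)^{q}\le 1+s^{q}$ and $(1-s)^{q}\ge 1-s^{q}$ for $s\in[0,1]$, whence for $t\ge 0$,
\[
(1+t)^p-1-pt=\int_0^t p\big[(1+s)^{q}-1\big]\,ds\le t^p,
\]
and for $t\in[-1,0]$ (writing $u=-t$) similarly $|1+t|^p-1-pt\le u^p$; so on $t\ge -1$ even the constant $1$ suffices. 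For $t\le -1$, with $u=|t|\ge 1$, the claim becomes $(u-1)^p+pu-1\le 2u^p$, which follows from $(u-1)^p\le u^p$ together with $pu-1\le u^p$ (the map $u\mapsto u^p-pu+1$ is nondecreasing on $[1,\infty)$ and equals $2-p\ge 0$ at $u=1$). Incidentally, your remark that the constant $2$ is forced ``near $t=-1$'' is slightly off: at $t=-1$ the slack is $3-p$, and it is the ray $t\le -1$ (sharp as $p\downarrow 1$, $t\to-\infty$) that pins the constant. With this lemma filled in, your proof is complete and matches the cited source's approach.
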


To prove the lemma, we consider two cases:
\begin{enumerate}[label=(\arabic*)]
\item If $\delta \ge 1$, then by Markov's inequality,
  \[\p\lb\sum_{i=1}^{n}\hat{w}(X_i)^2\ge nt\rb\le \frac{\E[\sum_{i=1}^{n}\hat{w}(X_i)^2]}{nt} = \frac{\E[\hat{w}(X_1)^2]}{t}\le \frac{M^2}{t}.\]
  Since $\E[\hat{w}(X_i)\mid \Z_\tr] = 1$, $\E[\hat{w}(X_i)] = 1$. By Markov's inequality and Proposition \ref{prop:rosenthal},
  \begin{align}
    &\p\lb \bigg|\sum_{i=1}^{n}\hat{w}(X_i) - 1\bigg|\ge \frac{n}{2}\rb \nonumber\\
    & \le \frac{2^{1 + \delta}}{n^{1 + \delta}}\E \bigg|\sum_{i=1}^{n}\hat{w}(X_i) - \E[\hat{w}(X_i)]\bigg|^{1 + \delta}\nonumber\\
    & \le \frac{2^{1+ \delta}L(\delta)}{n^{1 + \delta}}\left\{ n\E|\hat{w}(X_i) - \E[\hat{w}(X_i)]|^{1 + \delta} + n^{(1 + \delta) / 2}\lb\E|\hat{w}(X_i) - \E[\hat{w}(X_i)]|^{2}\rb^{(1 + \delta) / 2}\right\}\nonumber\\
    & \stackrel{(i)}{\le} \frac{2^{2(1 + \delta)}L(\delta)}{n^{1 + \delta}}\left\{ n\E|\hat{w}(X_i)|^{1 + \delta} + n^{(1 + \delta) / 2}(\E|\hat{w}(X_i)|^2)^{(1 + \delta) / 2}\right\}\nonumber\\
    & \le \frac{2^{3+2\delta}L(\delta)M^{1 + \delta}}{n^{(1 + \delta) / 2}},\label{eq:denom_case1}
  \end{align}
  where (i) follows from the H\"{o}lder's
  inequality which gives 
  \[\E|\hat{w}(X_i) - \E[\hat{w}(X_i)]|^{1 + \delta}\le 2^{\delta}\lb \E|\hat{w}(X_i)|^{1+\delta} + |\E[\hat{w}(X_i)]|^{1+\delta}\rb\le 2^{1 + \delta}\E|\hat{w}(X_i)|^{1+\delta}.\]
\item If $\delta < 1$, then by Markov's inequality,
  \begin{align*}
    &\p\lb\sum_{i=1}^{n}\hat{w}(X_i)^2\ge nt\rb\le \frac{\E\left[\lb\sum_{i=1}^{n}\hat{w}(X_i)^2\rb^{(1 + \delta) / 2}\right]}{(nt)^{(1 + \delta) / 2}}
  \le \frac{\E\left[\sum_{i=1}^{n}\hat{w}(X_i)^{1 + \delta}\right]}{(nt)^{(1 + \delta) / 2}}\le \frac{n^{(1 - \delta) / 2 }M^{1 + \delta}}{t^{(1 + \delta)/2}},
  \end{align*}
  where the second last step follows from the simple fact that $\|x\|_{p}\le \|x\|_{1}$ for $p\ge 1$, with $p = 2 / (1 + \delta)$ and $x_i = \hat{w}(X_i)^{1 + \delta}$. By Markov's inequality and Proposition \ref{prop:vonbahresseen},
  \begin{multline}
    \p\lb \sum_{i=1}^{n}|\hat{w}(X_i) - 1|\ge \frac{n}{2}\rb \\
    \le \frac{2^{1 + \delta}}{n^{1 + \delta}}\E\bigg|\sum_{i=1}^{n}\hat{w}(X_i) - \E[\hat{w}(X_i)]\bigg|^{1 + \delta} \le \frac{2^{2 + \delta}n\E|\hat{w}(X_i) - \E[\hat{w}(X_i)]|^{1 + \delta}}{n^{1 + \delta}} \le \frac{2^{3+2\delta}M^{1+\delta}}{n^{\delta}}.\label{eq:denom_case2}
  \end{multline}
\end{enumerate}
The proof is then completed by setting $A = 2^{3+2\delta}L(\delta)\max\{M^2, M^{1+\delta}\}$.

\subsection{Asymptotic theory for weighted conformal inference}
As a direct consequence of Theorem \ref{thm:double_robustness_w}, Theorem \ref{thm:double_robustness_q}, and Theorem \ref{thm:prediction_intervals}, we can derive the following asymptotic results.
\begin{theorem}\label{thm:double_robustness}
  Let
  $(X_i,Y_i) \stackrel{i.i.d.}{\sim} (X,Y) \sim P_X \times P_{Y\mid
    X}$ and let $Q_X$ be another distribution on the domain of
  $X$. Set $N = |\calZ_\tr|$ and $n = |\calZ_\ca|$. Further, let
  $\{\calF_s(x; \calD): s\in \calS\}$ be any sequence of nested sets,
  $\hat{w}(x) = \hat{w}(x;\calZ_\tr)$ be an estimate of
  $w(x) = (dQ_X/dP_X)(x)$, and $\hat{C}(x)$ be the resulting conformal
  interval from Algorithm \ref{algo:weighted_split_general}. Assume that
  $\E\left[\hat{w}(X)\mid \calZ_\tr\right] = 1$ and $\E[w(X)] =
  1$. Assume that either B1 or B2 (or both) holds:
\begin{enumerate}[label = B\arabic*]
  \item $\underset{N\rightarrow \infty}{\lim} \E\Big[\big|\hat{w}(X) - w(X) \big|\Big] = 0$; 
  \item The assumptions C1 and C2 in Theorem \ref{thm:double_robustness_q} hold.
  \end{enumerate}
  Then 
  \begin{align}
    \label{eq:double_robust_marginal}
    \lim_{N,n\rightarrow\infty} \p_{(X,Y)\sim Q_X \times P_{Y \mid X}} \left( Y\in
    \hat{C}(X)\right) \ge 1-\alpha.
  \end{align}
  Furthermore, under B2, for any $\eps > 0$,
  \begin{align}\label{eq:conditional_coverage}
    \lim_{N,n\rightarrow \infty} \p_{X\sim Q_X}\left(\p\left(Y\in \hat{C}(X) \mid X\right)
    \le 1-\alpha-\eps \right) = 0.
  \end{align}
\end{theorem}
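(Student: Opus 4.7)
The plan is to deduce Theorem \ref{thm:double_robustness} directly from the three nonasymptotic bounds \eqref{eq:unconditional_coverage_rate_w}, \eqref{eq:unconditional_coverage_rate_q}, and \eqref{eq:conditional_coverage_rate_q}, by verifying that the error terms appearing there converge to zero under B1 or B2. Since B1 and B2 are exactly the asymptotic counterparts of the hypothesis of Theorem \ref{thm:double_robustness_w} and of conditions C1--C2 in Theorem \ref{thm:double_robustness_q}, the heavy analytic work is already done; what remains is a limit-taking argument.

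Under B1, I would invoke Theorem \ref{thm:double_robustness_w}, which gives $\p(Y \in \hat{C}(X)) \ge 1 - \alpha - \tfrac{1}{2}\E[|\hat{w}(X) - w(X)|]$. Taking $N \to \infty$ forces the correction term to vanish by assumption, yielding \eqref{eq:double_robust_marginal}. Under B2 the marginal claim follows from \eqref{eq:unconditional_coverage_rate_q}: the first term $(\log n)^{(1+\delta')/2(2+\delta')}/n^{\delta'/(2+\delta')}$ vanishes as $n \to \infty$, while $(\E[\hat{w}(X)\Delta^k(X)])^{1/(2+k)}$ and $(\E[w(X)\Delta^\ell(X)])^{1/(1+\ell)}$ vanish as $N \to \infty$ by C2(ii).

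For the conditional coverage \eqref{eq:conditional_coverage} under B2, I would apply the high-probability bound \eqref{eq:conditional_coverage_rate_q}. The subtlety is that $\beta$ appears inside the deterministic slack through $(\E[w(X)\Delta^\ell(X)]/\beta)^{1/\ell}$, so one cannot simply send $\beta \to 0$ at a fixed $N$. The fix is to let $\beta$ decay with $N$ at a rate slower than $a_N := \E[w(X)\Delta^\ell(X)]$: choose, for instance, $\beta_N = \sqrt{a_N}$ when $a_N > 0$ (and any $\beta_N \to 0$ otherwise), so that $a_N/\beta_N = \sqrt{a_N} \to 0$ by B2. Then every term inside the brace in \eqref{eq:conditional_coverage_rate_q} tends to zero as $N, n \to \infty$, so for any fixed $\eps > 0$ and all $N, n$ sufficiently large,
\[
  \p_{X \sim Q_X}\lb \p(Y \in \hat{C}(X) \mid X) \le 1 - \alpha - \eps \rb \le \beta_N \longrightarrow 0.
\]

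The main obstacle is precisely this coupling in the conditional argument between the exceptional probability $\beta$ and the width of the deterministic slack; once one recognizes that B2 allows $\beta_N$ to be tuned to $a_N$, the rest is bookkeeping. I would record the chosen $\beta_N$ explicitly, note that the same argument works with any sequence $\beta_N \to 0$ satisfying $a_N/\beta_N \to 0$, and conclude by also noting that when $a_N = 0$ eventually, the conditional bound holds with probability one for large $N$ and the limit is trivial.
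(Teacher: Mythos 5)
Your proposal is correct and follows the same route as the paper, which simply asserts that Theorem~\ref{thm:double_robustness} is a direct consequence of the nonasymptotic Theorems~\ref{thm:double_robustness_w} and~\ref{thm:double_robustness_q} without spelling out the limit-taking. You correctly identify and resolve the one genuine subtlety the paper elides: in passing from the high-probability bound~\eqref{eq:conditional_coverage_rate_q} to the limit statement~\eqref{eq:conditional_coverage}, the exceptional probability $\beta$ enters the error term, so one must let $\beta_N\to 0$ slowly enough (e.g., $\beta_N=\sqrt{a_N}$ with $a_N=\E[w(X)\Delta^\ell(X)]$) that the slack still vanishes while $\beta_N\to 0$.
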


\begin{theorem}\label{thm:asym_prediction_intervals}
  In the settings of Theorem \ref{thm:double_robustness}, assume further that the assumptions C'1 and C'2 in Theorem \ref{thm:prediction_intervals} hold. Then for any $\eps > 0$,
  \begin{align}\label{eq:optimality}
    \lim_{N,n\rightarrow \infty} \p_{X\sim Q_X}\left(\hat{C}(X)\subset \calO_{s_0 + \eps}(X)\right) = 1.
  \end{align} 
  
\end{theorem}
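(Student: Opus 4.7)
The plan is to derive Theorem \ref{thm:asym_prediction_intervals} as a direct corollary of the non-asymptotic bound in Theorem \ref{thm:prediction_intervals}, using the fact that the nested sets $\{\calO_{s}(x)\}_{s \in \calS}$ are monotonically increasing in $s$. The structural observation is that the error rate $\eps_n(\beta)$ appearing in Theorem \ref{thm:prediction_intervals} tends to zero as $N, n \to \infty$ for any fixed $\beta > 0$, because each of its constituent terms vanishes under the hypotheses of Theorem \ref{thm:asym_prediction_intervals}.

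First, I would fix an arbitrary $\eps > 0$ and an arbitrary $\beta \in (0, 1)$. Recall from Theorem \ref{thm:prediction_intervals} that
\[
\eps_{n}(\beta) = \frac{\const_{3}}{\beta^a}\log\lb\frac{1}{\beta}\rb \left\{\frac{1}{n^{\delta' / (1 + \delta')}} + \lb\E[\hat{w}(X)\Delta^{'k}(X)]\rb^{1/(1+k)} + \lb\E[w(X)\Delta^{'\ell}(X)]\rb^{1/\ell}\right\},
\]
where $\const_{3}$ and $a$ depend only on the fixed parameters of the problem. The first term clearly tends to zero as $n \to \infty$, and by assumption C'2 (ii), $\E[\hat{w}(X)\Delta^{'k}(X)] \to 0$ and $\E[w(X)\Delta^{'\ell}(X)] \to 0$ as $N \to \infty$. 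Since $\beta$ is fixed, we conclude that $\lim_{N, n \to \infty} \eps_{n}(\beta) = 0$. Consequently, there exist thresholds $N_0, n_0$ (depending on $\eps, \beta$) such that $\eps_n(\beta) \le \eps$ for all $N \ge N_0, n \ge n_0$.

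Next, I would invoke monotonicity of the nested sets. Since $\eps_n(\beta) \le \eps$, the inclusion $\calO_{s_0 + \eps_n(\beta)}(X) \subset \calO_{s_0 + \eps}(X)$ holds pointwise, so
\[
\{\hat{C}(X) \subset \calO_{s_0 + \eps_n(\beta)}(X)\} \subset \{\hat{C}(X) \subset \calO_{s_0 + \eps}(X)\}.
\]
Applying Theorem \ref{thm:prediction_intervals} with the above $\eps_n(\beta)$ yields
\[
\p_{X \sim Q_X}\lb \hat{C}(X) \subset \calO_{s_0 + \eps}(X) \rb \ge \p_{X \sim Q_X}\lb \hat{C}(X) \subset \calO_{s_0 + \eps_n(\beta)}(X) \rb \ge 1 - \beta
\]
for all $N \ge N_0, n \ge n_0$. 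Taking $\liminf_{N, n \to \infty}$ gives $\liminf \p_{X \sim Q_X}(\hat{C}(X) \subset \calO_{s_0 + \eps}(X)) \ge 1 - \beta$; since $\beta \in (0,1)$ was arbitrary, sending $\beta \to 0$ establishes \eqref{eq:optimality}.

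The argument is essentially routine once Theorem \ref{thm:prediction_intervals} is in hand, so the only real subtlety to flag is the role of the monotonicity of the nested family — it is what allows us to trade the (random-looking) bound $\eps_n(\beta)$ for the prescribed $\eps$ without requiring a more delicate modulus-of-continuity argument on $\calO$. There is no genuine obstacle beyond verifying that assumptions B1/B2 and C'1/C'2 together imply the exact hypotheses used by Theorem \ref{thm:prediction_intervals} (in particular, the normalization $\E[\hat{w}(X) \mid \calZ_\tr] = 1$ carries through without affecting the arguments, as already exploited in the nonasymptotic proof).
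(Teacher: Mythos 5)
Your proof is correct and takes exactly the route the paper intends: the paper states Theorem \ref{thm:asym_prediction_intervals} as a direct consequence of the nonasymptotic Theorem \ref{thm:prediction_intervals} without spelling out the passage to the limit, and your argument — fix $\beta$, observe $\eps_n(\beta)\to 0$ under C'2 (ii), use nestedness of $\{\calO_s\}$ to replace $\eps_n(\beta)$ by the target $\eps$, then send $\beta\to 0$ — is the standard way to fill in that gap. The only point worth flagging is that you correctly scope the hypotheses to C'1 and C'2 (B1/B2 are not needed here), so no further verification is required.
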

}

 \section{Double robustness of conformalized survival analysis: asymptotic results}\label{app:double_robustness_asym}
\subsection{Proof of Theorem \ref{thm:double_robustness_CQR}}
\revise{
  Let $\xi = \min\{r, \eps / b_2\}$. Since $\hat{L}(x) \le c_0$ for any $x$, 
\begin{align}
  \p(T\wedge c_0 \ge \hat{L}(x) \mid X = x)&\ge \p(T\wedge c_0 \ge c_0 \mid X = x)\\
    & \ge \left\{
    \begin{array}{ll}
      1 - \alpha & \text{if }q_{\alpha}(x)\ge c_0\\
      1 - \alpha - b_2 \xi &  \text{if }q_{\alpha}(x)\in [c_0 - \xi, c_0]
    \end{array}
  \right.,
\end{align}
where the first case is proved by the definition of $q_{\alpha}(x)$ and second case is proved by the condition A2 (i). Thus, whenever $q_{\alpha}(x)\ge c_0 - \xi$, 
\[\p(T\wedge c_0 \ge \hat{L}(x) \mid X = x)\ge 1 - \alpha - \eps.\]
As a result,
\begin{align}
  &\p_{X\sim Q_{X}}\lb \p(T\wedge c_0 \ge \hat{L}(X) \mid X = x) < 1 - \alpha - \eps\rb\\
  & = \p_{X\sim Q_{X}}\lb \p(T\wedge c_0 \ge \hat{L}(X) \mid X = x) < 1 - \alpha - \eps, q_{\alpha}(X) < c_0 - \xi\rb\\
  & \le \p_{X\sim Q_{X}}\lb \p(T\wedge c_0 \ge \hat{L}(X) \mid X = x) < 1 - \alpha - \eps
  ~\big|~ q_{\alpha}(X) < c_0 - \xi\rb,
\end{align}
and
\begin{align}
  \p_{X\sim Q_{X}}\lb T\wedge c_0 \ge \hat{L}(X)\rb &\ge (1 - \alpha - \eps)\p_{X\sim Q_{X}}(q_{\alpha}(X) \ge c_0 - \xi)\\
  &\quad + \p_{X\sim Q_{X}}\lb T\wedge c_0 \ge \hat{L}(X)\mid q_{\alpha}(X) < c_0 - \xi\rb \p(q_{\alpha}(X) < c_0 - \xi).
\end{align}
Since $\eps$ is arbitrary, it remains to prove
\[\lim_{N, n\rightarrow \infty}\p_{X\sim Q_{X}}\lb \p(T\wedge c_0 \ge \hat{L}(X) \mid X = x) < 1 - \alpha - \eps\mid q_{\alpha}(X) < c_0 - \xi\rb = 0,\]
and
\[\lim_{N, n\rightarrow \infty}\p_{X\sim Q_{X}}\lb T\wedge c_0 \ge \hat{L}(X)\mid q_{\alpha}(X) < c_0 - \xi\rb = 1 - \alpha.\]
The results to be proved are equivalent to Theorem \ref{thm:double_robustness_CQR} with $r$ replaced by $\xi$ and an additional assumption that
\begin{equation}
  \label{eq:break_tie_CQR}
  q_{\alpha}(X) < c_0 - r \text{ almost surely under }P_X \text{ and }Q_X.
\end{equation}
The latter is due to that $w(X) < \infty$ almost surely under $P_X$. Throughout the rest of the proof, we will assume \eqref{eq:break_tie_CQR}.
}

Recall that $Y_i = T_i \wedge c_0$, $w(x) = 1 / \c(x)$ and $\hat{w}(x) = 1 / \hat{\c}(x)$. First we show that Assumption A1 of Theorem \ref{thm:double_robustness_CQR} implies Assumption B1 of Theorem \ref{thm:double_robustness}. Since $\E[1 / \hat{\c}(X)\mid \calZ_{\tr}] < \infty$ almost surely and
$\E[1 / \c(X)] < \infty$, we set
$$\hat{w}_{N}(x) = \frac{1 / \hat{\c}(x)}{\E[1 / \hat{\c}(X)\mid
  \calZ_{\tr}]}, \,\, w(x) = \frac{dP_{X}(x)}{dP_{X\mid T=1}(x)} = \frac{1 / \c(x)}{\E[1 / \c(X)]}$$
and observe 
$$\E[\hat{w}_{N}(X)\mid \calZ_{\tr}] = 1 = \E[w(X)].$$ 
Thus, Assumption B1 reduces to
\[\lim_{N\rightarrow \infty}\E\bigg|\frac{1 / \hat{\c}(X)}{\E[1 / \hat{\c}(X)\mid \calZ_{\tr}]} - \frac{1 / \c(X)}{\E[1 / \c(X)]}\bigg| = 0.\]
In fact,
\begin{align*}
  &\lim_{N\rightarrow \infty}\E\bigg|\frac{1 / \hat{\c}(X)}{\E[1 / \hat{\c}(X)\mid \calZ_{\tr}]} - \frac{1 / \c(X)}{\E[1 / \c(X)]}\bigg| \\
  &\le \limsup_{N\rightarrow \infty}\frac{1}{\E[1 / \hat{\c}(X)\mid \calZ_{\tr}]}\E\bigg|\frac{1}{\hat{\c}(X)} - \frac{1}{\c(X)}\bigg| + \limsup_{N\rightarrow \infty}\E \left[\frac{1}{\c(X)}\right]\E\bigg|\frac{1}{\E[1 / \hat{\c}(X)\mid \calZ_{\tr}]} - \frac{1}{\E[1 / \c(X)]}\bigg|\\
  &\stackrel{(1)}{\le} \limsup_{N\rightarrow \infty}\E\bigg|\frac{1}{\hat{\c}(X)} - \frac{1}{\c(X)}\bigg| + \limsup_{N\rightarrow \infty}\E \left[\frac{1}{\c(X)}\right]\E \bigg|\E\left[\frac{1}{\hat{\c}(X)}\mid \calZ_{\tr}\right] - \E\left[\frac{1}{\c(X)}\right]\bigg|\\
  & \le \limsup_{N\rightarrow \infty}\E\bigg|\frac{1}{\hat{\c}(X)} - \frac{1}{\c(X)}\bigg| + \limsup_{N\rightarrow \infty}\E \left[\frac{1}{\c(X)}\right]\E\left( \E\left[\bigg|\frac{1}{\hat{\c}(X)} - \frac{1}{\c(X)}\bigg|\mid \calZ_{\tr}\right]\right)\\
  &\stackrel{(2)}{=} \left( 1 + \E \left[\frac{1}{\c(X)}\right]\right)\limsup_{N\rightarrow \infty}\E\bigg|\frac{1}{\hat{\c}(X)} - \frac{1}{\c(X)}\bigg| = 0,
\end{align*}
where step (1) uses the fact that $\c(x), \hat{\c}(x)\in [0, 1]$, and step (2) uses Assumption A1 and the condition $\E [1 / \c(X)] < \infty$.

Next we show that Assumption A2 implies Assumption B2. Remark that
Assumption C1 is satisfied since
$\E[1 / \hat{c}(X)^{1 + \delta}] < \infty$). It remains to prove that
$\text{A2} \Longrightarrow \text{C2}$. Let
\[\calO_{s}(x) = [q_{\alpha}(x; c_0) - s, \infty).\]
Clearly, Assumption A2 (i) implies Assumption C2 (i) with $s_0 = 0$. Moreover,
\begin{align}
  \Delta_{s}(x) & = \inf\left\{\Delta: \hat{q}_{\alpha}(x; c_0) - s + \Delta \ge q_{\alpha}(x; c_0) - s \text{ and } q_{\alpha}(x; c_0) - s + \Delta \ge \hat{q}_{\alpha}(x; c_0) - s\right\}\\
                &= \inf\left\{\Delta: \hat{q}_{\alpha}(x; c_0) + \Delta \ge q_{\alpha}(x; c_0) \text{ and } q_{\alpha}(x; c_0) + \Delta \ge \hat{q}_{\alpha}(x; c_0)\right\}\\
                &= |\hat{q}_{\alpha}(x; c_0) - q_{\alpha}(x; c_0)| = \calE(x).
\end{align}
Thus, $\Delta(x) = \sup_{s\in [s_0 - r, s_0]}\Delta_s(x) \le \calE(x)$ and Assumption C2 (ii) holds.

\subsection{Proof of Theorem \ref{thm:double_robustness_CDR}}
\revise{
As in the proof of Theorem \ref{thm:double_robustness_CQR}, we first show that it remains to prove the result when $q_{\alpha + r}(x; c_0) < c_0$ almost surely. Let $\xi = \min\{r, \eps\}$. Since $\hat{L}(x) \le c_0$ for any $x$,
\[\p(T\wedge c_0 \ge \hat{L}(x) \mid X = x)\ge \p(T\wedge c_0 \ge c_0 \mid X = x) = \p(T \ge c_0 \mid X = x).\]
Then the condition (i) implies 
\[\p(T\ge c_0 \mid X = x)\ge \p(T \ge q_{\alpha + \eps}(x) \mid X = x) = 1 - \alpha - \xi, \quad \text{if }q_{\alpha+\xi}(x) \ge c_0.\]
As a result,
\begin{align}
  &\p_{X\sim Q_{X}}\lb \p(T\wedge c_0 \ge \hat{L}(X) \mid X = x) < 1 - \alpha - \eps\rb\\
  & = \p_{X\sim Q_{X}}\lb \p(T\wedge c_0 \ge \hat{L}(X) \mid X = x) < 1 - \alpha - \eps, q_{\alpha+\xi}(X) < c_0\rb\\
  & \le \p_{X\sim Q_{X}}\lb \p(T\wedge c_0 \ge \hat{L}(X) \mid X = x) < 1 - \alpha - \eps\mid q_{\alpha+\xi}(X) < c_0\rb,
\end{align}
and
\begin{align}
  \p_{X\sim Q_{X}}\lb T\wedge c_0 \ge \hat{L}(X)\rb &\ge (1 - \alpha - \eps)\p_{X\sim Q_{X}}(q_{\alpha+\xi}(X) \ge c_0)\\
  &\quad + \p_{X\sim Q_{X}}\lb T\wedge c_0 \ge \hat{L}(X)\mid q_{\alpha+\eps}(X) \ge c_0\rb \p(q_{\alpha+\xi}(X) < c_0).
\end{align}
Since $\eps$ is arbitrary, it remains to prove
\[\lim_{N, n\rightarrow \infty}\p_{X\sim Q_{X}}\lb \p(T\wedge c_0 \ge \hat{L}(X) \mid X = x) < 1 - \alpha - \eps\mid q_{\alpha + \xi}(X) < c_0\rb = 0,\]
and
\[\lim_{N, n\rightarrow \infty}\p_{X\sim Q_{X}}\lb T\wedge c_0 \ge \hat{L}(X)\mid q_{\alpha + \xi}(X) < c_0\rb = 1 - \alpha.\]
The results to be proved are equivalent to Theorem \ref{thm:double_robustness_CDR} with $r$ replaced by $\xi$ and an additional assumption that
\begin{equation}
  \label{eq:break_tie_CDR}
  q_{\alpha + r}(X) < c_0 \text{ almost surely under }P_X \text{ and }Q_X.
\end{equation}
The latter is due to that $w(X) < \infty$ almost surely under $P_X$. Throughout the rest of the proof, we will assume \eqref{eq:break_tie_CDR}.
}

Using the same argument as in the proof of Theorem
\ref{thm:double_robustness_CQR}, it suffices to show that A2 from
Theorem \ref{thm:double_robustness_CDR} implies Assumption C2. Let
\[\calO_{s}(x) = [q_{\alpha - s}(x; c_0), \infty),\]
Clearly, Assumption A2 (i) implies Assumption C2 (i) with $s_0 = 0$.

To compute $\Delta(x)$, we first replace $r$ by $2r$. Assumption A2 (i) with $\eps = 0$ implies that $F(q_{\alpha - s}(x; c_0)\mid X = x) = 1 - \alpha + s$ for any $s \in [s_0 - 2r, s_0] = [ - 2r, 0]$. 
Then for any $s \in [- r, 0]$, 
\begin{align}
  \Delta_{s}(x)
  &= \inf\left\{\Delta: q_{\alpha - s + \Delta}(x; c_0)\ge \hat{q}_{\alpha - s}(x; c_0)\text{ and } \hat{q}_{\alpha - s + \Delta}(x; c_0) \ge q_{\alpha - s}(x; c_0)\right\}.
\end{align}
Let $\event$ denote the event that $\calE(X)\le r$. Then on $\event$, $\alpha - s + \calE(X) \in [\alpha, \alpha + 2r]$, and by A2 (i), 
\begin{align}
  &  \alpha - s + \calE(X)= F(q_{\alpha - s}(X; c_0)\mid X) + \calE(X)\text{ and } F(q_{\alpha - s + \calE(x)}(X; c_0)\mid X)  = \alpha - s + \calE(X)\\
  & \Longrightarrow \alpha - s + \calE(X)\ge F(\hat{q}_{\alpha - s}(X; c_0)\mid X)\text{ and } F(\hat{q}_{\alpha - s + \calE(X)}(X; c_0)\mid X) \ge \alpha - s\\
  & \Longrightarrow q_{\alpha - s + \calE(X)}(X; c_0)\ge \hat{q}_{\alpha - s}(X; c_0)\text{ and } \hat{q}_{\alpha - s + \calE(X)}(X; c_0) \ge q_{\alpha - s}(X; c_0)\\
  & \Longrightarrow \Delta_{s}(X)\le \calE(X).
\end{align}
Thus on $\event$,
\[\Delta(X) = \sup_{s\in [s_0 - r, s_0]}\Delta_{s}(X)\le \calE(X).\]
On the other hand, $\Delta(X)\le 1$ almost surely. Since $c(X)\le 1$ almost surely, A2 (ii) implies that $\E [\Delta(X)]\rightarrow 0$. By Markov's inequality,
\[\p(\event) \rightarrow 1.\]
As a result,
\begin{align}
  \lim_{N\rightarrow \infty}\E[\Delta(X) / \hat{c}(X)] 
  & \le \limsup_{N\rightarrow \infty}\E[\Delta(X)I_{\event} / \hat{c}(X)] + \E[\Delta(X)I_{\event^c} / \hat{c}(X)]\\
  & \le \limsup_{N\rightarrow \infty}\E [\calE(X) / \hat{c}(X)] + \alpha \E[I_{\event^c} / \hat{c}(X)]\\
  & \le \limsup_{N\rightarrow \infty}\E [\calE(X) / \hat{c}(X)] + \alpha \p(\event^c)^{1 + 1 / \delta} \E [1 / \hat{c}(X)^{1 + \delta}] = 0,
\end{align}
where the second last step follows from H\"{o}lder's inequality. Similarly, we have
\[\lim_{N\rightarrow \infty}\E[\Delta(X) / \c(X)] = 0.\]
Since $\E[1 / \hat{\c}(X)], \E[1 / \c(X)]\ge 1$, we conclude that
\[\lim_{N\rightarrow \infty}\E[\hat{w}(X)\Delta(X)] = \lim_{N\rightarrow \infty}\E[w(X)\Delta(X)] = 0.\]
In conclusion, A2 implies C2 (ii). This completes the proof of Theorem \ref{thm:double_robustness_CDR}.

\revise{
  \subsection{Adaptivity of conformalized survival analysis: asymptotic results}\label{app:adaptivity}
  Following the steps of Theorem \ref{thm:double_robustness_CQR} and Theorem \ref{thm:double_robustness_CDR}, we can show that
  \[\Delta'_{s}(X) \le \calE(X).\]
  By Theorem \ref{thm:prediction_intervals} with $k = \ell = 1$, for CQR,
  \[\lim_{N, n\rightarrow \infty}\p_{X\sim Q_X}\lb [\hat{L}(X), \infty) \subset [q_{\alpha}(x; c_0) - \eps, \infty)\rb = 1,\]
  and for CDR, 
  \[\lim_{N, n\rightarrow \infty}\p_{X\sim Q_X}\lb [\hat{L}(X), \infty) \subset [q_{\alpha - \eps}(x; c_0), \infty)\rb = 1.\]
  Theorem \ref{thm:adaptivity} is then proved.

  Next, we discuss the more involved case where $c_0$ is growing with $n$. Intuitively, as $c_0$ approaches the upper endpoint of the domain of $T$, both CQR and CDR lower predictive bound should approach the oracle $\alpha$-th conditional quantile of $T$. Nevertheless, we cannot invoke Theorem \ref{thm:double_robustness} directly because $1/\p(C\ge c_0 \mid X)$ would diverge as $c_0$ grows. However, the nonasymptotic result provides sufficient conditions under which $\hat{L}(X)$ converges to $q_{\alpha}(X)$ with high probability under $Q_{X}$. We say a sequence $a_{n} = O\lb\mathrm{polyLog}(n)\rb$ iff there exists $\gamma > 0$ such that $a_{n} = O((\log n)^\gamma)$.
  \begin{theorem}\label{thm:adaptivity_growing_c0}
    Assume that $c_0 = c_{0, n}$ grows with $n$ with
    \begin{equation}
      \label{eq:c0n_limit}
      \p_{X\sim Q_{X}}\lb\lim_{n\rightarrow \infty}c_{0, n}\ge q_{\alpha}(X)\rb = 1,
    \end{equation}
    and, for some $\delta > 0$,
    \begin{equation}
      \label{eq:polylogn}
      \E\left[\frac{1}{\hat{\p}(C\ge c_{0, n}\mid X)^{1 + \delta}} + \frac{1}{\p(C\ge c_{0, n}\mid X)^{1 + \delta}}\right] = O\lb\mathrm{polyLog}(n)\rb.
    \end{equation}
    \begin{enumerate}
    \item For CQR, assume further that
      \begin{enumerate}
      \item there exists $b_1, r > 0$ such that, for any $\eps\in [0, r]$,
        \[\p\lb T \ge q_{\alpha}(X) - \eps\rb\ge 1 - \alpha + b_1 \eps.\]
      \item $\hat{q}_{\alpha}(x; c_0) = \hat{q}_{\alpha}(x) \wedge c_0$ for some estimate $\hat{q}_{\alpha}(x)$ of the $\alpha$-th conditional quantile of $T$.
      \item For some $\gamma, \omega > 0$, 
      \[\E\left[\calE^{*\gamma}(X)\right] = O\lb\frac{1}{n^{\omega}}\rb, \quad \text{where }\calE^{*}(x) = |\hat{q}_{\alpha}(x) - q_{\alpha}(x)|.\]
      \end{enumerate}
      Then for any $\eps > 0$,
      \[\lim_{N, n\rightarrow \infty}\p_{X\sim Q_X}(\hat{L}(X) \ge q_{\alpha}(X) - \eps) = 1.\]      
    \item For CDR, assume further that
      \begin{enumerate}
      \item there exists $r > 0$ such that, for any $\eps\in [0, r]$,
        \[\p\lb T \ge q_{\alpha - \eps}(X)\rb\ge 1 - \alpha + \eps.\]
      \item $\hat{q}_{s}(x; c_0) = \hat{q}_{s}(x) \wedge c_0$ for $s\in [\alpha - r, \alpha]$ and some estimate $\hat{q}_{s}(x)$ of the $s$-th conditional quantile of $T$.
      \item For some $\gamma, \omega > 0$, 
      \[\E\left[\calE^{*\gamma}(X)\right] = O\lb\frac{1}{n^{\omega}}\rb, \quad \text{where }\calE^{*}(x) = \sup_{s\in [\alpha - r, \alpha]}|F(\hat{q}_{s}(x) \mid x) - F(q_{s}(x)\mid x)|.\]
      \end{enumerate}
      Then for any $\eps > 0$,
      \[\lim_{N, n\rightarrow \infty}\p_{X\sim Q_X}(\hat{L}(X) \ge q_{\alpha - \eps}(X)) = 1.\]      
    \end{enumerate}
  \end{theorem}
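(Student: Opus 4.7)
The plan is to invoke Theorem \ref{thm:prediction_intervals} with $c_0 = c_{0,n}$ and carefully track how the bound $\eps_n(\beta)$ scales under the polylogarithmic moment conditions \eqref{eq:polylogn}. The overall idea is two-fold. First, use the growing-threshold assumption \eqref{eq:c0n_limit} to transfer the conclusion of Theorem \ref{thm:prediction_intervals}, which is phrased in terms of the truncated quantile $q_{\alpha}(X; c_{0,n})$, back to the untruncated $q_{\alpha}(X)$. Second, verify that the polynomial decay $\E[\calE^{*\gamma}(X)] = O(n^{-\omega})$ dominates the polylogarithmic growth of the weight moments, so that $\eps_n(\beta_n) \to 0$ for a slowly vanishing sequence $\beta_n$.

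For the first step, on the event $\mathcal{A}_n \defn \{c_{0,n} \ge q_\alpha(X)\}$ we have $q_\alpha(X; c_{0,n}) = q_\alpha(X)$ and $q_{\alpha-\eps}(X; c_{0,n}) = q_{\alpha-\eps}(X)$ for any $\eps \in [0, \alpha]$, since $q_{\alpha-\eps}(X) \le q_\alpha(X) \le c_{0,n}$ on this event. Assumption \eqref{eq:c0n_limit} gives $\p_{X\sim Q_X}(\mathcal{A}_n) \to 1$, so it suffices to bound $\hat{L}(X)$ below by the \emph{truncated} oracle quantity---namely $q_\alpha(X; c_{0,n}) - \eps_n$ for CQR and $q_{\alpha - \eps_n}(X; c_{0,n})$ for CDR---with $Q_X$-probability going to one, for some deterministic $\eps_n \to 0$.

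For the second step, assumption C'1 of Theorem \ref{thm:prediction_intervals} holds with $M = O(\mathrm{polyLog}(n))$ by \eqref{eq:polylogn}. Assumption C'2(i) follows from the local slope condition (b) together with the observation that on $\mathcal{A}_n$ the truncation does not affect the event $\{T \ge q_\alpha(X) - \eps\}$ once $\eps < r$. For C'2(ii), the assumption $\hat{q}_s(x; c_0) = \hat{q}_s(x) \wedge c_0$ and the $1$-Lipschitz property of $t \mapsto t \wedge c_0$ yield $\Delta'(x) \le \calE^*(x)$ in the CQR case, mirroring the proof of Theorem \ref{thm:adaptivity}. Choosing $k = \ell$ small enough so that $k(1+\delta)/\delta \le \gamma$, H\"{o}lder's and Jensen's inequalities give
\begin{align}
\E\bigl[\hat{w}(X)\calE^k(X)\bigr] \le \bigl(\E[\hat{w}(X)^{1+\delta}]\bigr)^{1/(1+\delta)} \bigl(\E[\calE^{*\gamma}(X)]\bigr)^{k/\gamma} = O\bigl(\mathrm{polyLog}(n)\cdot n^{-\omega k/\gamma}\bigr),
\end{align}
and analogously for $\E[w(X)\calE^\ell(X)]$. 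Since $\const_3$ in Theorem \ref{thm:prediction_intervals} depends on $M$ polynomially, $\const_3 = O(\mathrm{polyLog}(n))$; taking $\beta_n = 1/\log n$ also keeps the prefactor $\beta_n^{-a}\log(1/\beta_n)$ polylogarithmic, so $\eps_n(\beta_n) \to 0$ at a polylogarithmic-times-polynomial rate. Combining this with the first step yields the claim.

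The main obstacle I anticipate is the CDR case, where the oracle discrepancy $\calE$ appearing in Theorem \ref{thm:double_robustness_CDR} is defined relative to the conditional distribution of $T \wedge c_{0,n}$, while the assumed polynomial decay is stated for $\calE^*$ defined relative to the untruncated conditional distribution of $T$. The two match whenever both $\hat{q}_s(X)$ and $q_s(X)$ lie below $c_{0,n}$, but on the exceptional event $\{\hat{q}_s(X) \ge c_{0,n} > q_s(X)\}$ the truncated CDF jumps sharply to $1$ and the two discrepancies can differ by up to $1 - (\alpha - s)$. Handling this will require combining \eqref{eq:c0n_limit} with Markov's inequality on $\calE^{*\gamma}$ to show this event has polynomially small probability, and then an additional application of H\"{o}lder to absorb the resulting indicator against the polylogarithmic weight moment---this bookkeeping is where the proof is most delicate.
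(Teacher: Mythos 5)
Your plan mirrors the paper's own proof almost step for step: both invoke Theorem~\ref{thm:prediction_intervals} with $c_0 = c_{0,n}$; both read off $\delta_0 = \delta$, $M = O(\mathrm{polyLog}(n))$ from~\eqref{eq:polylogn} to verify C'1; both use a H\"{o}lder-type moment split with exponents chosen from $\gamma$ and $\delta$ (the paper takes $k = \ell = \gamma\delta/(1+\delta)$); both set $\beta_n = 1/\log n$; and both finish by invoking~\eqref{eq:c0n_limit} to pass from the truncated oracle set $\calO_{s_0+\eps_n}(X)$ back to the untruncated quantiles. For CQR the argument really is airtight, because the $1$-Lipschitz property of $t\mapsto t\wedge c_0$ gives $\Delta'_s(x) = |q_{\alpha}(x)\wedge c_0 - \hat{q}_{\alpha}(x)\wedge c_0| \le |q_{\alpha}(x)-\hat{q}_{\alpha}(x)| = \calE^*(x)$ unconditionally.

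The ``main obstacle'' you flag in the CDR case is a genuine gap, and the paper's proof in fact contains it. The paper asserts that ``since $z\mapsto z\wedge c_0$ is a contraction and $F(\cdot\mid x)$ is non-decreasing, it is easy to show $\calE(x)\le \calE^*(x)$,'' where $\calE$ is the discrepancy measured with the \emph{truncated} CDF $F_{c_0}(\cdot\mid x)$ of $T\wedge c_0$. But on the event $\{\hat{q}_{s}(x)\ge c_0 > q_{s}(x)\}$ one has $F_{c_0}(\hat{q}_{s}(x)\wedge c_0\mid x) = F_{c_0}(c_0\mid x) = 1$, so $\calE(x)\ge 1 - F(q_s(x)\mid x) = 1-s$, which is bounded away from zero, whereas the untruncated discrepancy $\calE^*(x)=|F(\hat{q}_s(x)\mid x)-F(q_s(x)\mid x)|$ can be arbitrarily small if $T$ has an appreciable upper tail beyond $c_0$. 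The inequality $\calE(x)\le \calE^*(x)$ therefore reverses exactly in the case you identify, and your scrutiny has exposed a hole that the published proof glosses over with ``it is easy to show.''

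Two remarks on closing the gap. First, the Markov--H\"{o}lder route you sketch is sound but not self-contained: on the bad event one has $\calE^*(x)\ge F(c_{0,n}\mid x) - s$, so Markov's inequality controls the bad probability only if $F(c_{0,n}\mid X) - \alpha$ is bounded away from $0$, which is an additional implicit assumption. Second, a cleaner and more economical fix bypasses $\calE$ entirely and bounds $\Delta'_s(x)$ directly from $\calE^*(x)$ on $\{q_\alpha(X)\le c_{0,n}\}$: when $\hat{q}_{\alpha-s}(x)\ge c_{0,n}$, the fitted nested set $\calF_s(x) = [c_{0,n},\infty)$ is automatically contained in $\calO_{s+\Delta}(x) = [q_{\alpha-s-\Delta}(x)\wedge c_{0,n},\infty)$ for \emph{every} $\Delta\ge 0$, so that case is harmless for $\Delta'_s$ rather than problematic; the remaining comparisons reduce to untruncated CDF inequalities controlled by $\calE^*$. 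The vanishing complement $\{q_\alpha(X)> c_{0,n}\}$ is then absorbed by dominated convergence using $\Delta'\le r$ and~\eqref{eq:c0n_limit}. Your proposal, appropriately completed, reaches the same destination, but be aware that the intermediate inequality $\calE\le\calE^*$ you inherited from the paper should be abandoned rather than patched.
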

  \begin{remark}
    The condition \eqref{eq:c0n_limit} implies that essential supremum of $C$ must be at least as large as the essential supremum of $q_{\alpha}(X)$. Apparently, it is a necessary condition since otherwise $q_{\alpha}(x)$ would be strictly larger than $\hat{L}(x)$ for a nonnegligible fraction of units. 
  \end{remark}
  \begin{remark}
    The moment condition on $\calE^{*}(X)$ is satisfied for most parametric, nonparametric, and sparse models under standard regularity conditions.
  \end{remark}
  \begin{proof}
    We shall verify the assumptions of Theorem \ref{thm:prediction_intervals}. First, the condition \eqref{eq:polylogn} implies the condition C'1 with 
    \begin{equation}
      \label{eq:M_polylogn}
      \delta_0 = \delta, \quad M = O(\mathrm{polyLog}(n)).
    \end{equation}
    Next, we note that, for any $x$, 
    \[q_{\alpha}(x; c_0) = q_{\alpha}(x) \wedge c_0.\]
    Since $z\mapsto z\wedge c_0$ is a contraction and $F(\cdot \mid x)$ is non-decreasing, it is easy to show
    \[\calE(x)\le \calE^{*}(x),\]
    for both CQR and CDR where $\calE(x)$ is defined in Theorem \ref{thm:double_robustness_CQR} for CQR and in Theorem \ref{thm:double_robustness_CDR} for CDR. Using the same arguments as in the proofs of Theorem \ref{thm:double_robustness_CQR} and Theorem \ref{thm:double_robustness_CDR}, we can show that
    \[\Delta'(x)\le \calE(x).\]
    Let $k = \ell = \gamma\delta / (1 + \delta)$. By H\"{o}lder's inequality,
    \[\E[\hat{w}(X)\Delta^{'k}(X)]\le \E[\hat{w}(X)\calE^{*k}(X)]\le\lb\E[\hat{w}(X)^{1 + \delta}]\rb^{1 / (1 + \delta)}\lb\E[\calE^{*\gamma}(X)]\rb^{\delta / (1 + \delta)} = O\lb\frac{\mathrm{polyLog(n)}}{n^{\omega \delta / (1 + \delta)}}\rb.\]
    Similarly,
    \[\E[w(X)\Delta^{'\ell}(X)] = O\lb\frac{\mathrm{polyLog(n)}}{n^{\omega \delta / (1 + \delta)}}\rb.\]
    Finally, we verify the condition C'2 (i) for CQR and CDR separately.
    \begin{itemize}
    \item For CQR, let
      \[\calO_{s}(x) = [q_{\alpha}(x)\wedge c_0 - s, \infty).\]
      For any $s > 0$,
      \[T \ge q_{\alpha}(x) - s\Longrightarrow T\wedge c_0 \ge q_{\alpha}(x) \wedge c_0 - s.\]
    Thus,
    \[\p\lb T\wedge c_0 \in \calO_{s}(x)\rb\ge \p\lb T \ge q_{\alpha}(X) - s\rb\ge 1 - \alpha + b_1 s.\]
  \item For CDR, let
    \[\calO_{s}(x) = [q_{\alpha - s}(x)\wedge c_0, \infty).\]
    For any $s > 0$,
    \[T \ge q_{\alpha - s}(x)\Longrightarrow T\wedge c_0 \ge q_{\alpha - s}(x) \wedge c_0.\]
    Thus,
    \[\p\lb T\wedge c_0 \in \calO_{s}(x)\rb\ge \p\lb T \ge q_{\alpha}(X) - s\rb\ge 1 - \alpha + s.\]
  \end{itemize}
  Thus, we have shown that all assumptions of Theorem \ref{thm:prediction_intervals} are satisfied. Let $\beta = 1 / \log n$. Since $\const_3$ depends on $M$ polynomially and other constants, \eqref{eq:M_polylogn} implies that
  \[\const_3 = O(\mathrm{polyLog(n)}).\]
  By Theorem \ref{thm:prediction_intervals},
  \[\p_{X \sim Q_{X}}\lb \hat{C}(X)\subset \mathcal{O}_{s_0 + \eps_{n}}(X)\rb = 1 - o(1),\]
  where
  \[\eps_{n} = O\lb\frac{\mathrm{polyLog}(n)}{n^{(\omega \wedge 1)\delta' / (1 + \delta')}}\rb = o(1).\]
  Since $\lim_{n\rightarrow \infty}c_{0, n}\ge q_{\alpha}(X)$ almost surely under $Q_{X}$, for any $s\ge 0$,
  \[\p_{X\sim Q_{X}}\lb\lim_{n\rightarrow \infty}(q_{\alpha}(X) - s)\wedge c_{0, n}\ge (q_{\alpha}(X) - s)\rb = 1,\]
  and
  \[\p_{X\sim Q_{X}}\lb\lim_{n\rightarrow \infty}q_{\alpha - s}(X)\wedge c_{0, n}\ge q_{\alpha - s}(X)\rb = 1.\]
  The proof is then completed by plugging in $\calO_{s}(x)$ for CQR and CDR, respectively.
  \end{proof}
}

\subsection{When is CMR-LPB doubly robust?}\label{subapp:CMR}

For CMR, a natural oracle nested set is given by
\[\calO_{s}(x) = [m(x; c_0) - s, \infty), \quad \text{where }m(x; c_0) = \E[T\wedge c_0 \mid X = x].\]
However, Assumption B2 (b) (i) with $\eps = 0$ requires the existence of $s_0$ such that
\[\p(T\wedge c_0\in \calO_{s_0}(X)\mid X) = 1 - \alpha, \quad \text{almost surely}.\]
This implies that for some $s_0$,
\[\p(T\wedge c_0\ge m(X; c_0) - s_0\mid X) = 1 - \alpha, \quad \text{almost surely}.\]
The above equality does not hold in general. One exception
is the additive case with homoscedastic errors:
\[T\wedge c_0 = m(X; c_0) + \nu, \quad \mathrm{Var}[\nu \mid X] =
  \mathrm{Var}[\nu].\] In this case, we can derive the double
robustness of the CMR-LPB based on Theorem
\ref{thm:double_robustness}. 

\revise{
\section{Additional results}
\subsection{Additional details for the selection of $c_0$}
\label{appx:c_up}
We consider two cases. 
\begin{itemize}
\item When there is prior information on the consoring 
mechanism:
for example, if the researcher
knows that there exists $\bar{c}$ such that 
$\p(C \ge \bar{c} ~|~ X) \ge \eta$ almost
surely, for some $\eta > 0$. With any consistent
estimator $\tilde{c}(\cdot)$ for $c(\cdot)$, 
we can again consider the truncated estimator
$\hat{c}(x) = \tilde{c}(x) \vee \frac{\eta}{2}$
and for any $\varepsilon >0$,
\begin{align*}
& \E\bigg[\Big|\frac{1}{\hat{c}(X)} - \frac{1}{c(X)}\Big|\bigg]\\
= & \E\bigg[\Big|\frac{1}{\hat{c}(X)} - \frac{1}{c(X)}\Big| 
\cdot \Indc\Big\{\big|\hat{c}(X) - c(X)\big| \ge \varepsilon\Big\}\bigg]
+ \E\bigg[\Big|\frac{1}{\hat{c}(X)} - \frac{1}{c(X)}\Big| 
\cdot \Indc\Big\{\big|\hat{c}(X) - c(X)\big| < \varepsilon\Big\}\bigg]\\
\le & \frac{4}{\eta}\cdot \p\Big(\big|\hat{c}(X) - c(X) \big| \ge \varepsilon\Big)
+ \frac{4\varepsilon}{\eta^2}.
\end{align*}
The above implies that Assumpation \textbf{A}1 is satisfied.
\item 
When there is no available 
prior information on such $\bar c$, researchers
can instead use the data to determine $\bar c$.
Suppose the covariate $X$ takes value in a finite
set $\{x_1,x_2,\ldots,x_m\}$, where $\p(X = x_l) >0$
for any $l\in[m]$. Let $\Delta = \min_{l\in[m]} \p(X =x_l)$.
Further divide 
$\calZ_{\tr}$ into two disjoint $\calZ_{\tr}^1$
and $\calZ_{\tr}^2$, and the upper bound on $c_0$ can
be obtained via:
\begin{align*}
  \bar{c} = \min_{l\in[m]} \bar{c}_l:=\sup 
\left\{c \in \R: \frac{\sum_{i \in \calZ_{\tr}^1} \Indc\{C_i \ge c, X_i = x_l\}}
{\sum_{i \in \calZ_{\tr}^1} \Indc\{X_i = x_l\}} \ge 2\eta \right\}.
\end{align*}
Note that $\bar{c}\ge 0$ for any $\eta < 0.5$.
We then claim that the event $A := \{\p(C \ge \bar{c} ~|~ X = x_l, \calZ^1_{\tr}) \ge \eta,
\forall l \in [m]\}$ occurs with probability at least 
$1 - m\exp(-\eta^2|\calZ_{\tr}^1|\Delta) - 
m \exp(-\Delta^2|\calZ_{\tr}^1|/2)$.

To prove this claim, for simplicity, we assume the censoring time has a continuous
distribution; the results can be applied to the general case
with a silimar but slightly more complicated argument.
Fix $\eta >0$. For any $l\in [m]$, define 
\begin{align*}
  c_l^* = \sup \big\{c\in\R: \p(C \ge c ~|~ X = x_l) \ge \eta\big\}.
\end{align*}
By the continuity of $C$, we have $\p(C \ge c_l^* ~|~ X = x_l) = \eta$
for $\forall l\in[m]$. Next, 
\begin{align*}
& \p\big(\exists l\in[m], \p(C \ge \bar{c} ~|~ X = x_l) < \eta \big)
\le \sum_{l\in[m]} \p\big( \p(C \ge \bar{c} ~|~ X = x_l) <\eta \big)\\
\le & \sum_{l\in[m]} \p(\bar{c} > c_l^*)
\le \sum_{l \in [m]} \p(\bar{c}_l > c_l^*)
\le \sum_{l \in [m]} \p\lb\frac{\sum_{i\in \calZ_{\tr}^1} \Indc\{X_i = x_l, 
C_i \ge c_l^*\}}{\sum_{i\in\calZ_{\tr}^1} \Indc\{X_i = x_l\}} \ge 2\eta \rb\\
\le & \sum_{l \in [m]} \E\left[\p\lb\frac{\sum_{i\in \calZ_{\tr}^1} \Indc\{X_i = x_l, 
C_i \ge c_l^*\}}{\sum_{i\in\calZ_{\tr}^1} \Indc\{X_i = x_l\}} 
- \p(C \ge c_l^* ~|~ X = x_l) \ge \eta ~\Big|~ \{X_{i}:i\in\calZ_{\tr}^1\}
 \rb\right]\\
        \stackrel{\rm (1)}{\le} & m\cdot \E\Big[\exp\big(-2 \eta^2 \cdot |\calZ_{tr}^{1,l}|\big)\Big],
\end{align*}
where $|\calZ_{\tr}^{1,l}| = \sum_{i\in \calZ_{\tr}^1} \Indc\{X_i = x_l\}$
and step (1) is due to Hoeffding's inequality. Finally, we have
\begin{align*}
\E\Big[\exp\big(-2\eta^2 \cdot |\calZ_{\tr}^{1,l}|\big)\Big] = &
\E\left[\exp\big(-2\eta^2 \cdot |\calZ_{\tr}^{1,l}|\big)\cdot  \Indc\left
\{|\calZ_{\tr}^{1,l}| \ge \frac{\Delta}{2} |\calZ_{\tr}^1|\right\}\right] \\
& \qquad  + \E\left[\exp\big(-2\eta^2 \cdot |\calZ_{\tr}^{1,l}|\big) 
                                                                               \cdot \Indc\left\{|\calZ_{\tr}^{1,l}| < \frac{\Delta}{2} |\calZ_{\tr}^1| \right\}\right] \\
  \le &  \E\Big[\exp\big(-\eta^2 \Delta |\calZ_{\tr}^1|\big)\Big] + \p\lb |\calZ_{\tr}^{1,l}| < \frac{\Delta}{2} |\calZ_{\tr}^1|\rb\\
\le &  \E\Big[\exp\big(-\eta^2 \Delta |\calZ_{\tr}^1|\big)\Big] + \exp\big(-|\calZ_{\tr}^1| \Delta^2/2\big),
\end{align*}
where the last line applies Hoeffding's inequality again. This completes the proof for the lower bound on $\p(A)$.
Now for any $c_0 \le \bar{c}$,
and any consistent estimator $\tilde{c}(\cdot)$,
the truncated estimator $\hat{c}(x) = \tilde{c}(x)\vee \frac{\eta}{2}$
satisfies that for any $\varepsilon > 0$,
\begin{align*}
& \p\Bigg(\E\bigg[\Big|\frac{1}{\hat{c}(X)} - \frac{1}{c(X)}\Big| ~\Big|~ 
\calZ_{\tr}^1\bigg] \ge \varepsilon \Bigg)\\
= & \p\Bigg(\E\bigg[\Big|\frac{1}{\hat{c}(X)} - \frac{1}{c(X)}\Big| ~\Big|~ 
\calZ_{\tr}^1\bigg] \ge \varepsilon, A \Bigg)
+ \p\Bigg(\E\bigg[\Big|\frac{1}{\hat{c}(X)} - \frac{1}{c(X)}\Big| ~\Big|~ 
\calZ_{\tr}^1\bigg] \ge \varepsilon, A^c \Bigg)\\
\stackrel{\rm (a)}{\le} &  \p\bigg(\p\Big(\big|\hat{c}(X) - c(X)\big| 
\ge \frac{\eta^2 \varepsilon}{8} ~\Big|~ \calZ_{\tr}^1\Big) \ge \frac{\eta^2 \varepsilon}{8},A\bigg)
+ \p(A^c) \\
    \stackrel{\rm (b)}{\le} & \p\bigg(\p\Big(\big|\hat{c}(X) - c(X)\big| 
\ge \frac{\eta^2 \varepsilon}{8} ~\Big|~ \calZ_{\tr}^1\Big) \ge \frac{\eta^2 \varepsilon}{8}\bigg) + 
m\cdot \exp\Big(- \eta^2 |\calZ_{\tr}^1|\Delta\Big) 
+ m \exp\Big(-\frac{\Delta^2|\calZ_{\tr}^1|}{2}\Big).
\end{align*}
Above, step (a) is because on event $A$,
\begin{align*}
  & \E\bigg[\Big|\frac{1}{\hat{c}(X)} - \frac{1}{c(X)}\Big| ~\bigg|~ \calZ_{\tr}\bigg]
  \le  \frac{4}{\eta^2} \cdot \E\Big[\big|\hat{c}(X) - c(X)]\big| ~\Big|~ \calZ_{\tr}\Big]\\
= & \frac{4}{\eta^2} \cdot \bigg(\E\Big[\big|\hat{c}(X) - c(X)\big| 
  \cdot \ind\Big\{\big|\hat{c}(X) - c(X)\big|  \ge \frac{\eta^2 \eps}{8}\Big\} ~\Big|~ \calZ_{\tr}\Big]\\
  &\qquad \qquad+ \E\Big[\big|\hat{c}(X) - c(X)\big| 
\cdot \ind\Big\{\big|\hat{c}(X) - c(X)\big|  < \frac{\eta^2 \eps}{8}\Big\}~\Big|~\calZ_{\tr}\Big]\bigg) \\
    \le & \frac{4}{\eta^2}\cdot \p\Big(\big|\hat{c}(X) - c(X)\big| \ge \frac{\eta^2 \eps}{8} ~\Big|~ \calZ_{\tr}\Big)
    + \frac{\eps}{2};
\end{align*}
step (b) is due to the lower bound on $\p(A)$ we have
proved previously.
The above quantity goes to zero as $N,n \rightarrow \infty$.
Hence we arrive at a condition slightly weaker than Assumption
\textbf{A}1, and would imply a slightly weaker coverage guarantee: for any 
$\varepsilon >0$,
\begin{align*}
  \lim_{N,n\rightarrow \infty}\p\Big(\p\big(Y_{n+1} \in \hat{C}(X_{n+1}) ~|~ \calZ_{\tr}^1 \big) 
  \le 1-\alpha-\varepsilon  \Big) = 0.
\end{align*}
\end{itemize}

\subsection{Selecting $c_0$ based on the calibration set}
\label{sec:uniform_c}
Throughout this section, we only consider the case where $\hat{w}(\cdot) = w(\cdot)$ is known. Without loss of generality, assume $\calI_\ca = \{1, \ldots, n\}$. Let $\hat{c}_0$ be any choice of $c_0$ that potentially depends on both $\Z_\tr$ and $\Z_\ca$. In particular, we choose $\hat{c}_0$ by maximizing the average lower prediction bounds on $\Z_\ca$ over a candidate set $\calC \subset \R^{+}$, i.e.,
\begin{equation}
  \label{eq:hatc0}
  \hat{c}_0 = \argmax{c_0\in \calC}~\frac{1}{n}\sum_{i=1}^{n}\hat{L}_{c_0}(X_i),
\end{equation}
where $\hat{L}_{c_0}(X_i)$ denotes the lower prediction bound with threshold $c_0$. We shall prove that $\hat{L}_{\hat{c}_0}(X)$ is approximately valid under regularity conditions on $\mathcal{C}$ and the conformity score.

\begin{theorem}\label{thm:finite_c0}
  If $|\mathcal{C}| < \infty$ and $\E[w(X_i)^r] < \infty$ for some $r > 2$, then, as $n\rightarrow \infty$,
  \[\p\lb T\ge \hat{L}_{\hat{c}_0}(X)\rb\ge 1 - \alpha - o(1).\]
\end{theorem}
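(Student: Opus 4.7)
The plan is to fix $c_0\in\calC$, establish that the conditional miscoverage $R(c_0) := \p(T < \hat{L}_{c_0}(X) \mid \Z_{\tr}, \Z_{\ca})$ concentrates around its mean (which is at most $\alpha$ by Proposition \ref{prop:exact_LPB} and known-$w$ weighted conformal validity), and then use the finiteness of $\calC$ to upgrade this to a uniform bound via a union bound. Because $\hat{c}_0$ is measurable with respect to $(\Z_{\tr}, \Z_{\ca})$, the uniform bound $\max_{c_0\in\calC} R(c_0) \le \alpha + o_P(1)$ immediately yields $R(\hat{c}_0)\le \alpha + o_P(1)$, and taking expectations (noting $R(\hat{c}_0)\in[0,1]$) gives the claim $\p(T < \hat{L}_{\hat{c}_0}(X)) = \E[R(\hat{c}_0)] \le \alpha + o(1)$.

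First I would establish, for each fixed $c_0$, the representation of $R(c_0)$ in terms of a weighted empirical CDF. Letting $V_i^{c_0}$ and $W_i = w(X_i)$ denote the conformity scores and weights from Algorithm \ref{algo:weighted_split} and $\eta_{c_0}(x)$ be the corresponding weighted $(1-\alpha)$-quantile, we have
\begin{equation*}
R(c_0) \;=\; \p_{(X,T)\sim Q}\!\bigl(V(X,T;\Z_{\tr}) > \eta_{c_0}(X) \,\big|\, \Z_{\tr},\Z_{\ca}\bigr).
\end{equation*}
By the usual exchangeable-rank argument, $\E[R(c_0)]\le \alpha$. To show concentration around $\alpha$, I would argue that under $\E[w(X)^r]<\infty$ for some $r>2$, (i) the normalizing denominator $\sum_i W_i$ concentrates at its mean by a Marcinkiewicz--Zygmund / Rosenthal-type inequality (cf.\ Lemma \ref{lem:what}, which already handles the $(1+\delta)$-moment regime), and (ii) the atom at $+\infty$ has mass $\hat{p}_\infty(X) = w(X)/(\sum_j W_j + w(X)) = O_P(1/n)$ by combining the denominator concentration with a Markov bound on $w(X)$. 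These two facts together imply that $\eta_{c_0}(X)$ differs from the oracle $(1-\alpha)$-quantile of the weighted calibration distribution by a term vanishing in probability, so $R(c_0) = \alpha + o_P(1)$.

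Second, since $|\calC|<\infty$, a union bound yields $\p\bigl(\max_{c_0\in\calC} R(c_0) > \alpha + \varepsilon\bigr) \le |\calC|\cdot\max_{c_0\in\calC} \p(R(c_0)>\alpha+\varepsilon) \to 0$ for every $\varepsilon>0$. Because $\hat{c}_0\in\calC$ almost surely, this gives $R(\hat{c}_0)\le \alpha+\varepsilon$ with probability $1-o(1)$, and then
\begin{equation*}
\p\bigl(T < \hat{L}_{\hat{c}_0}(X)\bigr) \;=\; \E[R(\hat{c}_0)] \;\le\; (\alpha+\varepsilon) + \p\bigl(R(\hat{c}_0)>\alpha+\varepsilon\bigr) \;=\; \alpha + \varepsilon + o(1).
\end{equation*}
Letting $\varepsilon\downarrow 0$ along a sequence gives the conclusion.

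The main obstacle I anticipate is the concentration step for $R(c_0)$ under only a finite $r>2$ moment on $w(X)$. The reason $r>2$ (rather than $r\ge 1$) shows up is precisely that we need a quantitative rate on the convergence of the weighted empirical CDF evaluated at the (random) threshold $\eta_{c_0}(X)$, which in turn requires $\sum_i W_i^2$ to concentrate --- this is where the $r>2$ moment enters (via Chebyshev on $\sum_i W_i$ and $\sum_i W_i^2$, or equivalently via the $\delta=1$ case of Lemma \ref{lem:what}). An additional minor subtlety is that the conformity score function $V(\cdot,\cdot;\Z_{\tr})$, the weight function, and the selected set $\calI_{\ca}'$ all depend on $c_0$; but because $\calC$ is finite this dependence is handled cleanly by the outer union bound, and no further uniformity over $c_0$ (e.g.\ over a continuum) is needed.
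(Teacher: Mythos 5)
Your high-level plan---fix each $c_0\in\calC$, show the conditional miscoverage $R(c_0)$ concentrates, union-bound over $\calC$, and then specialize to the data-dependent $\hat{c}_0$---is morally the same approach the paper takes, only packaged differently: the paper proves a single uniform-over-$(c_0,v)$ deviation bound for the weighted empirical CDF $\hat G(c_0,v)=\sum_i w(X_i)\Indc\{V(X_i,\td{T}_i;c_0)\le v\}/\sum_i w(X_i)$ against $G(c_0,v)=\E[w(X)\Indc\{V(X,\td{T};c_0)\le v\}]$ (Lemma~\ref{lem:weighted_conformal_conditional}), with rate $O_\p\lb N(\calA(\calC))^{1/r}/\sqrt{n}\rb$; for finite $\calC$ the shattering number is $O(|\calC|\cdot n)$, which is $o_\p(1)$ precisely when $r>2$, and Lemma~\ref{lem:coverage_lower_bound} then converts this directly into the coverage statement for any $\hat{c}_0\in\calC$. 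The union bound you propose is effectively what this VC/shattering argument carries out.

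The gap is in your concentration step. Items (i) (the denominator $\sum_i W_i$ concentrates) and (ii) (the $\delta_\infty$ atom has mass $O_\p(1/n)$) do not imply that ``$\eta_{c_0}(X)$ differs from the oracle $(1-\alpha)$-quantile by a vanishing term.'' Controlling the normalizer and the tail atom says nothing about whether $\hat G(c_0,\cdot)$ tracks $G(c_0,\cdot)$ uniformly in the threshold $v$; that uniform-in-$v$ convergence is precisely the content of Lemma~\ref{lem:weighted_conformal_conditional}, and it is where the $r>2$ moment is actually spent (symmetrization gives a conditional sub-Gaussian with parameter $\propto n^{-2}\sum_i w(X_i)^2$, followed by an $L^r$ bound via H\"older). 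Your closing paragraph gestures at ``the convergence of the weighted empirical CDF,'' but it reads as a caveat rather than as the step your (i)--(ii) was supposed to deliver, and those two facts alone do not deliver it. There is also a secondary soft spot: you phrase the conclusion in terms of quantile locations ($\eta_{c_0}(X)\to$ a population quantile), and hence implicitly need $G(c_0,\cdot)$ not to be flat near level $1-\alpha$, an assumption the theorem does not make. The paper avoids this entirely by working with CDF \emph{values}: Lemma~\ref{lem:coverage_lower_bound} lower-bounds the conditional coverage by $G(\hat c_0,\td\eta_{\hat c_0})$ (using $\td\eta_{c_0}\le \eta_{c_0}(x)$, where $\td\eta_{c_0}$ drops the $\delta_\infty$ atom) and then uses $\hat G(\hat c_0,\td\eta_{\hat c_0})\ge 1-\alpha$ by construction together with the uniform deviation bound to get $G(\hat c_0,\td\eta_{\hat c_0})\ge 1-\alpha-o_\p(1)$, with no regularity needed. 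If you replace your (i)--(ii) with a per-$c_0$ weighted Glivenko--Cantelli bound (shattering number $n+1$ for fixed $c_0$) and argue through CDF values at the empirical threshold rather than through quantile locations, your proof closes the gap and essentially reproduces the paper's.
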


\begin{theorem}\label{thm:infinite_c0}
  Assume $\calC = \R^{+}$ and $\E[w(X_i)^r] < \infty$ for some $r > 4$. If there exists a positive integer $M$ such that, for any $(x_1, \td{t}_1)$ and $(x_2, \td{t}_2)$, both $\{c_0\in \R^{+}: V(x_1, \td{t}_1; c_0) > V(x_2, \td{t}_2; c_0)\}$ and $\{c_0\in \R^{+}: V(x_1, \td{t}_1; c_0) < V(x_2, \td{t}_2; c_0)\}$ are unions of at most $M$ intervals, then, as $n\rightarrow \infty$,
  \[\p\lb T\ge \hat{L}_{\hat{c_0}}(X)\rb\ge 1 - \alpha - o(1).\]
\end{theorem}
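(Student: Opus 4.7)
The plan is to extend Theorem \ref{thm:finite_c0} from a finite candidate set to $\calC = \R^+$ by combining a combinatorial reduction with a uniform concentration argument in the spirit of \cite{yang2021finite}. Because $\hat{c}_0$ depends on $\calZ_\ca$, one cannot simply invoke the finite-$\calC$ result pointwise; instead I would establish a uniform guarantee
\[
\sup_{c_0 \in \calC}\bigl(\alpha - \p\bigl(T \ge \hat{L}_{c_0}(X) \,\big|\, \calZ_\tr, \calZ_\ca\bigr)\bigr)_{+} = o_{\p}(1),
\]
from which the conclusion follows by taking unconditional probabilities and specializing to $c_0 = \hat{c}_0 \in \calC$.

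The first step is a complexity reduction based on the $M$-interval assumption. Since the miscoverage event is equivalent to $\{V(X, T; c_0) > \eta(X; c_0)\}$ (using monotonicity of $V(X,\cdot;c_0)$ used throughout the paper), it can be re-expressed purely in terms of the pairwise orderings between $V(X, T; c_0)$ and each $V(X_i, \td{T}_i; c_0)$, together with a cumulative-weight condition. By hypothesis, each pairwise ordering switches at most $2M$ times in $c_0$; combined with the $n$ jumps of $\calI'_\ca(c_0) = \{i : C_i \ge c_0\}$, the rank of $V(X,T;c_0)$ among the selected scores is piecewise-constant on at most $O(nM)$ intervals of $\R^+$. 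A similar piecewise analysis shows the weighted-quantile index defining $\eta(X; c_0)$ has $O(nM)$ breakpoints. Consequently, the indicator $c_0 \mapsto \mathbf{1}\{T < \hat{L}_{c_0}(X)\}$ takes at most $\mathrm{poly}(n)\cdot M$ distinct values across $\calC$, so $\calC$ can be replaced by an effective grid $\calC_n$ of polynomial cardinality.

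The second step is a union bound over $\calC_n$, using the per-$c_0$ guarantee furnished by Theorem \ref{thm:double_robustness_w} (with $\hat{w} = w$). The fluctuation of $\p(T < \hat{L}_{c_0}(X) \mid \calZ_\tr, \calZ_\ca)$ around $\alpha$ is controlled by Rosenthal-type bounds on weighted sums, exactly as in Lemma \ref{lem:what}; the strengthened moment condition $\E[w(X)^r] < \infty$ for $r > 4$ (as opposed to $r > 2$ in Theorem \ref{thm:finite_c0}) is precisely what is needed to absorb the polynomial union-bound factor $|\calC_n|$ and still obtain a $o_{\p}(1)$ remainder. The main obstacle will be the delicate bookkeeping at the boundaries of the $O(nM)$ intervals: the weights $w(X; c_0)$ and the scores $V(\cdot,\cdot;c_0)$ vary continuously in $c_0$ even within an interval of fixed ranking, so the breakpoints of $\mathbf{1}\{T < \hat{L}_{c_0}(X)\}$ are not purely combinatorial. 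One must show via a bracketing or monotonicity argument that within each such interval the LPB crosses $T$ at most a bounded number of times almost surely — essentially a VC-type bound of order $\mathrm{poly}(n,M)$ on the class $\{(X,T) \mapsto \mathbf{1}\{T < \hat{L}_{c_0}(X)\} : c_0 \in \calC\}$. This is the place where the $M$-interval structure is used most crucially, and where most of the technical effort will lie.
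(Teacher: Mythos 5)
Your high-level plan — polynomial combinatorial reduction from the $M$-interval hypothesis, uniform concentration, and $r>4$ to absorb the union-bound factor — is the right skeleton, but the specific object you propose to control is the wrong one, and the obstacle you flag at the end (``one must show \ldots the LPB crosses $T$ at most a bounded number of times'') is a genuine gap, not a technicality. Within a $c_0$-interval on which all pairwise score orderings are constant, $\eta(X;c_0)$ and hence $\hat L_{c_0}(X)$ still vary \emph{continuously} in $c_0$, so the indicator $c_0\mapsto\mathbf 1\{T<\hat L_{c_0}(X)\}$ is not piecewise constant over any finite grid, and there is no reason for it to change sign a bounded number of times without further structure. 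You recognize this but only gesture at ``bracketing or monotonicity,'' which is exactly where the argument would stall if pursued directly.

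The paper avoids this entirely by never trying to control the class of final miscoverage indicators. Instead, Lemma~\ref{lem:weighted_conformal_conditional} bounds the weighted empirical process uniformly over \emph{pairs} $(c_0,v)$, for the level-set class $\calA(\calC)=\{A_{c_0,v}=\{(x,y):V(x,y;c_0)\le v\}\}$. For fixed data points, varying $v$ with $c_0$ fixed yields only the $n+1$ nested labelings, and the $M$-interval assumption gives $O(n^2)$ breakpoints in $c_0$, with the key observation that at each breakpoint exactly one adjacent pair of scores swaps, so at most one new labeling appears; hence $N(\calA(\calC))=O(n^2)$ and the Rademacher/moment bound gives a rate $O_\p(n^{2/r}/\sqrt n)=o_\p(1)$ when $r>4$. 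Then Lemma~\ref{lem:coverage_lower_bound} converts this to coverage not via a union bound over $c_0$ but by exploiting the \emph{defining property of the weighted quantile}: by construction $\hat G(\hat c_0,\td\eta_{\hat c_0})\ge 1-\alpha$, so the uniform bound yields $G(\hat c_0,\td\eta_{\hat c_0})\ge 1-\alpha-o_\p(1)$, and $G(\hat c_0,\td\eta_{\hat c_0})$ is a lower bound on the conditional coverage since $\td\eta_{c_0}\le\eta_{c_0}(x)$. This completely sidesteps the within-interval continuity problem: there is no claim that the \emph{coverage indicator} is combinatorially simple, only that the \emph{score level sets} are, and the quantile inequality does the rest. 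Also note the paper does not go through Theorem~\ref{thm:double_robustness_w} here; the two dedicated lemmas are the mechanism.
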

\begin{remark}
  For CQR, if we estimate the $(1-\alpha)$-th quantile of $T\wedge c_0$ by $\hat{q}_{\alpha}(x) \wedge c_0$, where $\hat{q}_{\alpha}(x)$ is an estimate of the $(1-\alpha)$-th quantile of $T$ using survival estimation techniques, then
  \[V(x, \td{t}; c_0) = \hat{q}_{\alpha}(x) \wedge c_0 - \td{t}\wedge c_0.\]
  Clearly, $V(x, \td{t}; c_0)$ is a piecewise linear function where the first and the third pieces are constants $0$ and $\hat{q}_{\alpha}(x) - \td{t}$, respectively, connecting by a linear piece with slope $\mathrm{sign}(\hat{q}_{\alpha}(x) - \td{t})$. Thus, the condition of Theorem \ref{thm:infinite_c0} holds with $M = 1$.
\end{remark}

\subsubsection{Proof of Theorem \ref{thm:finite_c0} and \ref{thm:infinite_c0}}
We start with the following lemma that bounds the weighted empirical process associated with the weighted conformal inference procedure indexed by $c_0$ and $v$. 

\begin{lemma}\label{lem:weighted_conformal_conditional}
Let $V(x, y; c_0)$ be any conformity score that depends on $c_0$ and $\Z_\tr$. Further let
  \[A_{c_0, v} = \{(x, y): V(x, y; c_0)\le v\}, \quad \calA(\calC) = \{A_{c_0, v}: c_0 \in \calC, v\in \R\}.\]
  If $\E[w(X_i)^r] < \infty$ for some $r > 2$,
  \begin{align*}
    &\sup_{c_0\in \calC, v\in \R}\bigg|\frac{\sum_{i=1}^{n}w(X_i)I(V(X_i, \td{T}_i; c_0)\le v)}{\sum_{i=1}^{n}w(X_i)} - \E\left[w(X)I(V(X, \td{T}; c_0)\le v)\right]\bigg|= O_\p\lb\frac{N(\calA(\calC))^{1/r}}{\sqrt{n}}\rb
  \end{align*}
  where $N(\calA(\calC))$ denotes the shattering number of $\calA(\calC)$, i.e.,
  \[N(\calA(\calC)) = \sup_{(x_i, \td{t}_i)\in \mathrm{Domain}(X_i, \td{T}_i)}\bigg|\left\{\bigg( I[(x_1, \td{t}_1)\in A_{c_0, v}], \ldots, I[(x_n, \td{t}_n)\in A_{c_0, v}]\bigg): c_0\in \calC, v\in \R\right\}\bigg|.\]
\end{lemma}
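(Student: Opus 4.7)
The plan is to decouple the ratio by controlling the denominator via a law of large numbers, and then to handle the resulting weighted empirical process in the numerator by combining symmetrization with the combinatorial bound $N(\calA(\calC))$. Set $B_n=\frac{1}{n}\sum_{i=1}^n w(X_i)$ and $\alpha_{c_0,v}=\E[w(X) I(V(X,\td T;c_0)\le v)]$, and write $V_i:=V(X_i,\td T_i;c_0)$. A direct algebraic manipulation gives
\[
\bigg|\frac{\sum_i w(X_i)I(V_i\le v)}{\sum_i w(X_i)}-\alpha_{c_0,v}\bigg|
\le \frac{1}{B_n}\bigg|\frac{1}{n}\sum_i w(X_i)I(V_i\le v)-\alpha_{c_0,v}\bigg| + \frac{\alpha_{c_0,v}\,|B_n-1|}{B_n},
\]
and since $\alpha_{c_0,v}\le \E[w(X)]=1$, the second term is at most $|B_n-1|/B_n$ uniformly in $(c_0,v)$. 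Because $\E[w(X)^2]\le (\E[w(X)^r])^{2/r}<\infty$ (using $r>2$), Chebyshev's inequality gives $B_n-1=O_\p(n^{-1/2})$ and $1/B_n=O_\p(1)$, so it suffices to bound
\[
S_n:=\sup_{c_0\in\calC,\,v\in\R}\bigg|\frac{1}{n}\sum_i w(X_i)I(V_i\le v)-\alpha_{c_0,v}\bigg|.
\]

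For $S_n$ I would invoke the standard symmetrization inequality to get $\E[S_n]\le 2\,\E\sup_{c_0,v}|\frac{1}{n}\sum_i \epsilon_i w(X_i) I(V_i\le v)|$, with $\{\epsilon_i\}$ i.i.d.\ Rademacher independent of the sample. Conditional on $\calD=(X_i,\td T_i)_{i=1}^n$, the indicator vector $(I(V_i\le v))_{i=1}^n$ takes at most $N(\calA(\calC))$ distinct values in $\{0,1\}^n$ as $(c_0,v)$ ranges, by the very definition of the shattering number as a supremum over $n$-tuples. Enumerating these patterns as $S^{(1)},\ldots,S^{(M)}$ with $M\le N(\calA(\calC))$, the Rademacher supremum collapses to the finite maximum
\[
\sup_{c_0,v}\bigg|\frac{1}{n}\sum_i \epsilon_i w(X_i)I(V_i\le v)\bigg|
=\max_{1\le k\le M}\bigg|\frac{1}{n}\sum_i \epsilon_i w(X_i)\,S^{(k)}_i\bigg|.
\]

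I would then estimate each term in $L^r$. Khintchine's inequality, applied conditionally on $\calD$, gives $\E[|\sum_i\epsilon_i w(X_i)S^{(k)}_i|^r\mid\calD]\le C_r(\sum_i w(X_i)^2)^{r/2}$ for every $k$. A union bound over the $M\le N(\calA(\calC))$ patterns, the power-mean bound $(\sum_i w_i^2)^{r/2}\le n^{r/2-1}\sum_i w_i^r$ (valid since $r/2\ge 1$), and finally taking the unconditional expectation yield
\[
\E\bigg[\max_k\bigg|\frac{1}{n}\sum_i\epsilon_i w(X_i)S^{(k)}_i\bigg|^r\bigg]\le \frac{C_r\,N(\calA(\calC))\,\E[w(X)^r]}{n^{r/2}}.
\]
Extracting the $r$-th root and invoking Jensen's inequality produces $\E[S_n]\le C\,\E[w^r]^{1/r}\,N(\calA(\calC))^{1/r}/\sqrt n$, which Markov then upgrades to the claimed $O_\p$-bound.

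The principal obstacle is the bookkeeping around the conditional versus unconditional structure: the number of patterns $M$ and the patterns $S^{(k)}$ themselves are functions of $\calD$, so the union bound must be carried out inside the conditional expectation (using $M\le N(\calA(\calC))$ almost surely by definition), and only afterwards one passes to the outer expectation. The remaining ingredients---symmetrization, Khintchine, and the power-mean inequality---are classical and contribute only constants depending on $r$.
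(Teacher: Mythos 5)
Your proof is correct and follows essentially the same route as the paper's: symmetrize, reduce the supremum to a finite maximum over at most $N(\calA(\calC))$ indicator patterns conditionally on the data, bound the $r$-th moment of each Rademacher sum (you via Khintchine, the paper via sub-Gaussian moment bounds --- equivalent), apply the power-mean inequality to control $(\sum_i w(X_i)^2)^{r/2}$ by $n^{r/2-1}\sum_i w(X_i)^r$, and finish with Markov. The only cosmetic difference is the treatment of the random denominator: you decompose the normalized ratio algebraically up front and invoke Chebyshev for $B_n$, whereas the paper bounds the unnormalized empirical process first and then applies the SLLN together with Slutsky's lemma at the end; both handle the same term by the same moment condition $\E[w(X)^2]<\infty$, which follows from $r>2$.
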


The proof is lengthy and deferred to Section \ref{subsubapp:proof_weighted_conditional}. Next, we prove a lemma relating the empirical process bound to the coverage.

\begin{lemma}\label{lem:coverage_lower_bound}
  Let $\hat{c}_0$ be selected from $\calC$ based solely on $\Z_\tr$ and $\Z_\ca$ (e.g., \eqref{eq:hatc0}). If
  \[\sup_{c_0\in \calC, v\in \R}\bigg|\frac{\sum_{i=1}^{n}w(X_i)I(V(X_i, \td{T}_i; c_0)\le v)}{\sum_{i=1}^{n}w(X_i)} - \E\left[w(X)I(V(X, \td{T}; c_0)\le v)\right]\bigg| = o_\p(1),\]
  then
  \[\p(T\ge \hat{L}_{\hat{c}_0}(X))\ge 1 - \alpha - o(1).\]
\end{lemma}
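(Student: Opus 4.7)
The plan is to leverage the uniform empirical-process bound to transfer the algorithm's empirical quantile, computed at a data-driven $\hat{c}_0$, into a statement about the population coverage that holds uniformly over $c_0\in\calC$; the data-dependent $\hat{c}_0$ can then be substituted without loss.

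First, I would record the identification that underlies weighted conformal inference in this censoring setting. For a fresh draw $(X,T,C)\sim P$ independent of the data, with $w(X_i)=1/\p(C\ge c_0\mid X_i)$ (normalized so that $\E[w(X)]=1$) and $V(X_i,\td{T}_i;c_0)$ set to $+\infty$ whenever $C_i<c_0$, a short computation using the conditionally independent censoring assumption $T\indep C\mid X$ yields
\[
F_{c_0}(v) \defn \E\!\left[w(X)\,I(V(X,\td{T};c_0)\le v)\right] \;=\; \p\!\left(V(X,T\wedge c_0;c_0)\le v\right).
\]
This is the same identity that powers \eqref{eq:covariate_shift}--\eqref{eq:covshift_cover}. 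Furthermore, by the definition of $\hat{L}_{c_0}$ in Algorithm \ref{algo:weighted_split}, the event $\{V(X,T\wedge c_0;c_0)\le\eta(X;c_0)\}$ is contained in $\{T\wedge c_0\ge\hat{L}_{c_0}(X)\}$, and hence in $\{T\ge \hat{L}_{c_0}(X)\}$.

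Second, I would connect the algorithm's weighted quantile $\eta(X;c_0)$ to the empirical CDF $\hat{F}_{c_0}$ appearing in the hypothesis. Writing $\hat{p}_{\infty}(X;c_0)$ for the $\delta_\infty$ mass in Step 5 of Algorithm \ref{algo:weighted_split}, the quantile step gives $(1-\hat{p}_{\infty}(X;c_0))\hat{F}_{c_0}(\eta(X;c_0))+\hat{p}_{\infty}(X;c_0)\ge 1-\alpha$. Specializing the hypothesis to $v=+\infty$ delivers a uniform-in-$c_0$ law of large numbers for the weight sum $\sum_i w(X_i)$, which, combined with the moment bound $\E[w(X)^r]<\infty$ for $r>2$ and Markov's inequality, yields $\sup_{c_0\in\calC}\hat{p}_{\infty}(X;c_0)=o_{\p}(1)$. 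Hence $\hat{F}_{c_0}(\eta(X;c_0))\ge 1-\alpha - o_{\p}(1)$, and invoking the uniform convergence once more,
\[
F_{c_0}(\eta(X;c_0)) \;\ge\; 1-\alpha - o_{\p}(1),\qquad \text{uniformly in } c_0\in\calC.
\]
Since $\hat{c}_0\in\calC$, this applies at $c_0=\hat{c}_0$. Conditioning on $(\calZ_\tr,\calZ_\ca)$ makes $\hat{c}_0$ and $\eta(X;\hat{c}_0)$ deterministic, so the identification of the first step gives
\[
\p\!\left(V(X,T\wedge\hat{c}_0;\hat{c}_0)\le\eta(X;\hat{c}_0)\,\big|\,\calZ_\tr,\calZ_\ca\right) \;=\; F_{\hat{c}_0}(\eta(X;\hat{c}_0)) \;\ge\; 1-\alpha - o_{\p}(1).
\]
Taking expectation and using the event containment from the first step yields $\p(T\ge \hat{L}_{\hat{c}_0}(X))\ge 1-\alpha - o(1)$.

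The main obstacle lies in the second step: making the $o_{\p}(1)$ truly uniform in $c_0\in\calC$ and handling the $X$-dependent normalization in the algorithm's weights. In particular, one must absorb the $\delta_\infty$ atom uniformly across $c_0$, which requires a concentration bound for the denominator $\sum_i w(X_i) + w(X)$ that does not deteriorate along $\calC$. This is precisely where the moment condition $\E[w(X)^r]<\infty$ for $r>2$ and the $v=+\infty$ case of the hypothesis combine to give what is needed; all other steps are routine manipulations of events and quantiles.
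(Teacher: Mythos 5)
Your overall strategy (pass from the empirical weighted quantile to the population quantity via the uniform empirical-process bound, then plug in the data-dependent $\hat c_0$) is the right flavor and matches the paper's. But there are two genuine problems in the execution, and the second one is fatal as written.

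First, the claim that ``specializing the hypothesis to $v=+\infty$ delivers a uniform-in-$c_0$ law of large numbers for the weight sum'' is vacuous: at $v=+\infty$ the hypothesis reads $\bigl|\frac{\sum_i w(X_i)}{\sum_i w(X_i)} - \E[w(X)]\bigr| = |1 - 1| = 0$, which carries no information about $\sum_i w(X_i)$. Moreover, since $w(\cdot)$ depends on $c_0$ (through $\p(C\ge c_0\mid X)$), any control of $\hat p_\infty(X;c_0)$ uniformly in $c_0\in\calC$ needs a separate argument; the moment condition alone would have to be exploited differently.

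Second, and more seriously, the step ``Conditioning on $(\calZ_\tr,\calZ_\ca)$ makes $\hat c_0$ and $\eta(X;\hat c_0)$ deterministic, so $\p(V(X,T\wedge\hat c_0;\hat c_0)\le\eta(X;\hat c_0)\mid\calZ_\tr,\calZ_\ca)=F_{\hat c_0}(\eta(X;\hat c_0))$'' is false. In Algorithm~\ref{algo:weighted_split} the cutoff $\eta(x)$ depends on the test covariate $x$ through $\hat w(x)$, so $\eta(X;\hat c_0)$ is a random function of the test point $X$, not a constant given $(\calZ_\tr,\calZ_\ca)$. The left-hand side therefore involves the \emph{same} $X$ on both sides of the inequality, and is not equal to $F_{\hat c_0}$ evaluated at a fixed threshold. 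This coupling is exactly what the paper's auxiliary quantile resolves: the paper introduces the $X$-free quantity
\[
  \td\eta_{c_0} \defn \quantile\!\left(1-\alpha;\ \frac{\sum_{i=1}^n w(X_i)\,\delta_{V(X_i,\td T_i;c_0)}}{\sum_{i=1}^n w(X_i)}\right),
\]
and observes that adding the $w(x)$ term to the denominator together with the $\delta_\infty$ atom only shifts the distribution upwards in the stochastic order, so $\td\eta_{c_0}\le\eta_{c_0}(x)$ for \emph{every} $x$. This gives $\p(T\ge\hat L_{\hat c_0}(X)\mid\calZ_\ca)\ge\p(V(X,T;\hat c_0)\le\td\eta_{\hat c_0}\mid\calZ_\ca) = G(\hat c_0,\td\eta_{\hat c_0})$ with $\td\eta_{\hat c_0}$ a genuine constant given the data, after which the hypothesis applies at $(c_0,v)=(\hat c_0,\td\eta_{\hat c_0})$ and the empirical quantile definition yields $G(\hat c_0,\td\eta_{\hat c_0})\ge1-\alpha-o_\p(1)$. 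Without some device of this kind that removes the $X$-dependence of the cutoff (or an extra argument that $\eta(X;c_0)$ concentrates around an $X$-free limit uniformly in $c_0$), your chain of inequalities does not close.
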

\begin{proof}
  Let $\eta_{c_0}(x)$ denote the cutoff for the conformity score defined in Algorithm \ref{algo:weighted_split} that corresponds to the threshold $c_0$ and
  \[\td{\eta}_{c_0} = \quantile\left(1 - \alpha; \frac{\sum_{i=1}^{n}w(X_i)\delta_{V(X_i, \td{T}_i; c_0)}}{\sum_{i=1}^{n}w(X_i)} \right).\]
  Clearly, 
  \[\frac{\sum_{i=1}^{n}w(X_i)\delta_{V(X_i, \td{T}_i; c_0)}}{\sum_{i=1}^{n}w(X_i)}\preceq \frac{\sum_{i=1}^{n}w(X_i)\delta_{V(X_i, \td{T}_i; c_0)}}{\sum_{i=1}^{n}w(X_i) + w(x)} + \frac{w(x)\delta_{\infty}}{\sum_{i=1}^{n}w(X_i) + w(x)}.\]
  Thus, for any $x$,
  \begin{equation}
    \label{eq:tdeta}
    \td{\eta}_{c_0} \le \eta_{c_0}(x).
  \end{equation}
  Let
  \[G(c_0, v) = \E\left[w(X)I(V(X, \td{T}; c_0)\le v)\right].\]
  By \eqref{eq:tdeta},
  \begin{align}
    \p\lb T\ge \hat{L}_{\hat{c}_0}(X)\mid \Z_\ca\rb &= \p\lb V(X, T; \hat{c}_0)\le \eta_{\hat{c}_0}(X)\mid \Z_\ca\rb\\
                                                    & \ge \p\lb V(X, T; \hat{c}_0)\le \td{\eta}_{\hat{c}_0}\mid \Z_\ca\rb  = G(\hat{c}_0, \td{\eta}_{\hat{c}_0}).\label{eq:Gtdeta}
  \end{align}
  Noting that $\hat{c}_0\in \calC, \td{\eta}_{\hat{c}_0}\in \R$, the condition implies 
  \begin{align}
    \bigg|\frac{\sum_{i=1}^{n}w(X_i)I(V(X_i, \td{T}_i; \hat{c}_0)\le \td{\eta}_{\hat{c}_0})}{\sum_{i=1}^{n}w(X_i)} - G(\hat{c}_0, \td{\eta}_{\hat{c}_0})\bigg| = o_\p(1).
  \end{align}
  By definition,
  \[\frac{\sum_{i=1}^{n}w(X_i)I(V(X_i, \td{T}_i; \hat{c}_0)\le \td{\eta}_{\hat{c}_0})}{\sum_{i=1}^{n}w(X_i)} \ge 1 - \alpha.\]
  Therefore,
  \[G(\hat{c}_0, \td{\eta}_{\hat{c}_0}) \ge 1 - \alpha - o_\p(1).\]
  Marginalizing over $\Z_\ca$ in \eqref{eq:Gtdeta} and apply the Dominated Convergence Theorem, we conclude
  \[\p\lb T\ge \hat{L}_{\hat{c}_0}(X)\rb\ge \E[G(\hat{c}_0, \td{\eta}_{\hat{c}_0})] \ge 1 - \alpha - o(1).\]
\end{proof}

\begin{proof}[\textbf{Theorem \ref{thm:finite_c0}}]
  For any $c_0 \in \calC$,
  \[\bigg|\left\{\bigg( I[(x_1, \td{t}_1)\in A_{c_0, v}], \ldots, I[(x_n, \td{t}_n)\in A_{c_0, v}]\bigg): v\in \R\right\}\bigg|\le n + 1.\]
  Thus,
  \[N(\calA(\calC)) \le |\calC| \cdot (n + 1) = O(n).\]
  The result is then implied by the first bound in Lemma \ref{lem:weighted_conformal_conditional} and Lemma \ref{lem:coverage_lower_bound}.
\end{proof}

\begin{proof}[\textbf{Theorem \ref{thm:infinite_c0}}]
  For any subset $\calS \subset \{1, \ldots, n\}$, let $e_{\calS}$ denote the binary vector with the $k$-th entry equal to $1$ iff $k\in \calS$. Given $(x_i, \td{t}_{i})_{i=1}^{n}$, let $\calC_{0, jk}$ denote all boundary points of $\{c_0\in \R^{+}: V(x_j, \td{t}_j; c_0) > V(x_k, \td{t}_k; c_0)\}$ and $\calC_0$ denote the union of all $C_{0, jk}$'s, allowing the same value to appear for multiple times. Then
  \[|\calC_0| \le 2Mn^2 = O(n^2).\]
  Let $c_{0, 1} \le c_{0, 2} \le \cdots \le c_{0, |\calC_0|}$ be the elements of $\calC_0$ and
  \[\calN_j = \left\{\bigg( I[(x_1, \td{t}_1)\in A_{c_{0, j}, v}], \ldots, I[(x_n, \td{t}_n)\in A_{c_{0, j}, v}]\bigg): v\in \R\right\}.\]
  Then
  \[\left\{\bigg( I[(x_1, \td{t}_1)\in A_{c_0, v}], \ldots, I[(x_n, \td{t}_n)\in A_{c_0, v}]\bigg): c_0\in \calC, v\in \R\right\} = \bigcup_{j=1}^{|\calC_0|}\calN_j.\]
  By definition, the ordering of the conformity scores $\{V(x_j, \td{t}_j; c_0)\}_{j=1}^{n}$ can only change at $c_0 = c_{0, j}$ for some $j = 1, \ldots, |\calC_0|$. Since we allow for multiplicity in $\calC_0$, at each $c_{0,j}$, only a pair of adjacent conformity scores exchange while all other elements maintain their ranks. Given $j < |\calC_0|$, suppose that
  \[V(x_{i_1}, \td{t}_{i_1}; c) \le V(x_{i_2}, \td{t}_{i_2}; c) \le \cdots \le V(x_{i_n}, \td{t}_{i_n}; c), \quad \text{for any }c \in (c_{0, j}, c_{0, j+1}),\]
  and $V(x_{i_k}, \td{t}_{i_k}; c)$ would exchange with $V(x_{i_{k+1}}, \td{t}_{i_{k+1}}; c)$ at $c = c_{0, (j + 1)}$. Then
  \[\calN_j = \{e_\calS: \calS = \{i_1, \ldots, i_r\}, r = 1, \ldots, n\},\]
  and
  \[\calN_{j+1} = \calN_j \cup \{\{i_1, \ldots, i_{k-1}, i_{k+1}\}\} \setminus \{\{i_1, \ldots, i_{k-1}, i_{k}\}\}.\]
  As a result,
  \[|\calN_{j+1}\setminus \calN_{j}| = 1.\]
  Clearly, $|\calN_1| = n + 1$. Thus
  \[\bigg|\bigcup_{j=1}^{|\calC_0|}\calN_j\bigg|\le |\calN_1| + |\calC_0| - 1 = n + |\calC_0| = O(n^2).\]
  The theorem is then implied by the first bound in Lemma \ref{lem:weighted_conformal_conditional} and Lemma \ref{lem:coverage_lower_bound}.
\end{proof}

\subsubsection{Proof of Lemma \ref{lem:weighted_conformal_conditional}}\label{subsubapp:proof_weighted_conditional}
Let $(X'_1, \td{T}'_1), \ldots, (X'_n, \td{T}'_n)$ be independent copies of $(X_1, \td{T}_1), \ldots, (X_n, \td{T}_n)$ and $(\eta_1, \ldots, \eta_n)$ be i.i.d. Rademacher random variables. Further, let
\[\mathcal{N}((x_i, \td{t}_i)_{i=1}^{n}) = \left\{\bigg( w(x_1)I[(x_1, \td{t}_1)\in A_{c_0, v}], \ldots, w(x_n)I[(x_n, \td{t}_n)\in A_{c_0, v}]\bigg): c_0\in \calC, v\in \R\right\}.\]
Clearly,
\[|\mathcal{N}((x_i, \td{t}_i)_{i=1}^{n})|\le N(\calA(\calC)).\]
  Then for any increasing convex function $\Phi: \R\mapsto \R^{+}$, 
  \begin{align}
    &\E\left[\Phi\lb \sup_{c_0\in \calC, v\in \R}\bigg|\frac{1}{n}\sum_{i=1}^{n}w(X_i)I(V(X_i, \td{T}_i; c_0)\le v) - \E\left[w(X)I(V(X, \td{T}; c_0)\le v)\right]\bigg|\rb\right]\\
    & \le \E\left[\Phi\lb \sup_{c_0\in \calC, v\in \R}\bigg|\frac{1}{n}\sum_{i=1}^{n}w(X_i)I(V(X_i, \td{T}_i; c_0)\le v) - \E\left[\frac{1}{n}\sum_{i=1}^{n}w(X'_i)I(V(X'_i, \td{T}'_i; c_0)\le v)\right]\bigg|\rb\right]\\
    & \stackrel{(1)}{\le}\E\left[\Phi\lb \sup_{c_0\in \calC, v\in \R}\bigg|\frac{1}{n}\sum_{i=1}^{n}w(X_i)I(V(X_i, \td{T}_i; c_0)\le v) - \frac{1}{n}\sum_{i=1}^{n}w(X'_i)I(V(X'_i, \td{T}'_i; c_0)\le v)\bigg|\rb\right]\\
    & = \E\left[\Phi\lb \sup_{c_0\in \calC, v\in \R}\bigg|\frac{1}{n}\sum_{i=1}^{n}\left\{w(X_i)I(V(X_i, \td{T}_i; c_0)\le v) - w(X'_i)I(V(X'_i, \td{T}'_i; c_0)\le v)\right\}\bigg|\rb\right]\\
    & \stackrel{(2)}{=} \E\left[\Phi\lb \sup_{c_0\in \calC, v\in \R}\bigg|\frac{1}{n}\sum_{i=1}^{n}\eta_{i}\left\{w(X_i)I(V(X_i, \td{T}_i; c_0)\le v) - w(X'_i)I(V(X'_i, \td{T}'_i; c_0)\le v)\right\}\bigg|\rb\right]\\
    & \stackrel{(3)}{\le} \E\left[\Phi\lb \sup_{c_0\in \calC, v\in \R}\bigg|\frac{1}{n}\sum_{i=1}^{n}\eta_{i} w(X_i)I(V(X_i, \td{T}_i; c_0)\le v)\bigg| +  \sup_{c_0\in \calC, v\in \R}\bigg| \frac{1}{n}\sum_{i=1}^{n}\eta_i w(X'_i)I(V(X'_i, \td{T}'_i; c_0)\le v)\bigg|\rb\right]\\
    & \stackrel{(4)}{\le} \E\left[\Phi\lb \sup_{c_0\in \calC, v\in \R}\bigg|\frac{2}{n}\sum_{i=1}^{n}\eta_{i} w(X_i)I(V(X_i, \td{T}_i; c_0)\le v)\bigg|\rb\right]\\
    & = \E\left\{\E\left[\Phi\lb \sup_{c_0\in \calC, v\in \R}\bigg|\frac{2}{n}\sum_{i=1}^{n}\eta_{i} w(X_i)I(V(X_i, \td{T}_i; c_0)\le v)\bigg|\rb\right]\mid (X_i, \td{T}_i)_{i=1}^{n}\right\}\\
    & \stackrel{(5)}{\le} \E\left\{\sum_{a\in \mathcal{N}((X_i, \td{T}_i)_{i=1}^{n})} \E\left[\Phi\lb \bigg|\frac{2}{n}\sum_{i=1}^{n}\eta_{i}a_{i}\bigg|\rb\mid (X_i, \td{T}_i)_{i=1}^{n}\right]\right\},\label{eq:EPhi_generic}
  \end{align}
  where (1) applies Jensen's inequality, (2) follows from the distributional symmetry of the summand $w(X_i)I(V(X_i, \td{T}_i; c_0)\le v) - w(X'_i)I(V(X'_i, \td{T}'_i; c_0)\le v)$, (3) is due to the triangle inequality and the monotonicity of $\Phi$, (4) applies Jensen's inequality again together with exchangeability of the two terms, and (5) uses the definition of the set $\mathcal{N}(\cdot)$. For any $a\in \R^{n}$, $\frac{2}{n}\sum_{i=1}^{n}\eta_{i}a_{i}$ is sub-Gaussian with parameter $\frac{4}{n^2}\sum_{i=1}^{n}a_i^2$. Given $(X_i, \td{T}_i)_{i=1}^{n}$, for any $a\in \mathcal{N}((X_i, \td{T}_i)_{i=1}^{n})$,
  \[\frac{4}{n^2}\sum_{i=1}^{n}a_i^2\le \frac{4}{n^2}\sum_{i=1}^{n}w(X_i)^2.\]
  Thus, for any $a\in \mathcal{N}((X_i, \td{T}_i)_{i=1}^{n})$,
  \begin{equation}
    \label{eq:conditional_subgaussian}
    \frac{2}{n}\sum_{i=1}^{n}\eta_{i}a_{i} \text{ is sub-Gaussian with parameter }\frac{4}{n^2}\sum_{i=1}^{n}w(X_i)^2 \text{ conditional on }(X_i, \td{T}_i)_{i=1}^{n}.
  \end{equation}
  Now, we set $\Phi(y) = y^{r}$ which is increasing and convex since $r > 2 \ge 1$. It is well-known that the $r$-th moment of a $\sigma^2$-sub-Gaussian random variable is equivalent to that of a Gaussian random variable with variance $\sigma^2$ up to a constant that only depends on $r$ \citep[e.g.][Proposition 2.5.2]{vershynin2018high}. Thus, there exists a constant $C(r) > 0$ such that
    \[\E\left[\Phi\lb \bigg|\frac{2}{n}\sum_{i=1}^{n}\eta_{i}a_{i}\bigg|\rb\mid (X_i, \td{T}_i)_{i=1}^{n}\right]\le C(r)\lb\frac{1}{n^2}\sum_{i=1}^{n}w(X_i)^2\rb^{r/2}.\]
    By H\"{o}lder's iequality,
    \[\E\left[\Phi\lb \bigg|\frac{2}{n}\sum_{i=1}^{n}\eta_{i}a_{i}\bigg|\rb\mid (X_i, \td{T}_i)_{i=1}^{n}\right]\le \frac{C(r)}{n^{r/2}} \lb\frac{1}{n}\sum_{i=1}^{n}w(X_i)^r\rb.\]
    Then,
    \begin{align}
      &\E\left\{\sum_{a\in \mathcal{N}((X_i, \td{T}_i)_{i=1}^{n})} \E\left[\Phi\lb \bigg|\frac{2}{n}\sum_{i=1}^{n}\eta_{i}a_{i}\bigg|\rb\mid (X_i, \td{T}_i)_{i=1}^{n}\right]\right\}\\
      & \le \E\left\{|\mathcal{N}((X_i, \td{T}_i)_{i=1}^{n})| \cdot \frac{C(r)}{n^{r/2}} \lb\frac{1}{n}\sum_{i=1}^{n}w(X_i)^r\rb\right\}\\
      & \le C(r)\frac{N(\calA(\calC))}{n^{r/2}}\E[w(X_1)^{r}].\label{eq:EPhi_case1}
    \end{align}
    By Markov's inequality, \eqref{eq:EPhi_generic}, and \eqref{eq:EPhi_case1},
    \begin{align*}
      &\sup_{c_0\in \calC, v\in \R}\bigg|\frac{1}{n}\sum_{i=1}^{n}w(X_i)I(V(X_i, \td{T}_i; c_0)\le v) - \E\left[w(X)I(V(X, \td{T}; c_0)\le v)\right]\bigg|\\
      &= O_\p\lb\left\{\E\sup_{c_0\in \calC, v\in \R}\bigg|\frac{1}{n}\sum_{i=1}^{n}w(X_i)I(V(X_i, \td{T}_i; c_0)\le v) - \E\left[w(X)I(V(X, \td{T}; c_0)\le v)\bigg|^{r}\right]\right\}^{1/r}\rb\\
      &= O_\p\lb \left\{\frac{N(\calA(\calC))}{n^{r/2}}\right\}^{1/r}\rb = O_\p\lb \frac{N(\calA(\calC))^{1/r}}{\sqrt{n}}\rb.
    \end{align*}
  By Kolmogorov's strong law of large numbers,
  \[\bigg|\frac{1}{n}\sum_{i=1}^{n}w(X_i) - 1\bigg| = O_\p\lb\frac{1}{\sqrt{n}}\rb.\]
  By Slusky's Lemma,
  \[\bigg|\lb\frac{1}{n}\sum_{i=1}^{n}w(X_i)\rb^{-1} - 1\bigg| = O_\p\lb\frac{1}{\sqrt{n}}\rb.\]
  Thus,
  \[\sup_{c_0\in \calC, v\in \R}\bigg|\frac{\sum_{i=1}^{n}w(X_i)I(V(X_i, \td{T}_i; c_0)\le v)}{\sum_{i=1}^{n}w(X_i)} - \frac{\E\left[w(X)I(V(X, \td{T}; c_0)\le v)\right]}{(1/n)\sum_{i=1}^{n}w(X_i)}\bigg| = O_\p\lb \frac{N(\calA(\calC))^{1/r}}{\sqrt{n}}\rb.\]
  On the other hand,
  \begin{align}
    &\sup_{c_0\in \calC, v\in \R}\bigg|\frac{\E\left[w(X)I(V(X, \td{T}; c_0)\le v)\right]}{(1/n)\sum_{i=1}^{n}w(X_i)} - \E\left[w(X)I(V(X, \td{T}; c_0)\le v)\right]\bigg|\\
    & \le \E[w(X)]\bigg|\lb\frac{1}{n}\sum_{i=1}^{n}w(X_i)\rb^{-1} - 1\bigg|\\
    &= O_\p\lb\frac{1}{\sqrt{n}}\rb = O_\p\lb \frac{N(\calA(\calC))^{1/r}}{\sqrt{n}}\rb. 
  \end{align}
  By triangle inequality,
  \[\sup_{c_0\in \calC, v\in \R}\bigg|\frac{\sum_{i=1}^{n}w(X_i)I(V(X_i, \td{T}_i; c_0)\le v)}{\sum_{i=1}^{n}w(X_i)} - \E\left[w(X)I(V(X, \td{T}; c_0)\le v)\right]\bigg| = O_\p\lb \frac{N(\calA(\calC))^{1/r}}{\sqrt{n}}\rb.\]

\subsection{Coverage results under complete independent censoring}
\label{appx:coverage_independent_censoring}
Suppose now weights $W_i = 1/\hat{c}(X_i)$ is non-decreasing in 
the conformity scores $V_i = V(X_i,\td{T}_i; c_0)$ in the sense that 
\begin{align*}
  V_i \le V_j \Rightarrow W_i \le W_j,
\end{align*}
for any $i,j \in \calI_{\ca}\cup \{n+1\}$.
The predictive interval given by our algorithm is
\begin{align*}
\hat{C}(X_{n+1}) = \Bigg\{y: V(X_{n+1},y) \le \quantile\bigg(\frac{\sum_{i \in \calI_{\ca}}
W_i \cdot \delta_{V_i} + W_{n+1} \cdot \delta_{\infty}}
{\sum_{i\in\calI_{\ca}} W_i + W_{n+1} }, 1-\alpha\bigg)\Bigg\}.
\end{align*}
Without loss of generality, we let $\calI_{\ca} = [n]$, so
$\calI_{\ca}\cup \{n+1\} = [n+1]$.
Let $\{\pi_1,\ldots,\pi_{n+1}\}$ be a permutation of $[n+1]$
such that $V_{\pi_1} \le \ldots \le V_{\pi_{n+1}}$. By the
assumption we also have $W_{\pi_1} \le \ldots \le W_{\pi_{n+1}}$.
Next, for any $k\in[n+1]$, we claim that
\begin{align*}
  \frac{\sum^k_{i=1} W_{\pi_i}}{\sum^{n+1}_{j=1}W_j}
  \le \frac{k}{n+1}.
\end{align*}
Assume otherwise. Then for any $j>k$, by the monotonicity
\begin{align*}
  W_{\pi_j} \ge \frac{1}{k} \sum_{i=1}^k W_{\pi_i}.
\end{align*}
Consequently, 
\begin{align*}
\frac{  \sum_{j=k+1}^{n+1} W_{\pi_j}}{\sum_{i=1}^{n+1}W_i}
  \ge \frac{n+1-k}{k}\frac{\sum^{k}_{j=1} W_{\pi_j}}{\sum_{i=1}^{n+1}W_i}
  > \frac{n+1 - k}{n+1},
\end{align*}
which is a contradiction. Hence, we conclude that
for any $t \in \R$,
\begin{align*}
  \frac{\sum_{i =1}^{n+1}
  W_i \cdot \mathbf{1}\{V_i \le t\}}
{\sum_{j=1}^{n+1} W_j }
\le \frac{\sum_{i =1}^{n+1}
\mathbf{1}\{V_i \le t\}}
{n+1},
\end{align*}
and correspondingly,
\begin{align*}
\quantile\bigg(\frac{\sum_{i \in [n+1]}
W_i \cdot \delta_{V_i}}
{\sum_{j\in[n+1]} W_j}, 1-\alpha\bigg)
\ge \quantile\bigg(\frac{\sum_{i \in [n+1]}
\delta_{V_i}}{n+1}, 1-\alpha\bigg).
\end{align*}
Finally, we have
\begin{align*}
\p\big(Y_{n+1} \in \hat{C}(X_{n+1})\big) = & 
\p\Bigg(Y_{n+1} \le\quantile\bigg(\frac{\sum_{i \in \calI_{\ca}}
W_i \cdot \delta_{V_i} + W_{n+1} \cdot \delta_{V_{n+1}}}
{\sum_{i\in\calI_{\ca}} W_i + W_{n+1} }, 1-\alpha\bigg) \Bigg)\\
\ge &\p\Bigg(Y_{n+1} \le\quantile\bigg(\frac{\sum_{i \in \calI_{\ca}}
\delta_{V_i} +  \delta_{V_{n+1}}}{n+1}, 1-\alpha\bigg)\Bigg)\\
\ge & 1-\alpha.
\end{align*}

\subsection{Time complexity analysis}
\label{sec:time_comp}
We consider the following cases separately:
\begin{itemize}
\item The censoring mechanism is known and 
the threshold $c_0$ is determined a priori: the 
time complexity of the comformalized method can be decomposed into
\begin{align*}
  & \mbox{TC(conformalized survival analysis)}\\
  & =  \mbox{TC(fitting the quantile of $T$)} \\
  &\quad + \mbox{negligible cost to compute conformity scores and weighted empirical quantile}.
\end{align*}


\item The censoring mechanism is unknown and the 
threshold $c_0$ is determined a priori: the time
complexity can be decomposed into
\begin{align*}
  & \mbox{TC(conformalized survival analysis)}\\
  & = \mbox{TC(fitting the quantile of $T$)}\\
& \quad  + \mbox{TC(fitting $\mathbb{P}(C \ge c_0 ~|~X)$)}\\
& \quad + \mbox{negligible cost to compute the conformity scores and weighted empirical quantile}.
\end{align*}

\end{itemize}

\subsection{An illustrating simulation in the presence of additional censoring (Section~\ref{subsec:beyond})}
\label{sec:add_sim}
To evaluate the validity and efficiency of our method under the 
more general setting discussed in Section~\ref{subsec:beyond}, we
set up our simulation as follows: the covariate $X \sim \calU(0,4)$
and the survival time $T$ satisfies $\log T \mid X \sim \calN(\mu(X),\sigma^2(X))$,
where $\mu(x) = 2 + 0.37\cdot \sqrt{x}$ and $\sigma(x) = 1 + x/5$. 
There are censoring times. The end-of-study censoring time
$\Cend \sim \calE(0.4)$; the loss-to-follow-up censoring time
$\Closs$ is generated from the following model:
\begin{align*}
  \log \Closs \mid X \sim \calN(2 + 0.05 \log T + 0.09\cdot (X-2)(X-3)(X-4),1).
\end{align*}
Clearly, $\Closs$ is not independent of $T$ even conditional on $X$,
but Assumption~\ref{eq:Cend} is satisfied in 
this example. We then apply our method to the observable
data $(X, T \wedge \Cend \wedge \Closs, \Cend)$ with a
target level $90\%$;
the implementation details are the same as in Section~\ref{sec:sim}.

Figure~\ref{fig:sim3_coverage} plots the empirical coverage of
$T$ and $T\wedge \Closs$ of three variants of our proposed methods
and the naive CQR method. The naive CQR method is very conservative,
showing an almost $100\%$ coverage for both $T$ and $T\wedge \Closs$; 
the three variants of the proposed methods are all less conservative 
than the naive CQR since they are able to remove the consoring from 
$\Cend$---both CDR-conTree and CQR-cRF are able to achieve exact coverage
for $T\wedge \Closs$; the coverage for $T$ is higher than the target level,
where the conservativeness comes from the censoring of $\Closs$. 
Figure~\ref{fig:sim3_conditional_coverage} further plots 
the empirical conditional coverage of $T$ and $T\wedge \Closs$ as 
functions of $\mathrm{Var}(T\mid X)$. The three variants of
our method are less conservative than the naive CQR;
CDR-conTree and CQR-cRF achieves conditional coverage approximately.
Figure~\ref{fig:sim3_length} plots  the ratio 
between the LPB to the theoretical conditional quantile
of three variants of our proposed methods and the naive CQR method,
where we again see that the naive CQR provides non-informative lower
bounds.

\begin{figure}[ht]
\centering
\begin{minipage}{0.45\textwidth}
\centering
\includegraphics[width = \textwidth]{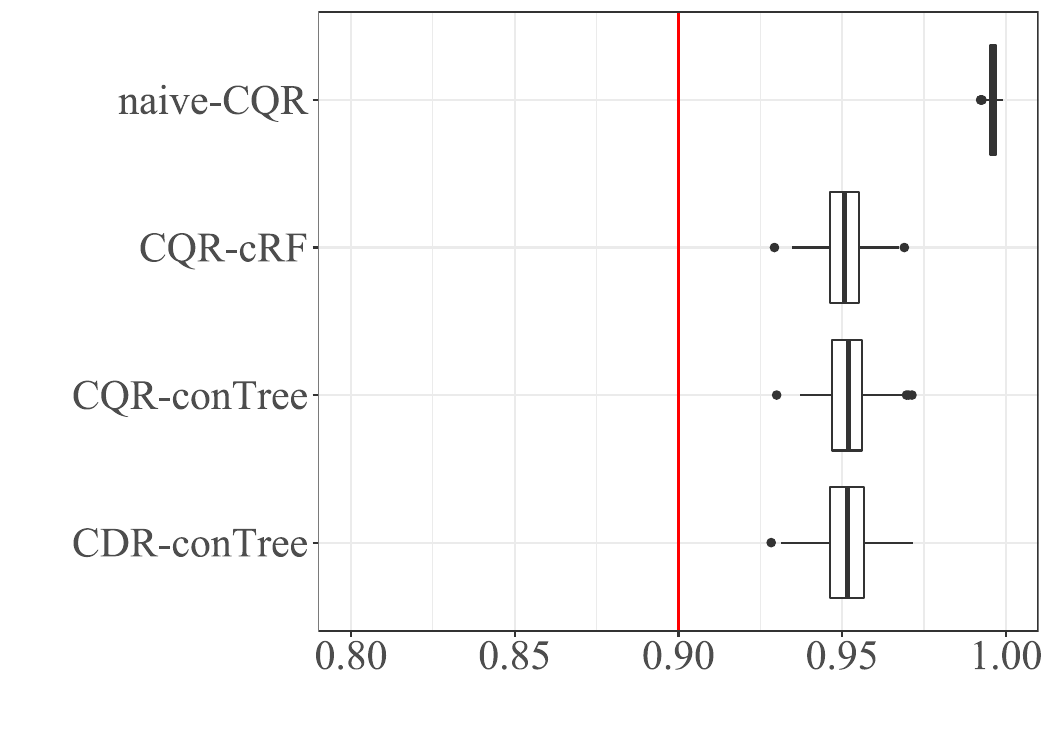}\\
(a)
\end{minipage}
\begin{minipage}{0.45\textwidth}
\centering
\includegraphics[width = \textwidth]{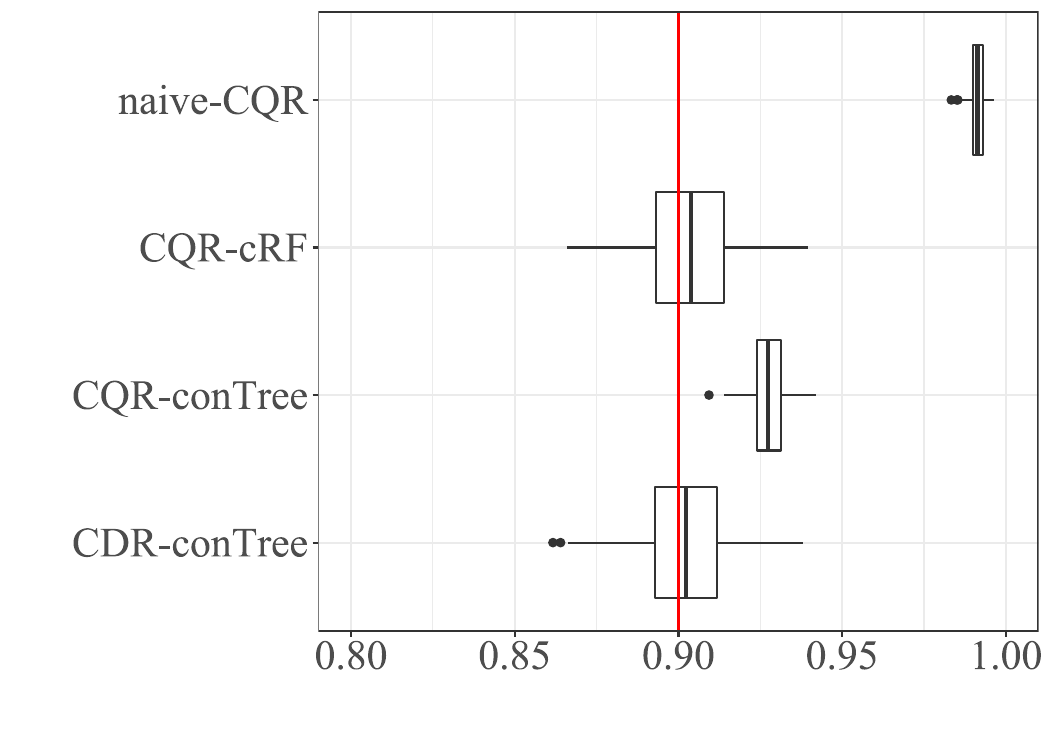}\\
(b)
\end{minipage}
\caption{Empirical $90\%$ coverage of (a) the uncensored survival
time $T$ and (b) the partially censored survival time $\Closs \wedge T$. 
The abbreviations are the same as in Figure~\ref{fig:sim1_marginal_coverage}.}.
\label{fig:sim3_coverage}
\end{figure}

\begin{figure}[ht]
\centering
\includegraphics[width = 0.9\textwidth]{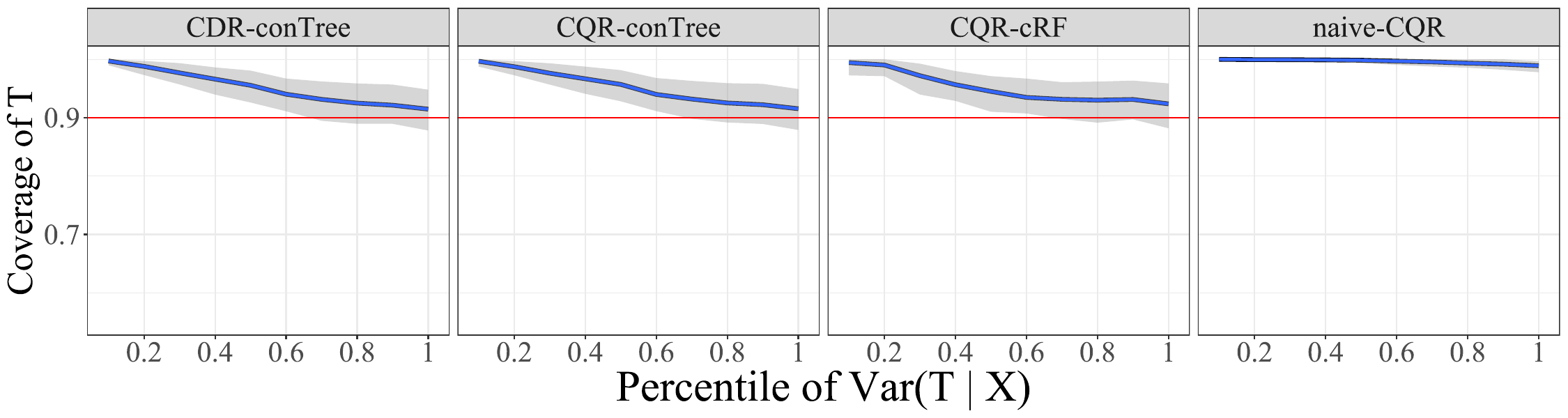}\\
(a)\\
\includegraphics[width = 0.9\textwidth]{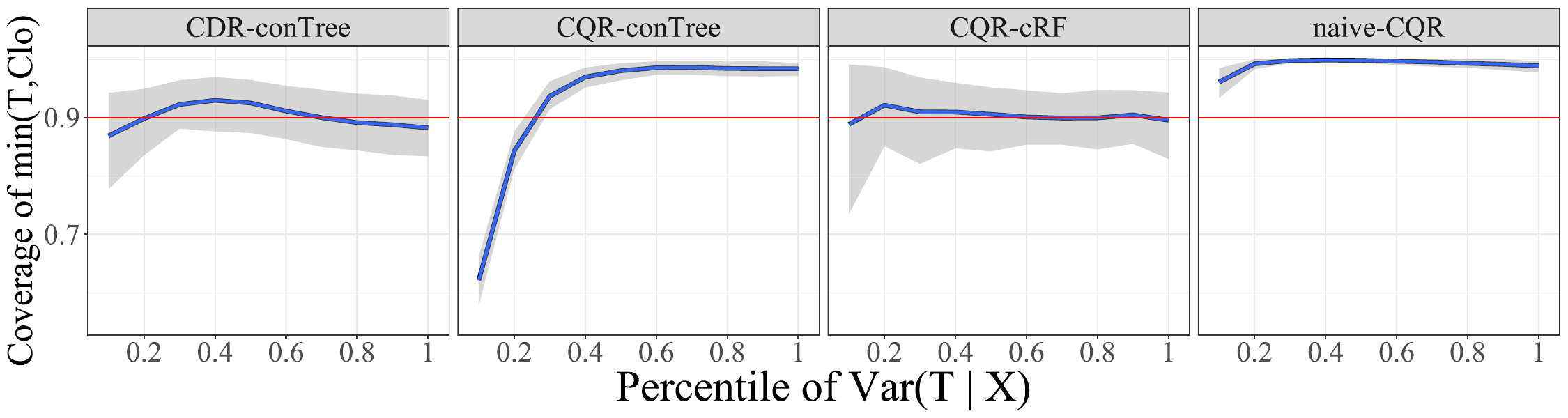}\\
(b)
\caption{Empirical $90\%$ conditional coverage of (a) the uncensored survival
  time $T$ and (b) the paritally censored survival time $T \wedge \Closs$
  as functions of $\mathrm{Var}(T\mid X)$. The
other details are the same as in Figure~\ref{fig:sim1_conditional_coverage}.}.
\label{fig:sim3_conditional_coverage}
\end{figure}

\begin{figure}[ht]
\centering
\includegraphics[width = 0.9\textwidth]{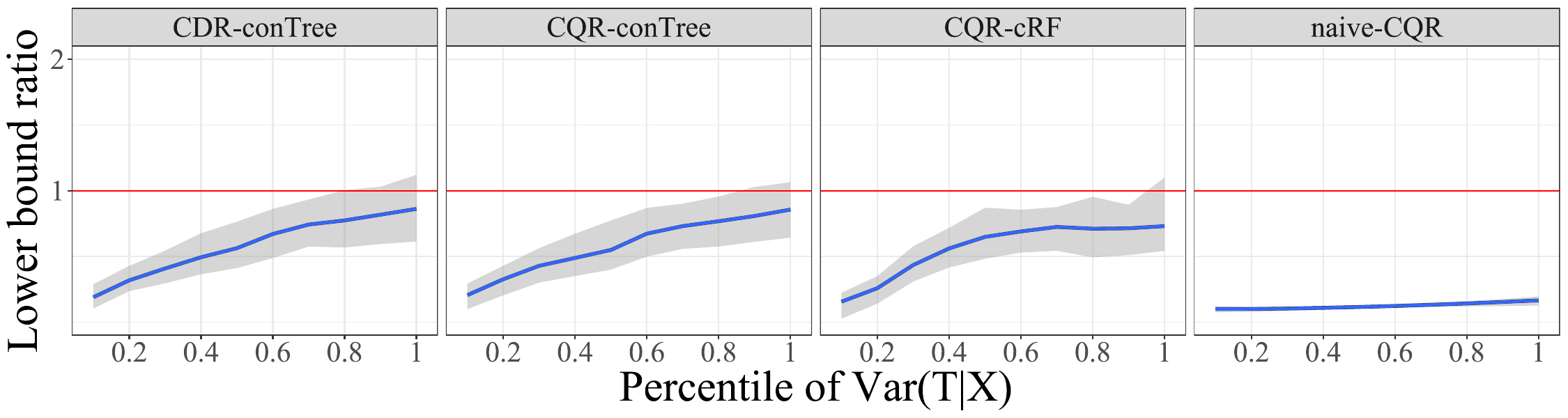}
\caption{Ratio between the LPB and the 
theoretical quantile as a function of $\mathrm{Var}(T \mid X)$.
The other details are the same as in Figure~\ref{fig:sim1_conditional_coverage}.}.
\label{fig:sim3_length}
\end{figure}
}

\end{document}